 \newtheorem{thm}{Theorem}[section]
 \newtheorem{lemma}[thm]{Lemma}
 \newtheorem{prop}[thm]{Proposition}
 \newtheorem{conj}[thm]{Conjecture}
 \theoremstyle{definition}
 \theoremstyle{remark}
 \numberwithin{equation}{section}
 \def\idtyty{{\mathchoice {\mathrm{1\mskip-4mu l}} {\mathrm{1\mskip-4mu l}} %
{\mathrm{1\mskip-4.5mu l}} {\mathrm{1\mskip-5mu l}}}}
\newcommand{\supp}{\operatorname{supp}}
\newcommand{\caA}{{\mathcal A}}
\newcommand{\caB}{{\mathcal B}}
\newcommand{\caC}{{\mathcal C}}
\newcommand{\caD}{{\mathcal D}}
\newcommand{\caE}{{\mathcal E}}
\newcommand{\caG}{{\mathcal G}}
\newcommand{\caH}{{\mathcal H}}
\newcommand{\caK}{{\mathcal K}}
\newcommand{\caM}{{\mathcal M}}
\newcommand{\caO}{{\mathcal O}}
\newcommand{\caP}{{\mathcal P}}
\newcommand{\caR}{{\mathcal R}}
\newcommand{\caS}{{\mathcal S}}
\newcommand{\caU}{{\mathcal U}}
\newcommand{\caV}{{\mathcal V}}
\newcommand{\bbC}{{\mathbb C}}
\newcommand{\bbE}{{\mathbb E}}
\newcommand{\bbN}{{\mathbb N}}
\newcommand{\bbR}{{\mathbb R}}
\newcommand{\bbS}{{\mathbb S}}
\newcommand{\bbT}{{\mathbb T}}
\newcommand{\bbZ}{{\mathbb Z}}
\newcommand{\ie}{{\it i.e.\/} }
\newcommand{\iu}{\mathrm{i}}
\newcommand{\str}{^{*}}
\newcommand{\ep}[1]{\mathrm{e}^{#1}}
\newcommand{\Tr}{\mathrm{Tr}}
\newcommand{\abs}[1]{\left\vert #1 \right\vert}
\newcommand{\braket}[2]{\left\langle #1 , #2\right\rangle}
\newcommand{\ket}[1]{\left\vert #1\right\rangle}
\newcommand{\bra}[1]{\left\langle #1\right\vert}
\newcommand{\mean}[2]{\left\langle #1\right\rangle_{#2}}
\newcommand{\be}{\begin{equation}}
\newcommand{\ee}{\end{equation}}
\newcommand{\bea}{\begin{eqnarray}}
\newcommand{\eea}{\end{eqnarray}}
\newcommand{\beann}{\begin{eqnarray*}}
\newcommand{\eeann}{\end{eqnarray*}}
\begin{document}

\renewcommand{\thefootnote}{\fnsymbol{footnote}}
\title[Gapped phases]{Product vacua with boundary states and the classification of gapped phases}

\author[S. Bachmann]{Sven Bachmann}
\address{Department of Mathematics\\
University of California, Davis\\
Davis, CA 95616, USA}
\email{svenbac@math.ucdavis.edu}

\author[B. Nachtergaele]{Bruno Nachtergaele}
\address{Department of Mathematics\\
University of California, Davis\\
Davis, CA 95616, USA}
\email{bxn@math.ucdavis.edu}

\begin{abstract}
We address the question of the classification of gapped ground states in one dimension that cannot be distinguished by a local order parameter. We introduce a family of quantum spin systems on the one-dimensional chain that have a unique gapped ground state in the thermodynamic limit that is a simple product state but which on the left and right half-infinite chains, have additional zero energy edge states. The models, which we call Product Vacua with Boundary States (PVBS), form phases that depend only on two integers corresponding to the number of edge states at each boundary. They can serve as representatives of equivalence classes of such gapped ground states phases and we show how the AKLT model and its $SO(2J+1)$-invariant generalizations fit into this classification. 
\end{abstract}

\maketitle

\date{\today }

\footnotetext[1]{Copyright \copyright\ 2012, 2013 by the authors. This
paper may be reproduced, in its entirety, for non-commercial
purposes.}


\section{Introduction and statement of the results}

In this paper we apply the notion of automorphic equivalence introduced in \cite{BMNS}
to a class of quantum spin chains with nearest neighbor interactions. The main result of 
\cite{BMNS} establishes a relation between the ground states of two short range quantum 
spin Hamiltonians of which the interactions are the end points of a smooth curve and such 
that there is a uniform positive lower bound for the spectral gap of the models along the curve.
This relation is expressed by the existence of a cocycle of automorphisms of the algebra of 
quasi-local  observables that maps the set of ground states at one end point of the curve into
the set of ground states at the other end point. Moreover, these automorphism satisfy a 
Lieb-Robinson bound \cite{LR}, which expresses a strong quasi-local property, meaning that 
up to a small correction the support of an observable grows only a bounded amount under
the application of the automorphism. The physical interpretation of this relation between
the two sets of ground states is that they represent the same zero-temperature phase; one
can reach one from the other in a finite time without going through a (quantum) phase transition.
The main tools in \cite{BMNS} are Lieb-Robinson bounds and the so-called {\em quasi-adiabatic
continuation} technique initiated by Hastings \cite{LSM_H}, and further developed in 
\cite{LSM_NS}, \cite{BravyiHastings}, and \cite{BMNS}, where it is referred to as the 
{\em spectral flow}.

What we do here is constructing such a curve of Hamiltonians connecting the AKLT model
\cite{AKLT} and related models \cite{PhysRevB.78.094404} to models with a product ground 
state (the PVBS models introduced in \cite{PhysRevB.86.035149}). The tricky part is to prove 
that the gap has a positive lower bound all along the 
curve and uniformly in the length of the chain. The standard arguments of \cite{FCS} and 
\cite{VBSHam} do not yield a sufficiently good bound near one of the end points of the curve. 
This is related to the subtle difference between so-called {\em parent} versus {\em uncle} 
Hamiltonians (see Example 2 in  \cite{VBSHam} and the general theory in \cite{Uncle}). 
One of the new contributions in this work is the extraction of a uniform lower bound from the 
so-called Martingale Method. Its standard application to finitely correlated states obviously 
applies pointwise, but fails to provide a uniform bound whenever the ground state becomes 
product.

We view the construction of explicit curves of gapped models connecting the highly symmetric 
AKLT-type models with the PVBS toy models as a step toward a qualitative theory of gapped 
ground state phases of quantum many-body systems, with the ultimate goal of understanding 
these phases and the transitions between them as well as we do phase transitions at finite 
temperature. One significant difference is the occurrence of so-called
topological phases, \ie, the situation where different ground states coexists that cannot be 
distinguished by a local observable (order parameter). This situation is in stark contrast to the 
classical case of spontaneous symmetry breaking for which an order parameter distinguishing the 
different pure phases is readily identified.

The topological degeneracy of the ground states in models where it occurs, is usually best 
understood by considering the models defined on a set of lattices, rather than just one lattice.
Members of this set of lattices can have boundaries or non-trivial topology. The discovery 
about a decade ago \cite{Kitaev20032} of two-dimensional systems with topological 
order sparked great interest, specifically in view of possible applications in quantum information 
processing. Since then, however, this interest has broadened to aim more generally at
understanding the structure of gapped ground state phases and the quantum phase transitions
between them. 

The very criterion that determines whether two models should be considered to be in the same 
ground state phase has only recently received careful attention, both from a physical point of view 
\cite{HastingsWen, ChenGuWen} and from a more mathematical one \cite{Schuch, Hastings_Houches}. For 
systems that have a spectral gap in the thermodynamic limit, we have proposed in~\cite{BMNS} the 
notion of local automorphic equivalence to define a phase and shown how it relates to other criteria. 
Simply put, two models are in the same gapped ground state phase if there exists a quasi-local 
automorphism of the algebra of observables mapping the ground state space of one of the system to 
that of the other. Here, quasi-local is meant in the strong sense that the automorphism satisfies a 
Lieb-Robinson bound with error estimates that vanish faster than any power law.

The notion of automorphic equivalence becomes distinct from unitary equivalence, and more 
useful, when one applies it to infinitely extended systems. To study topological order one then
also considers Hamiltonians not only on one lattice, but on a family of lattices with boundaries 
or non-trivial topologies and the automorphisms map sets of  ground states onto each other. 
This  implies in particular that for any given lattice the ground state degeneracies are equal 
for models that are in the same phase.

This work is concerned with one-dimensional spin chains. The different types one-dimensional
lattices we consider are (i) finite chains, (ii) left and right  half-infinite chains, (iii) and the full 
bi-infinite chain. Despite the absence of interesting topology, this one dimensional situation 
allows us to emphasize the role played by boundaries and edge ground states. This yields a 
physically relevant refinement of the consensus that all gapped one-dimensional models with a 
unique ground state in the thermodynamic limit are in the same phase~\cite{Schuch}, usually referred to as the 
Haldane phase. We will show that models that share the same and simplest state on the doubly 
infinite chain can differ by their boundary behavior on the two possible half-infinite chains. These 
examples support the view that such systems can be classified by their boundary behavior.

To make this concrete, let us consider Hamiltonians $H_\Lambda$ defined on finite subsets of 
$\bbZ$ by
\begin{equation*}
H_\Lambda = \sum_{X\subset \Lambda}\Phi(X)\,,
\end{equation*}
where, for each finite subset of $\bbZ$, $\Phi(X)$ is a self-adjoint element of the algebra of 
observables belonging to $X$, which will be denoted by $\caA_X$. The dynamics and ground 
states of the model on the infinite 
lattice $\bbZ$, and the half-infinite lattices $(-\infty,0]$ and $[1,+\infty)$, are defined on the 
algebras of quasi-local observables $\caA_\bbZ$, $\caA_{(-\infty,0]}$, and $\caA_{[1,+\infty)}$,
respectively.

Let $\{T_x\}_x$ be the group of translations on $\bbZ$ and $\{\tau_x \}_x$ be the representation of the 
translations as automorphisms of the quasi-local algebra of observables $\caA_\bbZ$. We assume
that $\Phi$ has finite range and is translation invariant, $\Phi(T_x(X))=\tau_{x}(\Phi(X))$, for all 
$X\subset\bbZ$. For the infinite systems on the lattices $\Gamma=\bbZ, (-\infty,0], [1,+\infty)$, a ground 
state is a state of the quasi-local algebra $\caA_\Gamma$ such that
\begin{equation*}
\omega(X\str[H,X])\geq 0
\end{equation*}
for all local observables $X\in\caA_\Gamma$. Let $\caS^\Gamma$ be the set of ground states on 
$\Gamma$. All states obtained as thermodynamic limits of finite-volume ground states are ground 
states of the infinite system in this sense. In order not to unnecessarily complicate things,
we will say that the system is gapped if there exists $\gamma>0$ such that the difference 
between the smallest and the next-smallest eigenvalue of $H_\Lambda$, is bounded below by 
$\gamma$,  for all finite intervals $\Lambda$. We will denote by $\caE_d$, the set of states of a 
quantum system with a $d$-dimensional
Hilbert space. E.g., $\caE_2$ is the standard Bloch sphere describe all possible states of a 
qubit, and $\caE_1$ contains a single pure state. For two convex sets of states $A$ and $B$, 
$A\cong B$ will mean isomorphism of convex sets, and for two sets of states $\caS_0$ and $
\caS_1$ of a given algebra of observables $\caA$, $\caS_0\sim \caS_1$ will mean
\begin{equation}
\caS_1 =\{\omega\circ\alpha \mid \omega\in\caS_0\},
\label{homeo}\end{equation}
with $\alpha$ a quasi-local automorphism of $\caA$, satisfying 
$$
\Vert [\alpha(A),B]\Vert \leq \Vert A\Vert \,\Vert B\Vert F(d(\supp(A), \supp(B)),
$$
for a function $F$ that decays faster than any power law, $d$ is the standard distance between
subsets if the lattices and $\supp(C)$ denotes the smallest set $X$, such that $C\in\caA_X$.
\begin{conj}\label{conj:Equivalence}
Let $\Phi_0$ and $\Phi_1$ be two translation invariant finite range interactions. Assume that for 
$i=0,1$, there exist integers $l_i$ and $r_i$ such that
\begin{equation*}
\caS^{\bbZ}_i\cong\caE_1\,,\qquad 
\caS^{(-\infty,0]}_i\cong\caE_{r_i}\,,\qquad 
\caS^{[1,+\infty)}_i\cong\caE_{l_i}\,,
\end{equation*}
Then, the following are equivalent:
\begin{enumerate}
\item $r_0=r_1, l_0=l_1$;
\item There exists a continuously differentiable family of translation invariant finite range interactions 
$\Phi(s)$, with $s\in[0,1]$, such that $\Phi(0) = \Phi_0$ and $\Phi(1) = \Phi_1$, and there is a uniform 
lower bound $\gamma >0$ for the gap of all models defined by $\Phi(s)$, $s\in [0,1]$.
\item $\caS_0^\Gamma \sim \caS_1^\Gamma$ for $\Gamma=\bbZ, (-\infty,0]$, and $[1,+\infty)$
\end{enumerate}
\end{conj}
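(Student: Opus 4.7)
The plan is to prove the cycle (ii) $\Rightarrow$ (iii) $\Rightarrow$ (i) $\Rightarrow$ (ii). For (ii) $\Rightarrow$ (iii) I would invoke the spectral flow construction of \cite{BMNS} essentially off the shelf: a $C^1$ path of finite-range interactions $\Phi(s)$ with uniform bulk gap $\gamma>0$ produces, on each lattice $\Gamma\in\{\bbZ,(-\infty,0],[1,+\infty)\}$, a cocycle $\alpha^\Gamma_s$ of automorphisms of $\caA_\Gamma$ obeying a Lieb-Robinson bound with faster-than-polynomial decay, such that $\omega\mapsto\omega\circ\alpha^\Gamma_1$ carries $\caS^\Gamma_0$ onto $\caS^\Gamma_1$. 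The only point requiring care is the extension of the quasi-adiabatic generator to the half-infinite chains, where translation invariance is broken; this follows because the hypothesis on the gap is stated in finite volume and passes to the half-infinite thermodynamic limit.

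The implication (iii) $\Rightarrow$ (i) is structural. A quasi-local automorphism $\alpha$ is in particular linear, so $\omega\mapsto\omega\circ\alpha$ is an affine bijection between the two ground state simplices on a given $\Gamma$; affine isomorphism of $\caE_k$ and $\caE_{k'}$ forces $k=k'$ by counting extreme points. Applied to the two half-infinite chains this yields $r_0=r_1$ and $l_0=l_1$.

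The substantive direction is (i) $\Rightarrow$ (ii), where my strategy is to designate the PVBS Hamiltonian $H^{\mathrm{PVBS}}_{l,r}$ of \cite{PhysRevB.86.035149} as the canonical representative of each phase $(l,r)$. Given $\Phi_0,\Phi_1$ with common indices $(l,r)$, it suffices to construct for each of them a smooth translation invariant path of finite-range interactions with uniform gap connecting it to $H^{\mathrm{PVBS}}_{l,r}$, since concatenation then yields the full path. Such a path I would build by first realizing the ground state of $\Phi_i$ as a translation invariant matrix product state, then interpolating the defining tensors along a smooth curve to those producing the PVBS vacuum, and finally taking $\Phi(s)$ to be the canonical nearest-neighbor parent Hamiltonian of the interpolated MPS, suitably normalized at the endpoints.

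The main obstacle, and the technical core of the argument, is to prove a uniform lower bound on the gap of $\Phi(s)$, both in $s$ and in the length of the chain. Pointwise, a positive gap follows from the standard finitely correlated state analysis of \cite{FCS, VBSHam} via a spectral gap of the associated transfer operator; however this estimate degenerates as $s\to 1$, precisely because the bulk state becomes a product and the parent Hamiltonian crosses over into an uncle Hamiltonian in the sense of \cite{Uncle}. The plan to circumvent this is to apply the Martingale Method with block sizes chosen as a function of the path parameter, extracting from PVBS a uniform gap for block Hamiltonians together with an overlap estimate between ground state projections on overlapping blocks that stays strictly below one throughout the path, including the product limit. This last estimate, which replaces the pointwise transfer-operator bound, is the new input required beyond \cite{FCS, VBSHam} and is where I expect the bulk of the work to lie.
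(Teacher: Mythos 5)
The statement you are proving is labeled a \emph{conjecture} in the paper, and the paper itself only claims two of the three implications: (iii) $\Rightarrow$ (i) is elementary (the quasi-local automorphism induces an isomorphism of the ground state spaces, so the integers must match), and (ii) $\Rightarrow$ (iii) is the spectral flow result of \cite{BMNS}. The remaining direction (i) $\Rightarrow$ (ii) is explicitly stated to be open; the paper establishes it only within the restricted PVBS class (Theorem~\ref{thm:PVBS_phases}) and, by a long model-specific argument, for the AKLT chain. Your proposal treats exactly this direction as if it could be carried out in general, and this is where the genuine gap lies. First, the hypotheses of the conjecture say nothing about the ground states being finitely correlated: a translation invariant finite-range gapped interaction whose ground state sets are isomorphic to $\caE_1$, $\caE_r$, $\caE_l$ need not have an exact matrix product ground state, so your opening step ``realize the ground state of $\Phi_i$ as a translation invariant MPS'' is unjustified. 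Second, even granting an MPS structure, the scheme ``interpolate the tensors and take the parent Hamiltonian of the interpolated MPS'' is precisely what the paper warns against: the pointwise parent-Hamiltonian construction is not continuous in the tensors, connecting MPS of different bond dimensions forces non-minimal representations where the transfer-operator gap criterion is not under control, and at the product endpoint one runs into the parent-versus-uncle ambiguity of \cite{Uncle}. Third, the uniform gap bound along the path --- which you correctly identify as the core difficulty and propose to obtain from a martingale-method overlap estimate uniform up to the product limit --- is exactly the hard content; in the paper this estimate is proved only for the explicit PVBS-to-PVBS paths and for one carefully chosen AKLT-to-PVBS path (Lemma~\ref{lma:Commutation}), with technical hypotheses on the interpolating functions, and no general mechanism is known. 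Announcing that ``this is where I expect the bulk of the work to lie'' is a plan, not a proof, so the proposal does not establish the conjecture.

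Two smaller points. In (iii) $\Rightarrow$ (i), the sets $\caE_k$ are state spaces of full matrix algebras, not simplices, and for $k,k'\geq 2$ their extreme boundaries have the same cardinality; ``counting extreme points'' does not distinguish them. The correct argument (as in the paper) is that the affine homeomorphism forces equality of $k$ via the topological or linear dimension of $\caE_k$. In (ii) $\Rightarrow$ (iii), the extension of the spectral flow to the half-infinite chains is indeed handled in \cite{BMNS}, but it rests on the Lieb-Robinson locality of the flow rather than on the finite-volume gap hypothesis ``passing to the half-infinite limit''; as stated, that sentence glosses over the actual mechanism.
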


Clearly, (iii) implies (i), since the automorphism $\alpha$ in (\ref{homeo}) induces 
a homeomorphism between two sets of states of the form $\caE_k$, regarded as
topological manifolds. That (ii) implies (iii) is proved constructively in~\cite{BMNS}. 
The remaining implication is conjectural. The condition that the sets of ground states
$\caS^\Gamma$ are isomorphic to $\caE_d$ for some positive integer $d$, is equivalent
to assuming that they are the state spaces of the full matrix algebra $M_d$ of complex
$d\times d$ matrices. It seems plausible that the assumption that $\caS^{\bbZ}$ is
a single point, which means a unique ground state in the bulk, implies this condition.
In this paper, we limit ourselves to this situation, but it is straightforward to generalize the 
conjecture to cases where the sets $\caS^\Gamma$ are isomorphic to the state spaces 
of a direct sum of matrix algebras. E.g., if the bulk ground states is two-fold degenerate,
\ie, two pure ground states in the bulk, $\caS^{\bbZ}$  is isomorphic to the segment $[0,1]$,
which is the state space of $\bbC \oplus\bbC$. We then expect that $\caS^{(-\infty,0]}$ and
$\caS^{[1,+\infty)}$, if finite-dimensional, will be isomorphic to the state spaces of the direct 
sum of two square matrix algebras. This is the situation needed to describe the standard
(non-topological) ground state phase transitions.

The duality between `topological order' in the bulk and the existence of non trivial edge states
is expected to extend to two-dimensional systems. See for 
example~\cite{PhysRevB.83.245134,Schuch2010} for a detailed study of a class of examples. 
In fact, the existence of gapless edge modes in a system that is gapped in the bulk was 
originally proposed as a way to measure topological order in fractional quantum Hall 
systems~\cite{We95}. Further, the equality of bulk and edge conductances in the two-dimensional
quantum Hall effect~\cite{ElGr02}, or the bulk-edge correspondence for topological 
insulators~\cite{GrPo} can be seen as instances of the same principle.

We start this paper by presenting a detailed study of the so-called PVBS models we introduced in  
\cite{PhysRevB.86.035149}. These are a continuous family of translation invariant Hamiltonians 
with nearest neighbor interaction and finitely correlated ground states. The spectral gap does not 
close in the thermodynamic limit except when specified parameters take on a critical value (which
we have normalized to be 1). On the infinite chain, they share a unique product ground state. 
However, they have degenerate ground states on half-infinite chains which have a simple 
interpretation. There are $n$ distinguishable particles that can be added to the vacuum without 
raising the energy if no other particle of the same type is already present and if they can bind to a 
boundary. Among the $n$ types of particles,  $n_L$ can bind to a left edge while $n_R=n-n_L$ bind 
to a right edge, yielding a degeneracy of $2^n$ on finite chains, and $2^{n_L}$, resp $2^{n_R}$, on 
the two possible half-infinite chains. We call these models Product Vacua with Boundary States. We 
illustrate the classification of phases by proving that two PVBS models belong to the same phase if 
and only if they share the same number of left- and right- binding particles. The simplicity of the 
models make them good candidates as representatives of the gapped phases and  it follows from 
our results that Conjecture \ref{conj:Equivalence} holds within this restricted class of models.

Next, we show that the well-known spin-$1$ antiferromagnetic chain known as the AKLT 
model~\cite{AKLT} fits in the PVBS classification, namely that it belongs to the PVBS phase with 
one of each of the two types of particles ($n_L=n_R=1$). These proofs rely on two ingredients: 
First that a local automorphism can be 
constructed from a smooth path of gapped Hamiltonians interpolating between the two models in 
question. And second on a criterion for models with matrix product ground states (MPS) to have a 
spectral gap~\cite{VBSHam}. Despite the fact that much can be said by simply considering the 
algebraic deformation of the matrices generating the ground state spaces as in~\cite{Schuch}, we 
shall emphasize that this may not be sufficient. First of all, the same spaces can be ground state 
spaces of both gapped and gapless frustration-free Hamiltonians, see~\cite{Uncle}. Secondly, the 
pointwise construction of Hamiltonians from a smooth family of matrix product maps is not a 
continuous operation. More fundamentally, smoothly connecting MPS described by matrices of 
different dimensions requires the use of non minimal representations in which the spectral gap is in 
general not under good control. The central objects of the smooth deformation must be the ground 
state spaces and the Hamiltonians themselves.

Already from the few examples that we study here in detail, it becomes clear that the classification of 
gapped phases that we obtain is finer than those previously pursued in the literature (see, e.g.,
\cite{PhysRevB.83.035107,PhysRevB.84.235128}.) By this we do not mean that there are 
more distinct gapped phases in the bulk than previously claimed, but that our finer analysis provides 
more information about the ground state phase diagrams. To illustrate this, consider two 
models defined by finite-range interactions that have a unique ground state in the bulk and a
non-vanishing spectral gap above them. If the half-infinite chains with these interactions 
have inequivalent ground state spaces in the sense of above, any smooth curve of finite-range
interactions connecting the two given models will contain a point where the spectral gap closes.
This does not contradict the well-established fact that each model can be smoothly connected
to a model with a simple product ground state. The interactions of the models with the unique 
product ground state will however not be the same and will, in particular, have different edge state
behavior.

One possible family of models generalizing the AKLT model carries the fundamental representation 
of $SO(d)$ and its nearest-neighbor interaction is $SO(d)$-invariant~\cite{PhysRevB.78.094404}. In 
the last section of this article, we propose a smooth path of Hamiltonians between these $SO(d)$ 
models for odd $d=2J+1$ and the PVBS models with $J$ particles bound to each edge. Here again, 
the starting point is an algebraic deformation, namely of the Clifford algebra underlying the $SO(d)$ 
invariance of the models.

Other generalizations in one dimension include PVBS models allowing more than one particle of 
each type, generating arbitrary degeneracies at each edge with a unique bulk state, or multiple bulk 
states, namely systems with a broken discrete symmetry. Local continuous symmetries can also be 
included by identifying a subset of particles. Of course, adding symmetries enriches the classification 
of gapped phases and the possibilities of transition between them~\cite{PhysRevB.83.035107, 
PhysRevB.84.235128}.

The class of PVBS models is rich enough to cover a large range of behaviors of one-dimensional 
quantum spin systems. Moreover, their simplicity allows for a complete understanding of their ground 
state properties, and they appear amenable for a general study of the transitions between the ground 
state phases they represent: A phase transition arises in a PVBS model whenever one of the edge 
particles unbinds from one boundary and binds to the other. At the critical point the particle becomes 
a massless excitation of the vacuum and the gap closes.


\section{Product vacua with boundary states} \label{Sec:PVBS}


We consider a quantum spin system on the chain $\bbZ$. Any point $x\in\bbZ$ carries a quantum system with Hilbert space $\bbC^d$, for some $d\in\bbN$, equipped with an orthogonal basis $\{e_0,\ldots,e_n\}$, where $n=d-1$. We shall denote the finite subset $[a,b]\cap\bbZ$ simply by $[a,b]$, for $a<b\in\bbZ$. The Hilbert space $\caH_N$ of a finite chain of length $N= b-a+1$ is given by
\begin{equation*}
\caH_N = \otimes_{x=a}^b\bbC^d\,.
\end{equation*}
We first define a nearest-neighbor interaction $h \in \caH_2$ as follows. For $0\leq i<j\leq n$, let
\begin{align}
\phi^{ij}&= e_i\otimes  e_j-\ep{\iu\theta_{ij}}\lambda_i^{-1} \lambda_j \, e_j\otimes  e_i\,, \label{phiij}\\
\phi^{ii}&= e_i\otimes  e_i \label{phiii}\,,
\end{align}
where $\theta_{ij}\in\mathbb{R}$, $\theta_{ij}=-\theta_{ji}$, and $0\neq \lambda_i\in \mathbb{C}$. By redefining the phases $\theta_{ij}$, we can assume $\lambda_i >0$ without loss of generality. The interaction $h$ is defined as the orthogonal projection on the space spanned by these vectors. 

Given a fixed set of parameters $\Lambda=\{\lambda_i:0\leq i\leq n\}$ and phases $\Theta=\{\theta_{ij}:0\leq i<j\leq n\}$, the PVBS Hamiltonian on $[a,b]$ is given by
\begin{equation*}
H_{[a,b]} = \sum_{x=a}^{b-1}h_{x,x+1}
\end{equation*}
where $h_{x,x+1}$ is a copy of $h$ acting on the neighboring sites $x$ and $x+1$. We consider four different realizations of the PVBS models, namely when defined on a finite chain $[a,b]$, on either of the two half infinite chains with one edge, and on the infinite chain $\bbZ$. Of course, the various infinite limits involved here must be taken carefully.

Let $\gamma(\Lambda, \Theta, N)$ be the spectral gap above the ground state energy of the PVBS model on a chain of length $N$, with parameters $\Lambda$ and $\Theta$. In the sequel, we shall set $\lambda_0=1$. The special role played by $\lambda_0$ will be discussed later.
\begin{prop}[Gap bounds] \label{Prop:PVBS_Gap}
The spectral gap for the Hamiltonian $H_{[a,b]}$ is bounded above by
\begin{equation}\label{MassGap}
\gamma(\Lambda, \Theta, N) \leq \min\left\{\left(1-\frac{2}{\lambda_i^{-1} + \lambda_i}\right)\left(1 + \frac{C_i-1}{N-C_i}\right):1\leq i \leq n\right\}\,,
\end{equation}
for any $N=b-a+1$, where $C_i = (1+\lambda_i)/\abs{1-\lambda_i}$. Moreover, if $\lambda_i\neq 1$ for all $i\neq0$, then
\begin{equation*}
\gamma:=\inf_{N\in\bbZ}\gamma(\Lambda, \Theta, N)>0 \,.
\end{equation*}
\end{prop}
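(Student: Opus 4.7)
The proposition contains two separate claims which I would prove by distinct methods: the explicit upper bound \eq{MassGap} by a variational argument with a one-particle trial state, and the uniform positive lower bound by an application of the Martingale Method of \cite{FCS}.

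For the upper bound, observe that each $\phi^{ij}$ (including $\phi^{ii}$) has a definite label content, so the number operators $N_k = \sum_x \ketbra{e_k}_x$ commute with $H_{[a,b]}$. It therefore suffices, for each $i\in\{1,\ldots,n\}$, to exhibit a trial vector in the sector containing exactly one $e_i$ and all other sites in the vacuum $e_0$. The restriction of $H_{[a,b]}$ to that $N$-dimensional sector is a tridiagonal nearest-neighbor operator whose matrix elements are dictated by the amplitudes appearing in $\phi^{0i}$; it admits a single zero-energy ground state, namely the edge-bound mode with exponential profile $\lambda_i^{\pm x}$. Evaluating the Rayleigh quotient on an explicit first excited state orthogonal to this mode yields \eq{MassGap}: the bulk factor $1 - 2/(\lambda_i^{-1}+\lambda_i) = (1-\lambda_i)^2/(1+\lambda_i^2)$ emerges as the thermodynamic gap of the one-particle problem, and $C_i = (1+\lambda_i)/|1-\lambda_i|$ as the characteristic inverse localization length of the boundary mode.

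For the uniform lower bound I would invoke the Martingale Method, which is well adapted to this frustration-free setting since $h$ is a projection and the ground-state projectors $G_{[a,b]}$ on nested intervals admit explicit expressions from the matrix-product-state structure of the PVBS ground states. The method reduces the gap estimate to controlling overlaps of the form $E_{n,\ell} = G_{[a,n+\ell]}(G_{[n,b]} - G_{[a,b]})$: once $\|E_{n,\ell}\|\leq\epsilon<1$ uniformly in $N$ for some fixed $\ell$, one obtains $\gamma\geq C(1-\epsilon)^2$. Because the transfer operator associated with the PVBS finitely correlated state has a strict spectral gap as soon as no $\lambda_i$ equals $1$, the norms $\|E_{n,\ell}\|$ decay exponentially in $\ell$ and the desired inequality follows.

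The main obstacle is precisely the one flagged in the introduction: the Martingale estimate degenerates when the ground state approaches a pure product state, which happens here in the limits $\lambda_i\to 0$ or $\lambda_i\to\infty$. The technically delicate step will therefore be to verify that $\|E_{n,\ell}\|$ stays strictly below $1$ \emph{uniformly in $\Lambda$} on compact subsets of the non-critical region $\{\lambda_i\neq 1\}$, and that the resulting constant is consistent with the matching upper bound from the first part. A sector-by-sector decomposition along the conserved charges $N_k$ should make this tractable, since within each fixed particle-number sector the ground space is finite-dimensional and the product-state degeneracy responsible for the failure of the standard argument is absent.
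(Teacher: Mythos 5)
Your upper bound is essentially the paper's argument: the interaction conserves the particle content, and in the sector with exactly one particle of type $i$ one evaluates the projected Rayleigh quotient $\braket{w}{H_{[a,b]}w}/\Vert(1-G_{[a,b]})w\Vert^2$ on a plane-wave trial vector $w(x)=\ep{\iu kx}$ with $k=\theta_{i0}$; note that you do not need (and do not explicitly have) the ``first excited state'' --- subtracting the projection onto the one-particle ground state \eq{iGS} in the denominator is enough, and this yields exactly \eq{MassGap}. This half of the proposal is fine.

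The lower bound, however, has a genuine gap at the decisive step. You assert that, since the transfer operator has a strict spectral gap whenever no $\lambda_i$ equals $1$, the martingale overlaps decay exponentially ``and the desired inequality follows.'' That implication is precisely what fails for PVBS, and it is the point the paper singles out as its new contribution. The standard route from a transfer-operator gap to the martingale estimate requires a minimal (injective) MPS representation, i.e.\ a faithful invariant state $\rho$ and invertible $e$. Here the bulk ground state is a product state for \emph{every} admissible choice of $\Lambda$ --- not only in the limits $\lambda_i\to 0,\infty$, as your diagnosis suggests --- so the minimal representation is one-dimensional, and the $2^n$-dimensional representation needed to see the edge states is necessarily non-minimal: the top right eigenvector of $\widehat{\bbE}$ is not strictly positive when some $\lambda_i>1$, and the left one is not faithful when some $\lambda_j<1$. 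Concretely, if you expand a normalized $\Phi\in\caG_{[1,N]}\cap\caG_{[1,N+1]}^\perp$ in the basis $\psi(S)$, the naive coefficient estimates leave the terms with $P\cap\Lambda_R=\emptyset=P'\cap\Lambda_L$ of order one, uniformly in $k$ and $N$; no transfer-operator gap makes them small. The paper controls exactly these terms by exploiting the constraint $\Phi\perp\caG_{[1,N+1]}$, which forces the offending linear combination of overlaps to be $\caO(\varepsilon_L^{N})$ (relation \eq{PVBS:NimitOrth}), and only then does $g_{k,N}^2\leq K\varepsilon^{k-1}<1/k$ follow. Your fallback (a sector-by-sector martingale argument, uniform in $\Lambda$ on compacts) is neither carried out nor to the point: the proposition is for fixed $\Lambda$ with $\lambda_i\neq 1$, and what is missing is not compactness in parameter space but a substitute for the exponential smallness that minimality would otherwise provide.
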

By standard results~\cite{FCS}, the second statement implies a spectral gap for the Hamiltonian in the GNS representation 
of the quasi-local algebra induced by the ground states in the thermodynamic limit. The notion of a ground state phase of 
quantum spin systems was discussed in~\cite{BMNS}, where automorphic equivalence between ground state spaces was 
introduced. Given a differentiable curve of local Hamiltonians $H(s)$, for $s\in[0,1]$, such that there is a lower bound for the 
spectral gap which is uniform in the size of the system and in the parameter $s$, then there exists a quasi-local spectral flow 
$\alpha_{s,s'}$ of the algebra of observables mapping the ground state space of $H(s')$ to that of $H(s)$. For finite chains it 
is given by unitary conjugation, and can be extended to infinite systems using its good locality properties. 

In the present case, let $\Lambda_L := \{i:\lambda_i<1\}$, $\Lambda_R := \{i:\lambda_i>1\}$, and
\begin{equation*}
n_L:=\abs{\Lambda_L}\,,\qquad n_R:=\abs{\Lambda_R}\,.
\end{equation*}
Note that the results about the ground states of the PVBS models obtained in \cite{PhysRevB.86.035149} (also described 
in detail below) imply that the PVBS models have all the properties of the preamble of Conjecture~\ref{conj:Equivalence}. 
In this section, we say that two quantum spin systems are equivalent if they satisfy {\em all} three conditions in Conjecture~\ref{conj:Equivalence}. The results proved in this section imply, among other things, that these three conditions are indeed 
equivalent as conjectured.
\begin{thm}[PVBS classes] \label{thm:PVBS_phases}
Two PVBS models with respective $n_{L,R}^{1,2}$ are equivalent if and only if
\begin{equation*}
n_L^1 = n_L^2\qquad\text{and}\qquad n_R^1 = n_R^2\,.
\end{equation*}
\end{thm}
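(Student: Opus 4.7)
The plan is to prove the two implications of the ``if and only if'' separately, reducing the hard direction to the construction of a smooth curve of PVBS interactions together with a uniform spectral gap bound, after which the BMNS spectral flow supplies the quasi-local automorphism.

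\emph{Direction} ($\Rightarrow$). Assume the two PVBS models are equivalent in the sense of satisfying all three conditions of Conjecture~\ref{conj:Equivalence}. In particular condition (i) holds, so $r_0 = r_1$ and $l_0 = l_1$. For PVBS models, the edge state analysis of \cite{PhysRevB.86.035149} (recalled in the preamble of this section) identifies $\caS^{(-\infty,0]} \cong \caE_{2^{n_R}}$ and $\caS^{[1,+\infty)} \cong \caE_{2^{n_L}}$. Matching dimensions therefore force $n_R^1 = n_R^2$ and $n_L^1 = n_L^2$.

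\emph{Direction} ($\Leftarrow$). The strategy is to verify condition (ii) of Conjecture~\ref{conj:Equivalence} by exhibiting an explicit interpolating family; conditions (iii) and (i) then follow from \cite{BMNS} and the argument recorded after the conjecture. Since the counts match, the on-site dimension $d = n_L + n_R + 1$ coincides for the two models; after conjugating by the on-site permutation that sends $\Lambda_L^1 \to \Lambda_L^2$ and $\Lambda_R^1 \to \Lambda_R^2$ (a local unitary, hence compatible with the equivalence) we may assume the L- and R-type index sets themselves agree. On this common index set, I would define the curve
\begin{equation*}
\lambda_i(s) = (\lambda_i^1)^{1-s}(\lambda_i^2)^{s}, \qquad \theta_{ij}(s) = (1-s)\theta_{ij}^1 + s\theta_{ij}^2, \qquad s \in [0,1].
\end{equation*}
Log-linear interpolation keeps each $\lambda_i(s)$ on the same side of $1$ as its endpoints, so $\lambda_i(s) \neq 1$ for all $i \neq 0$ and all $s$, and the vectors $\phi^{ij}(s)$ in \eqref{phiij}--\eqref{phiii} depend smoothly on $s$ with constant linear span dimension. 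Consequently the nearest-neighbor projection $h(s)$, and hence the interaction $\Phi(s)$, is a continuously differentiable family.

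The crux is then a lower bound on the spectral gap $\gamma(\Lambda(s), \Theta(s), N)$ that is uniform in both $N$ and $s \in [0,1]$. Proposition~\ref{Prop:PVBS_Gap} asserts pointwise positivity of $\gamma$ under the condition $\lambda_i \neq 1$, and the refined Martingale-Method bound advertised in the introduction must give an estimate depending only on $\min_i |1 - \lambda_i|$ and $\max_i(\lambda_i + \lambda_i^{-1})$; both quantities are controlled on the compact curve by continuity, yielding the desired uniform $\gamma > 0$. With this in hand, the quasi-adiabatic spectral flow of \cite{BMNS} applied on each of $\bbZ$, $(-\infty,0]$ and $[1,+\infty)$ produces a quasi-local automorphism $\alpha$ satisfying the Lieb-Robinson estimate in \eqref{homeo}, establishing (iii); condition (i) is then automatic.

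\emph{Main obstacle.} The delicate point is uniformity of the gap. A naive pointwise application of the Martingale Method to the finitely correlated states of the PVBS family degenerates at the boundary of the parameter region where the ground state becomes a pure product (a ``parent vs.\ uncle'' ambiguity in the terminology of \cite{VBSHam, Uncle}), so the constants entering the standard bound blow up precisely where one might naively hope the problem is easiest. Controlling this uniformly requires working directly with the renormalized transfer operator of the PVBS ground state and extracting its gap in a manner that remains robust as the $\lambda_i$ approach $0$ or $\infty$; this is the novel ingredient that the proof of Proposition~\ref{Prop:PVBS_Gap} (and its extension along the curve) must supply.
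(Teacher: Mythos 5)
Your proof is correct and follows essentially the same route as the paper: necessity from the dimension count of the half-infinite-chain ground state spaces, and sufficiency by relabeling the particle indices and interpolating $(\lambda_i,\theta_{ij})$ while keeping each $\lambda_i(s)$ on a fixed side of $1$, so that Proposition~\ref{Prop:PVBS_Gap} (with continuity and compactness in $s$) gives a uniform gap and \cite{BMNS} supplies the quasi-local automorphism. The only cosmetic differences are that the paper folds your preliminary on-site permutation into the single interpolating path via a smooth unitary family $u(s)\in SU(n+1)$, conjugating by $U(s)=\otimes_x u(s)$, and uses linear rather than log-linear interpolation of the $\lambda_i$.
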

Before we enter into the proof of this theorem, let us briefly discuss its importance. The family of PVBS models is parametrized 
by the sets $\Lambda$ and $\Theta$. Theorem~\ref{thm:PVBS_phases} distinguishes among them a countable set of equivalence classes characterized by a simple and explicit criterion. In fact, all these translation invariant models have the same, unique, gapped ground state in the thermodynamic limit. However, they do not belong to the same ground state phase. As we will discuss in more detail later, their properties differ by their boundary behavior which is precisely classified by the two integers $n_L$ and $n_R$.

Automorphic equivalence of the models -- or equivalently their belonging to the same ground state phase -- is obtained by constructing a smooth family of gapped Hamiltonians interpolating between the two models under consideration, see~\cite{BMNS}. Reciprocally, the theorem implies that along any curve relating two models belonging to two phases indexed by different $(n_L, n_R)$, the gap must close and there is a quantum phase transition. 

The rest of the section is organized as follows. First, we construct the ground states of the PVBS models on finite chains. These models are frustration free and finitely correlated, so that it will be convenient to use their representation as matrix product states (MPS). The first part of Proposition~\ref{Prop:PVBS_Gap} follows by a variational argument. Its second statement uses the martingale method of~\cite{VBSHam} and a crucial geometric estimate associated with ground state projections. We construct gapped paths of Hamiltonians within the set of PVBS models and Theorem~\ref{thm:PVBS_phases} follows from the gap estimates of Proposition~\ref{Prop:PVBS_Gap}.

\subsection{Matrix product states (MPS)}\label{sub:MPS}

We briefly review the basic construction and properties of purely generated matrix product states, also known as finitely correlated states, see~\cite{FCS}. Let $e\in\caM_k$ be a positive element, $\rho\in\caM_k\str$ be a state on $\caM_k$, and let $\bbE:\caM_d\otimes\caM_k\rightarrow\caM_k$ be a completely positive map such that
\begin{equation*}
\bbE(\idtyty\otimes e) = e\,,\qquad \rho (\bbE(\idtyty\otimes B) ) = \rho(B)\,,
\end{equation*}
for all $B\in\caM_k$. The matrix product state $\omega$ on the chain algebra $\caA$ generated by $(\bbE,e,\rho)$ is the extension of
\begin{equation*}
\omega(A_a\otimes\cdots\otimes A_b) := \rho(e)^{-1}\rho\left(\bbE_{A_a}\circ\cdots\circ\bbE_{A_b}(e)\right)\,,
\end{equation*}
where $A_i\in\caM_d$ and we have denoted $\bbE_A(B) = \bbE(A\otimes B)$. If $\bbE$ is a pure map, then there exists $V:\bbC^k\rightarrow\bbC^d\otimes\bbC^k$ such that $\bbE(A\otimes B) = V\str (A\otimes B) V$.

Now, the breaking or not of translation symmetry can be expressed as a spectral property of the transfer operator $\widehat{\bbE}:=\bbE_\idtyty$. The state $\omega$ cannot be decomposed into periodic components iff $1$ is an isolated, non degenerate eigenvalue of $\widehat{\bbE}$ and all other eigenvalues have a strictly smaller modulus -- the peripheral spectrum is trivial. This implies the following exponential convergence
\begin{equation*}
\lim_{N\to\infty} \widehat{\bbE}^{(N)}(B) = \widehat{\bbE}^{(\infty)}(B) := \rho(B) e
\end{equation*}
and since
\begin{equation*}
\omega(A_a\otimes\idtyty\otimes\cdots\otimes\idtyty\otimes A_b) = \rho(e)^{-1}\rho\left(\bbE_{A_a}\circ\widehat{\bbE}^{(N-2)}\circ\bbE_{A_b}(e)\right)\,,
\end{equation*}
the exponential decay of correlations in the ground state. We shall say that the representation of $\omega$ is minimal if both $\rho$ and $e$ are invertible.

The state $\omega$ defined on the infinite chain is the thermodynamic limit of `valence bond vectors' defined as follows. The map $V$ can be described by the set $\{v_i\}_{i=1}^d$ of $k\times k$ matrices such that
\begin{equation*}
V\chi = \sum_{i=1}^d e_i\otimes v_i\chi\,.
\end{equation*}
The range $\caG_N$ of the map $\Gamma_N:\caM_k\rightarrow (\bbC^d)^{\otimes N}$ defined by
\begin{equation}
\Gamma_N(B) := \sum_{i_1,\ldots,i_N=1}^d \Tr (B v_{i_N}\cdots v_{i_1})\, e_{i_1}\otimes\cdots\otimes  e_{i_N}\,,
\label{mps}
\end{equation}
is the set of matrix product states on a finite chain of length $N$. If $\{\chi_p\}_{p=1}^k$ is an orthonormal basis of $\caM_k$, a simple calculation yields
\begin{align}\label{ScalarProdGamma}
\braket{\Gamma_N(B_L)}{A\Gamma_N(B_R)} &= \sum_{i_1\ldots i_N=0}^{n}\sum_{j_1\ldots j_N=0}^{n}\overline{\Tr(B_L v_{i_N}\cdots v_{i_1})}\Tr(B_R v_{j_N}\cdots v_{j_1}) \braket{e_{i_1}\cdots e_{i_N}}{Ae_{j_1}\cdots e_{j_N}}\nonumber \\
&= \sum_{p,q=1}^k\braket{\chi_p}{\bbE_A^{(N)}\left(B_L\str \ket{\chi_p}\bra{\chi_q}B_R\right)\chi_q}\,.
\end{align}

In a minimal representation, the spaces $\caG_N$ always satisfy a crucial intersection property. The following lemma presents a sufficient condition for the property to hold which does not require faithfulness of the state $\rho$.

We consider the case where the matrices of the MPS are generators of a quadratic algebra. Let $\caU:=\mathrm{span}\{g_i:1\leq i\leq d\}$ and $\oplus_{\alpha=0}^\infty \caU^{\otimes \alpha}$ the tensor algebra generated by $\caU$. Given a multiindex $I=(i_1,\ldots,i_r)$ where $1\leq i_j\leq d$, $g^I$ denotes the monomial $g^I = g_{i_1}\cdots g_{i_r}\in \caU^{\otimes r}$. We will also denote $\bar I =(i_r,\ldots,i_1)$. Let $\caR\subset\caU^{\otimes 2}$ be the set of quadratic relations. Choosing a basis of $\caU^{\otimes 2} / \caR$ is equivalent to choosing a set $S^{(2)}\subset[1,d]^2$ such that
\begin{equation*}
g_{i}g_j = \sum_{(k,l)\in S^{(2)}} m^{\phantom{{i,j}}k,l}_{i,j}g_k g_l\qquad (i,j)\in \overline{S^{(2)}}\,,
\end{equation*}
and $m^{\phantom{{i,j}}k,l}_{i,j} = \delta^{\phantom{{i,j}}k,l}_{i,j}$ whenever $(i,j)\in S^{(2)}$. Further, consider the multiindices
\begin{equation*}
S^{(r)}:=\left\{(i_1,\ldots,i_r):(i_j,i_{j+1})\in S^{(2)}, 1\leq j < r\right\}\,.
\end{equation*}
Finally, let $\pi$ be a representation of the algebra on $\bbC^k$, and we denote $v_i = \pi(g_i)$ for all $i$.
\begin{lemma}\label{lma:IntProp}
Assume that for all $r\geq 3$, the set $\{v^I:  I\in S^{(r)}\}$ is linearly independent in $\caM_k$. Then the following intersection property holds:
\begin{equation} \label{Intersection}
\mathcal{G}_{l}=\bigcap_{x=0}^{l-2} (\mathbb{C}^d)^{\otimes x}\otimes
\mathcal{G}_{2}\otimes (\mathbb{C}^d)^{\otimes (l-2-x)}
\end{equation}
for all $l\geq 2$.
\end{lemma}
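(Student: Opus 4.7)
The plan is to recast \eqref{Intersection} as an equality of orthogonal complements and then deduce it from a dimension count fed by the linear-independence hypothesis.

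I would first reformulate the claim as
\[
\caG_l^\perp \;=\; \sum_{x=0}^{l-2} (\bbC^d)^{\otimes x} \otimes \caG_2^\perp \otimes (\bbC^d)^{\otimes (l-2-x)}.
\]
The inclusion $\supseteq$ is built into the MPS structure, since $\caG_2^\perp$ consists exactly of vectors $c = \sum c_{ij}\, e_i \otimes e_j$ satisfying $\sum c_{ij}\, v_j v_i = 0$: inserting such a $c$ at positions $(x+1, x+2)$ inside $\Gamma_l(B)$ collapses the internal factor $v_{i_{x+2}} v_{i_{x+1}}$ of $\Tr(B\, v_{i_l}\cdots v_{i_1})$ to zero for every $B \in \caM_k$.

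For the reverse inclusion I would match dimensions. Iterating the algebra relations $v_\alpha v_\beta = \sum_{(\gamma,\delta) \in S^{(2)}} m_{\alpha\beta}^{\gamma\delta} v_\gamma v_\delta$ on every reducible pair, every product $v_{i_l}\cdots v_{i_1}$ rewrites as a linear combination of monomials $\{v^I : I \in S^{(l)}\}$, which are linearly independent by hypothesis since $l \geq 3$; this gives $\dim \caG_l = |S^{(l)}|$. To compute the dimension of the right-hand side of the previous display, I would establish the direct sum decomposition
\[
(\bbC^d)^{\otimes l} \;=\; W_l \;\oplus\; \sum_{x=0}^{l-2} (\bbC^d)^{\otimes x} \otimes \caG_2^\perp \otimes (\bbC^d)^{\otimes (l-2-x)},
\]
where $W_l$ is the span of the basis tensors $e_{i_1}\otimes\cdots\otimes e_{i_l}$ whose index tuple, read in the reversed order induced by the MPS trace, lies in $S^{(l)}$. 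Spanning follows from the same local rewriting: whenever two consecutive indices form a reducible pair, the basis tensor is congruent modulo the corresponding $(\caG_2^\perp)_x$ to tensors in which that pair has been replaced by $S^{(2)}$-pairs. Transversality $W_l \cap \sum_x (\caG_2^\perp)_x = \{0\}$ follows from the contraction map $M_l : e_{i_1}\otimes\cdots\otimes e_{i_l} \mapsto v_{i_l}\cdots v_{i_1}$, which annihilates each $(\caG_2^\perp)_x$ by the same relation-collapse observation but restricts to an injection on $W_l$ by the linear-independence hypothesis.

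Combining, $\dim \sum_x (\caG_2^\perp)_x = d^l - |S^{(l)}| = \dim \caG_l^\perp$, and together with the easy inclusion this forces equality, proving \eqref{Intersection}; the case $l = 2$ is trivial. The main technical obstacle I anticipate is the spanning step in the direct sum decomposition: the local rewrites defined by $\overline{S^{(2)}}$ are not confluent in general, so one must argue that iterated reduction terminates. The cleanest route is to introduce a well-founded complexity on multi-indices, for instance the lexicographic pair formed by the position of the leftmost reducible pair and the total number of reducible pairs, under which each local rewrite strictly decreases complexity; this yields termination in $W_l$ without invoking a diamond lemma.
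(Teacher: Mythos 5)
Your argument is correct in substance but organized quite differently from the paper's. The paper proves \eqref{Intersection} by induction on the length, reducing it to the two-interval statement $\caG_{r+1}=(\caG_r\otimes\bbC^d)\cap(\bbC^d\otimes\caG_r)$; the easy inclusion is the same as yours, and the reverse inclusion is obtained not by a dimension count but by showing $\tilde\caG_r\cap\caG_{r+1}^\perp=\{0\}$: the components of any $\phi\in\tilde\caG_r$ satisfy the relabelling relations encoded by a matrix $M$ with $g^I=\sum_{J\in S^{(r+1)}}M^I_{\phantom{I}J}g^J$, and the block structure of $M$ gives $M\str M\geq\idtyty$, forcing $\phi=0$. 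You instead work globally: you identify $\dim\caG_l=|S^{(l)}|$ and exhibit $\sum_x(\bbC^d)^{\otimes x}\otimes\caG_2^\perp\otimes(\bbC^d)^{\otimes(l-2-x)}$ as a complement of the ``normal word'' space $W_l$, using the contraction map $M_l$ to get transversality, and conclude by counting. Both proofs hinge on exactly the same unproved-in-detail ingredient, namely that every monomial can be rewritten into the span of $\{g^J:J\in S^{(l)}\}$ (this is the existence of the paper's matrix $M$, which the paper justifies in one sentence), so your explicit flagging of the termination issue is fair rather than a defect relative to the paper; note only that your proposed complexity does not obviously work, since rewriting the leftmost reducible pair can create a new reducible pair immediately to its left, so both the leftmost position and the number of reducible pairs can move the wrong way --- in the concrete algebras of the paper the relations are of normal-ordering type and termination is straightforward, but a general well-founded argument needs more care. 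Two small repairs: the characterization of $\caG_2^\perp$ (and the definition of $M_l$) should carry complex conjugates, $\sum_{ij}\overline{\phi_{ij}}\,v_jv_i=0$, for the ``collapse'' and annihilation statements to be literally true; and observe that your transversality step is not even needed for the conclusion, since spanning of $W_l+\sum_x(\caG_2^\perp)_x$ already gives $\dim\sum_x(\caG_2^\perp)_x\geq d^l-|S^{(l)}|$, which together with $\dim\caG_l\geq|S^{(l)}|$ (linear independence) and the easy inclusion forces equality. What your route buys is a direct, non-inductive proof with a transparent dimension bookkeeping; what the paper's route buys is that it never needs the counting of $|S^{(l)}|$ at the tensor level, only the algebraic relabelling identity and the positivity $M\str M\geq\idtyty$.
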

\begin{proof}
By induction, it suffices to prove
\begin{equation*}
\caG_{r+1} = \caG_r\otimes \bbC^d \cap \bbC^d\otimes \caG_r =:\tilde\caG_r
\end{equation*}
for $r\geq 2$.

\textsc{Part I.} $\caG_{r+1}\subseteq \tilde\caG_r$.

A vector $\phi$ belongs to $\caG_{r+1}$ if there is a matrix $B$ for which $\phi_{\bar I} = \Tr(Bv^I)$ for $I\in[1,d]^{r+1}$ and it is in $\tilde\caG_r$ if there are $(R_k)_{k=0}^n$ and  $(L_i)_{i=0}^n$ with $\phi_{\bar I} = \Tr(R_{i_1}v_{i_2}\cdots v_{i_{r+1}}) = \Tr(v_{i_1}\cdots v_{i_{r}} L_{i_{r+1}})$. The inclusion $\subseteq$ follows by choosing $R_{i_1} = Bv_{i_1}$ and $L_{i_{r+1}} = v_{i_{r+1}}B$.

\textsc{Part II.} $\caG_{r+1}\supseteq \tilde\caG_r$. 

First, we note that the linear independence condition implies that $\{g^I: I\in S^{(r)}\}$ is linearly independent at the level of the algebra. Indeed, if it were not the case, there would exist a non-trivial relation $\sum_I \alpha_I g^I = 0$. In the representation, this implies that $0 = \pi(\sum_I \alpha_I g^I) = \sum_I \alpha_I \pi(g^I) = \sum_I \alpha_I v^I$ which is a contradiction. Furthermore, $\mathrm{span}\{g^I: I\in S^{(r)}\} = \caV^{\otimes r}$, as every monomial $g^I$ with $(i_l,i_{l+1})\in \overline{S^{(2)}}$ can be expressed as a linear combination of monomials with $(j_l,j_{l+1})\in S^{(2)}$. This, and the linear independence, imply the existence of a unique matrix $M^I_{\phantom{I}J}$, of size $d^{r+1}\times\abs{S^{(r+1)}}$, such that
\begin{equation*}
g^I = \sum_{J\in S^{(r+1)}} M^I_{\phantom{I}J}g^J\,,
\end{equation*}
for all $I$, where $M^I_{\phantom{I}J} = \delta^I_{\phantom{I}J}$ whenever $I\in S^{(r+1)}$.

Now, $\phi\in\caG_{r+1}^\perp$ iff $\sum_{I\in[1,d]^{r+1}} \overline{\phi_{\bar I}}\Tr(Bv^I) = 0$ for all $B$, \ie
\begin{equation*}
\phi\in\caG_{r+1}^\perp \quad \Longleftrightarrow \quad \sum_{I\in[1,d]^{r+1}} \overline{\phi_{\bar I}} v^I = 0\,.
\end{equation*}
On the other hand, if $\phi\in\tilde\caG_r$, then also
\begin{equation} \label{phiRelabel}
\phi_{\bar{I}} = \sum_{J\in S^{(r+1)}} M^I_{\phantom{I}J} \phi_{\bar{J}}\,.
\end{equation}
Hence,
\begin{equation*}
\sum_{I\in[1,d]^{r+1}}\overline{\phi_{\bar{I}}}v^I 
=\sum_{I\in[1,d]^{r+1}} \sum_{K\in S^{(r+1)}} \overline{M^I_{\phantom{I}K}} \overline{\phi_{\bar{K}}} \sum_{J\in S^{(r+1)}} M^I_{\phantom{I}J} v^J
=\sum_{J\in S_R^{(r+1)}} \overline{(M\str M \phi)_{\bar J}}\, v^J\,,
\end{equation*}
so that $\sum_{I}\overline{\phi_{\bar{I}}}v^I = 0$ implies $(M\str M) \phi = 0$, by the assumption. Since the matrix $M$ has the following block structure:
\begin{equation*}
M = 
 \begin{pmatrix}
  \idtyty \\ N
 \end{pmatrix}\,,
\end{equation*}
it follows that $M\str M = \idtyty + N\str N \geq \idtyty$, so that $\det M\str M \neq 0$. In particular, $(M\str M) \phi = 0$ implies $\phi_{\bar{J}}=0$ for $J\in S^{(r+1)}$, and by~(\ref{phiRelabel}) $\phi=0$. Hence, $\caG_{r+1}^\perp \cap \tilde\caG_r = \{0\}$. This, combined with $\caG_{r+1} \subseteq \tilde\caG_r$, implies that $\caG_{r+1} = \tilde\caG_r$.
\end{proof}

The last general result we recall here is about a uniform lower bound for the spectral gap, see~\cite{VBSHam}. Assume that $H_{[1,N]}\geq 0$ is a positive Hamiltonian corresponding to a translation invariant nearest neighbor interaction and such that $\mathrm{Ker}(H_{[1,N]}) = \caG_{[1,N]}$ for all $N\geq 2$. Let $G_N$ be the orthogonal projection onto $\caG_N$. If there exists $k$, an $\epsilon_k < 1/\sqrt{k}$ and $N = N(k)$ sufficiently large such that
\begin{equation} \label{martingale}
g_{k,N} := \left\Vert G_{[N-k+2, N+1]}\left(G_{[1, N]} - G_{[1, N+1]}\right)\right\Vert \leq \epsilon_k\,.
\end{equation}
Then the spectral gap $\gamma_N$ of $H_{[1,N]}$ is bounded below,
\begin{equation}\label{GapMartingale}
\gamma_N \geq \frac{\gamma_k}{k-1}\left(1-\epsilon_k \sqrt{k}\right)^2\,,
\end{equation}
uniformly on the length of the chain, where $\gamma_k$ is the spectral gap of the Hamiltonian on the chain of length $k$.

\subsection{The MPS representation of the PVBS models and gap estimates}

We now turn to the analysis of the PVBS systems using their matrix product representation. In this case, $d=n+1$. The quadratic relations satisfied by the generating matrices $\{v_i\}_{i=0}^n$ are given by
\begin{align}
v_i v_j&=\ep{\iu\theta_{ij}} \lambda_i \lambda_j^{-1} v_j v_i \,, \qquad i\neq j\,,\label{cr1}\\
v_i^2 &= 0\,, \qquad i\neq 0\,,\label{cr2}
\end{align}
and $\lambda_0 = 1$ corresponds to a choice of normalization. Recall that a vector $\phi\in\caG_2^\perp$ iff $\sum_{i,j=0}^n\overline{\phi_{ij}}v_jv_i = 0$, where $\phi_{ij} = \braket{e_i\otimes e_j}{\phi}$. Comparing~(\ref{cr1},\ref{cr2}) with~(\ref{phiij},\ref{phiii}), it is clear that the vectors $\phi^{ij}$ and $\phi^{ii}$ span $\caG_2^\perp$ if $v_iv_j\neq 0$ for all pairs $i\neq j$.
\begin{lemma}
There exists a $2^n$-dimensional representation of the commutation relations~(\ref{cr1},\ref{cr2}) with $v_iv_j\neq 0$ for $0\leq i<j\leq n$ and $v_0^2\neq0$.
\end{lemma}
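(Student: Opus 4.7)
The plan is to realize the $2^n$-dimensional representation on a Fock-like space $\bigotimes_{i=1}^n \bbC^2$, endowed with the occupation basis $\{|\vec n\rangle : \vec n = (n_1,\ldots,n_n)\in\{0,1\}^n\}$. For $1\le i\le n$ I would let $v_i$ act as a ``raising operator'' on the $i$-th qubit, so that the relation $v_i^2 = 0$ is automatic, and I would let $v_0$ be a strictly positive diagonal matrix, so that $v_0^2$ has only nonzero diagonal entries and is therefore nonzero. The whole content of the lemma is to encode the $q$-commutation~\eq{cr1} into a suitable choice of amplitudes.

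Concretely, I would set
\begin{equation*}
v_i |\vec n\rangle = \begin{cases} c_i(\vec n)\, |\vec n^{(i)}\rangle & \text{if } n_i = 0 \\ 0 & \text{if } n_i = 1 \end{cases}\qquad (1\le i\le n),
\end{equation*}
where $\vec n^{(i)}$ is obtained from $\vec n$ by flipping $n_i$ from $0$ to $1$, with amplitudes of the multiplicative form $c_i(\vec n) = \prod_{k\neq i} \alpha_{ik}^{\,n_k}$. A direct computation on a basis state with $n_i=n_j=0$ gives
\begin{equation*}
v_i v_j|\vec n\rangle = c_j(\vec n)\,c_i(\vec n^{(j)})\,|\vec n^{(ij)}\rangle,\qquad v_j v_i|\vec n\rangle = c_i(\vec n)\,c_j(\vec n^{(i)})\,|\vec n^{(ij)}\rangle,
\end{equation*}
and since $c_i(\vec n^{(j)})/c_i(\vec n) = \alpha_{ij}$ the ratio reduces to $\alpha_{ij}/\alpha_{ji}$. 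I would choose the symmetric split $\alpha_{ik} := \ep{\iu\theta_{ik}/2}(\lambda_i/\lambda_k)^{1/2}$; using $\theta_{ji}=-\theta_{ij}$ and $\lambda_i>0$ this yields $\alpha_{ij}/\alpha_{ji} = \ep{\iu\theta_{ij}}\lambda_i\lambda_j^{-1}$, exactly the prefactor demanded by~\eq{cr1}. On basis states with $n_i=1$ or $n_j=1$ both sides vanish trivially, so~\eq{cr1} holds on the whole space for $i,j\ge 1$.

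For the relations involving $v_0$, I would set $v_0 = \mathrm{diag}(d(\vec n))$ with $d(\vec n) = \prod_k \beta_k^{n_k}$. Then $v_0 v_i |\vec n\rangle = \beta_i\, v_i v_0 |\vec n\rangle$, and the choice $\beta_i := \ep{\iu\theta_{0i}}\lambda_i^{-1}$ recovers~\eq{cr1} with the convention $\lambda_0=1$. The non-vanishing assertions I would check on the vacuum: for $1\le i<j\le n$, $v_i v_j|\vec 0\rangle = \alpha_{ij}\,|\vec 0^{(ij)}\rangle\ne 0$; for $i=0<j$, invertibility of $v_0$ gives $v_0 v_j|\vec 0\rangle = \beta_j\, v_j|\vec 0\rangle \ne 0$; and $v_0^2$ is diagonal with entries $d(\vec n)^2\ne 0$. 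The only delicate point is the bookkeeping that turns a ratio of amplitudes into the braiding phase, but this is resolved once the symmetric parameterization of $\alpha_{ik}$ is in hand; no genuine obstacle arises.
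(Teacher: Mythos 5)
Your construction is correct and is essentially the paper's own proof in different notation: the paper realizes the $v_i$ explicitly as $v_i=\bigotimes_{j<i}P_{ij}d_j\otimes\sigma^+\otimes\bigotimes_{k>i}P_{ik}d_k$ and $v_0$ as a diagonal tensor product, which is exactly your ``raising operator on the $i$-th qubit dressed by diagonal weights on the other qubits'' written in the occupation basis, with only a cosmetic difference in how the weights $\lambda_i\lambda_j^{-1}$ and phases are split between sites. (The phrase ``strictly positive diagonal'' for $v_0$ is a harmless slip, since your $\beta_k=\ep{\iu\theta_{0k}}\lambda_k^{-1}$ are merely nonzero, which is all that is needed for $v_0^2\neq 0$.)
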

\begin{proof}
In an arbitrary basis of $\bbC^2$, let
\begin{equation*}
\sigma^+ = \begin{pmatrix} 0 & 1 \\ 0 & 0 \end{pmatrix} \qquad
\sigma^- = \begin{pmatrix} 0 & 0 \\ 1 & 0 \end{pmatrix}\,,
\end{equation*}
and
\begin{equation*}
d_i = \begin{pmatrix} 1 & 0 \\ 0 & \lambda_i \end{pmatrix}  \qquad
P_{ij} = \begin{pmatrix} \ep{\iu\theta_{ij}/2} & 0 \\ 0 & 1 \end{pmatrix}\,.
\end{equation*}
The matrices given by
\begin{align}
v_0 &= \bigotimes_{i=1}^n P_{0i}^2d_i\,, \label{v0}\\
v_i &= \bigotimes_{j=1}^{i-1} P_{ij}d_j \otimes \sigma^+ \otimes \bigotimes_{k=i+1}^n P_{ik}d_k\,,\quad 1\leq i \leq n\,. \label{vi}
\end{align}
are of dimension $2^n$ and satisfy the commutation relations. Indeed~(\ref{cr1}) follows from $\sigma^+d_i = \lambda_i d_i \sigma^+$ and $P_{ij}\sigma^+ = \ep{\iu\theta_{ij}/2}\sigma^+P_{ij}$ as well as $[P_{jk},d_i] = 0$ for all $i,j,k$. Eq.~(\ref{cr2}) is a direct consequence of $(\sigma^+)^2 = 0$.
\end{proof}
\begin{lemma}
For $a,b\in\bbZ$, with $b-a+1 = N\geq2$, the ground state space of $H_{[a,b]}$ is equal to $\caG_N$.
\end{lemma}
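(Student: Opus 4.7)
The plan is to show that the Hamiltonian is frustration-free with kernel equal to $\caG_N$, by identifying the local kernels of each $h_{x,x+1}$ with $\caG_2$ (tensored with identities) and then invoking the intersection property of Lemma~\ref{lma:IntProp}. The main obstacle is verifying the linear-independence hypothesis of that lemma for the explicit matrix representation~(\ref{v0})--(\ref{vi}).

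First, since $h$ is an orthogonal projection, $H_{[a,b]} = \sum_{x=a}^{b-1} h_{x,x+1} \geq 0$, and a vector lies in its kernel iff it is annihilated by every $h_{x,x+1}$. The range of $h$ is $\mathrm{span}\{\phi^{ij}, \phi^{ii}\}$, and this equals $\caG_2^\perp$ provided that $v_i v_j \neq 0$ for all pairs $i \neq j$ and $v_0^2 \neq 0$, which holds for the representation constructed in the previous lemma. Hence
\begin{equation*}
\ker H_{[a,b]} \;=\; \bigcap_{x=a}^{b-1} (\bbC^d)^{\otimes(x-a)} \otimes \caG_2 \otimes (\bbC^d)^{\otimes(b-x-1)},
\end{equation*}
which is precisely the right-hand side of the intersection formula~(\ref{Intersection}) (up to the harmless translation sending $[a,b]$ to $[1,N]$).

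To apply Lemma~\ref{lma:IntProp} and conclude that this intersection equals $\caG_N$, I need to verify that $\{v^I : I \in S^{(r)}\}$ is linearly independent in $\caM_{2^n}$ for every $r \geq 3$. A natural choice of $S^{(2)}$ reflecting the quadratic relations~(\ref{cr1})--(\ref{cr2}) is
\begin{equation*}
S^{(2)} \;=\; \{(0,0)\} \cup \{(i,j) : 0 \leq i < j \leq n\},
\end{equation*}
so that any $I = (i_1,\ldots,i_r) \in S^{(r)}$ consists of a block of zeros followed by a strictly increasing tuple of nonzero indices; in particular, $I$ is determined by its multiset of entries. The key computation is the action of $v^I$ on the distinguished vector $|0\rangle^{\otimes n} \in (\bbC^2)^{\otimes n}$: each $v_i$ with $i \geq 1$ contributes a $\sigma^+$ in the $i$-th tensor factor (while acting diagonally on the remaining factors), and $v_0$ is purely diagonal. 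Therefore $v^I |0\rangle^{\otimes n}$ is a \emph{nonzero} scalar multiple of the product basis vector with a $1$ in precisely the tensor factors indexed by the nonzero entries of $I$. Distinct $I \in S^{(r)}$ have distinct multisets, hence distinct sets of nonzero entries, and so the vectors $\{v^I |0\rangle^{\otimes n}\}_{I \in S^{(r)}}$ are distinct basis vectors, forcing $\{v^I : I \in S^{(r)}\}$ to be linearly independent in $\caM_{2^n}$.

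Combining these pieces, Lemma~\ref{lma:IntProp} yields
\begin{equation*}
\ker H_{[a,b]} \;=\; \caG_N,
\end{equation*}
and since $H_{[a,b]} \geq 0$, this kernel is exactly its ground state space. The main technical content is the combinatorial/linear-algebra verification in the previous paragraph; everything else is a direct translation of the frustration-free structure into the framework of Lemma~\ref{lma:IntProp}.
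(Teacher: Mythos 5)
Your proposal is correct and takes essentially the same route as the paper: frustration-freeness identifies $\ker H_{[a,b]}$ with the intersection of the local spaces, the same choice $S^{(2)}=\{(0,0)\}\cup\{(i,j):0\leq i<j\leq n\}$ is made, and the linear-independence hypothesis of Lemma~\ref{lma:IntProp} is checked from the tensor-product structure of the representation (the paper states this more tersely, while you make it explicit by acting on a distinguished product vector and using that distinct $I\in S^{(r)}$ have distinct sets of nonzero entries). The only point to tighten is notational: specify that your vector $\ket{0}^{\otimes n}$ is the product basis vector on which each $\sigma^+$ acts nontrivially (with the opposite convention every $v^I$ containing a nonzero index would annihilate it), and note that the same distinguished-vector argument also justifies the $r=2$ identification $\mathrm{Ran}(h)=\caG_2^\perp$.
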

\begin{proof}
The intersection property for the spaces $\caG_N$ follows from Lemma~\ref{lma:IntProp}, applied to the PVBS algebra~(\ref{cr1}, \ref{cr2}). In the notation thereof, we choose
\begin{equation*}
S^{(2)} = \{(0,0)\} \cup \{(k,l):0\leq k<l\leq n\}\,,
\end{equation*}
with $m^{i,i}_{\phantom{i,i}k,l} = 0$ for all $(k,l)\in S^{(2)}$ if $i\neq 0$, and $m^{i,j}_{\phantom{i,j}k,l} = \delta^{i,j}_{\phantom{i,j}j,i} \ep{\iu\theta_{ij}} \lambda_i \lambda_j^{-1}$. Further,
\begin{equation*}
S^{(3)} = \{(0,0,0)\} \cup \{(0,0,k):k=1,\ldots,n\}\cup \{(ijk): 0\leq i<j<k \leq n\}\,.
\end{equation*}
Note that for $r$ large enough, there is a bijection between $S^{(r)}$ and the set of subsets of $\{1,\ldots,n\}$, so that
\begin{equation*}
\mathrm{dim}\{v^J:J\in S^{(r)}\}\leq 2^n
\end{equation*}
for all $r$ and it is equal to $2^n$ for $r\geq n$. The linear independence of this set follows immediately from the tensor product structure of the representation and the independence of any $d_i$ with $\sigma^+$.

Now, by definition of the interaction and the intersection property, we have that
\begin{equation*}
\mathrm{Ker}(H_{[a,b]}) = \caG_N
\end{equation*}
for $N\geq 2$. $H_{[a,b]}$ being the sum of projections, it is a non negative operator so that $\caG_N$ is its ground state space.
\end{proof}
We emphasize the fact that the intersection property refers only to the spaces $\caG_N$ themselves. The matrix product representation of the vectors therein is only a convenient tool to describe them. As $\caG_N = \mathrm{Ran}(\Gamma_N)$, the ground state spaces can be explicitly described. It is easy to see that any product $v_{i_N}\cdots v_{i_1}$ in which an index $j\neq0$ appears more than once vanishes. Indeed, using the commutation relations, it is possible to bring the two $v_j$'s next to each other, and recall that $v_j^2 = 0$. Moreover, for any subset $S=\{s_1,\ldots,s_m\}\subseteq\{1,\ldots,n\}$, there exists a vector $\psi_{[a,b]}^S$ which is a linear combination of all basis vectors of the form
\begin{equation*}
e_0\otimes\cdots\otimes e_{s_1}\otimes\cdots\otimes e_{s_m} \otimes\cdots\otimes e_{0}\,.
\end{equation*}
where each $ e_{s_1},\ldots,  e_{s_m}$ appears exactly once. Explicitly,
\begin{equation} \label{GSvectors}
\psi_{[a,b]}^S = \Gamma_N(B^S)\,,
\end{equation}
where
\begin{equation}
B^S = P^{\otimes (s_1-1)}\otimes \sigma^-\otimes P^{\otimes (s_2-s_1-1)}\otimes \sigma^- \otimes \\
\cdots \otimes\sigma^-\otimes P^{\otimes (n-s_m)} \label{Bs}
\end{equation}
and $P = \sigma^+\sigma^-$. For example,
\begin{equation} \label{iGS}
\psi_{[a,b]}^{\{i\}} = \sum_{x=a}^b \left(\ep{\iu\theta_{i0}}\lambda_i\right)^{x-a+1}  e_0\otimes\cdots\otimes e_0\otimes e_i\otimes e_0\otimes\cdots\otimes e_0
\end{equation}
where $e_i$ is at site $x$ in each term of the sum. Note that $\psi_{[a,b]}^S$ is unique, given the subset $S$. Indeed, the commutation relations determine all components of the vector up to normalization. More generally, 
\begin{equation*}
\left\vert\braket{e_0\otimes\cdots\otimes e_{s_1}\otimes\cdots\otimes e_{s_m}\otimes \cdots\otimes e_0}{\psi_{[a,b]}^{S}}\right\vert^2 = \prod_{j=1}^m\lambda_{s_j}^{2 x_j}\,.
\end{equation*}
Finally, the vector corresponding to $m=0$ is the simple product
\begin{equation*}
\Omega =  e_0\otimes\cdots\otimes e_0\,.
\end{equation*}
As $S\neq S'\Rightarrow \langle\psi_{[a,b]}^S,\psi_{[a,b]}^{S'}\rangle=0$, the dimension of the span of such vectors is $\abs{\caP(\{1,\ldots,n\})} = 2^n$ for all $N\geq n$. Since this is also the dimension of $\{v^J:J\in S^{(N)}\}$, we have that
\begin{equation*}
\mathrm{span}\{\psi_N^{S}:S\in\caP(\{1,\ldots,n\})\} = \caG_N\,.
\end{equation*}

Since the initial space of $\Gamma_N$ has dimension $4^n$, its kernel is non trivial for all $N$. In the language of matrix product states, the fact that $\Gamma_N$ is not injective for all chain lengths implies that the representation is non minimal. And indeed, the minimal representation here is one dimensional, as the state on the infinite chain is a product. Here, the larger algebra allows for a richer structure of the state when boundaries are present.

The natural interpretation of the ground states in finite volume is that of a product vacuum $\Omega$ upon which $n$ types of particles can be added, but at most one of each type. Each particle is bound to one of the edges, to the left edge if $\lambda_i<0$ and to the right edge if $\lambda_i>0$, as can be read from the exponential decay of the wavefunction's amplitude as the particle is placed farther away from its boundary, see~(\ref{iGS}). 

Recall that $\widehat{\bbE}$ is the transfer operator on $\caM_k$ associated to the matrices $v_i$ generating the PVBS.
\begin{lemma}
For any choice $\{\lambda_i\}_{i=1}^n$ such that $\lambda_i\neq1$ for all $i$, the spectrum of $\widehat{\bbE}$ is given by
\begin{equation*}
\mathrm{Spec}(\widehat{\bbE}) = \left\{\prod_{1\leq j \leq n} \lambda_j^{n_j} \ep{\pm \iu \delta_{n_j,1} \theta_{0j}} : n_j=0,1,2\right\}\,.
\end{equation*}
Moreover, the largest eigenvalue $\prod_{i:\lambda_i>1} \lambda_i^2$ is non degenerate.
\end{lemma}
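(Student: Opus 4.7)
The plan is to exploit the tensor product structure of the generators $v_0,\ldots,v_n$ from~\eqref{v0}--\eqref{vi} and to triangularize $\widehat{\bbE}(B) = \sum_{i=0}^n v_i\str B v_i$ on $\caM_{2^n} \cong (\caM_2)^{\otimes n}$ in an adapted tensor product basis. Concretely, I would work in the basis $\{P,Q,\sigma^+,\sigma^-\}^{\otimes n}$, where $P := \sigma^+\sigma^-$ and $Q := \sigma^-\sigma^+$ are the two diagonal rank-one projections. Each $v_i$ is a pure tensor of $2\times 2$ matrices, so the conjugation $v_i\str(\cdot)v_i$ also has pure tensor form, and a direct $2\times 2$ calculation will show that each factor preserves $\{P,Q,\sigma^+,\sigma^-\}$ in a very restricted way.

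The first step is to compute $v_0\str(\cdot)v_0$. Since $v_0 = \bigotimes_j P_{0j}^2 d_j$ is a pure tensor of diagonal matrices, the conjugation acts factor-wise and each basis vector $B = B_1\otimes\cdots\otimes B_n$ is an eigenvector, with per-site eigenvalues $1,\lambda_j^2,\lambda_j e^{-\iu\theta_{0j}},\lambda_j e^{+\iu\theta_{0j}}$ on $P,Q,\sigma^+,\sigma^-$ respectively. The total eigenvalue is therefore $\prod_j \lambda_j^{n_j} e^{\pm\iu\delta_{n_j,1}\theta_{0j}}$, where $(n_j,\pm)$ encodes $B_j$ via $P\leftrightarrow n_j=0$, $Q\leftrightarrow n_j=2$, and $\sigma^\pm\leftrightarrow(n_j=1,\mp)$. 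The second step deals with $i\geq 1$. Unpacking $v_i = \bigotimes_{j<i}(P_{ij}d_j)\otimes\sigma^+\otimes\bigotimes_{k>i}(P_{ik}d_k)$, the conjugation $v_i\str(\cdot)v_i$ again acts factor-wise, with the $i$-th factor sending $B_i\mapsto\sigma^- B_i\sigma^+$. A one-line calculation shows $\sigma^- B_i\sigma^+ = 0$ unless $B_i = P$, in which case $\sigma^- P\sigma^+ = Q$. Thus $v_i\str(\cdot)v_i$ annihilates every basis vector whose $i$-th factor is not $P$, and otherwise promotes that single $P$ to $Q$, multiplied by a scalar coming from the remaining diagonal factors.

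Ordering basis vectors by their number of $Q$ factors, $\widehat{\bbE}$ is triangular in this ordering: the $v_0$ contribution is diagonal, and each $v_i$ with $i\geq 1$ only couples a basis vector to one with strictly one more $Q$. The spectrum is then the multiset of diagonal entries, which is precisely the set displayed in the statement once the encoding $(B_j,n_j,\pm)$ is used. For the non-degeneracy of the leading eigenvalue I would argue purely on moduli: since $\lambda_j\neq 1$, the three possible contributions $\{1,\lambda_j,\lambda_j^2\}$ at site $j$ have a unique maximum, equal to $\lambda_j^2$ when $\lambda_j>1$ and to $1$ when $\lambda_j<1$. Hence the maximal modulus $\prod_{j:\lambda_j>1}\lambda_j^2$ appears on the diagonal exactly once, at the basis vector with $B_j=Q$ for $\lambda_j>1$ and $B_j=P$ for $\lambda_j<1$; triangularity then promotes this to algebraic, hence geometric, multiplicity one. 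The only genuine obstacle is the bookkeeping in the first two steps: the whole argument hinges on the choice of basis making $v_0$ exactly diagonal and each $v_i$ ($i\geq 1$) a single-site raising operator from $P$ to $Q$, which converts a potentially intricate $2^n\times 2^n$ spectral problem into a transparent triangular one.
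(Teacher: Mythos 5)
Your proposal is correct and takes essentially the same route as the paper: the paper's proof also rests on the per-factor decomposition of $\caM_{2^{n}}$ into $P$, $Q$, $\sigma^{+}$, $\sigma^{-}$ components and the observation that the $v_i$, $i\geq 1$, contributions act triangularly (only $P\to Q$ at site $i$), merely organized as an induction that peels off one tensor factor at a time instead of your single global triangularization of the $4^{n}$-dimensional transfer matrix. The non-degeneracy argument is likewise the same in spirit, using $\lambda_j\neq 1$ to make the maximal-modulus diagonal entry unique; your explicit remark that triangularity gives algebraic, hence geometric, multiplicity one is a slightly more careful rendering of what the paper states tersely.
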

\begin{proof}
The proof is by induction on the number of particles. If only one particle is present, say with index $1$, the spectrum is easily seen to be exactly $\{1,\ep{\iu\theta_{01}}\lambda_1,\ep{-\iu\theta_{01}}\lambda_1, \lambda_1^2\}$. Let now $S$ be a non empty subset of $\{1,\ldots,n\}$, let $j\in S$, and denote $S^j = S\setminus \{j\}$. An arbitrary matrix $M\in\caM_{2^{\vert S\vert}}$ can be decomposed with respect to the tensor factor corresponding to $j$ as $M = M_+\otimes \sigma^+ + M_-\otimes \sigma^- + M_P\otimes P + M_Q\otimes Q$, where $P=\sigma^+\sigma^-$ and $Q = 1-P$. A direct calculation yields
\begin{align*}
\widehat{\bbE}^S(M) &= \ep{-\iu\theta_{0j}}\lambda_j \widehat{\bbE}^{S^j}(M_+) \otimes \sigma^+
+ \ep{\iu\theta_{0j}}\lambda_j\widehat{\bbE}^{S^j}(M_-) \otimes \sigma^- \\
&\quad+ \widehat{\bbE}^{S^j}(M_P)\otimes P
+ \left((v_0^{S^j})\str M_P (v_0^{S^j}) + \lambda_j^2 \widehat{\bbE}^{S^j}(M_Q)\right)\otimes Q
\end{align*}
where $\widehat{\bbE}^S = \sum_{i\in S} (v_i^S)\str \cdot v_i^S$ and $v_i^S$ are the matrices of~(\ref{v0}, \ref{vi}), truncated to the $\abs{S}$ factors of $S$. The spectrum of $\widehat{\bbE}^S$ therefore contains the spectrum of $\widehat{\bbE}^{S^j}$ multiplied by $\ep{\pm\iu\theta_{0j}}\lambda_j$. Moreover, the eigenvalue equation in the $P,Q$ block can be cast as
\begin{equation}\label{FRec}
\begin{pmatrix} 
\widehat{\bbE}^{S^j} - \mu & 0 \\ (v_0^{S^j})\str \cdot (v_0^{S^j}) & \lambda_j^2 \widehat{\bbE}^{S^j} - \mu
\end{pmatrix} = 0\,,
\end{equation}
the determinant of which is simply $\det(\widehat{\bbE}^{S^j} - \mu)\det(\lambda_j^2 \widehat{\bbE}^{S^j} - \mu)$. Hence, $\mu\in\mathrm{Spec}(\widehat{\bbE}^{S})$ if either $\mu\in\mathrm{Spec}(\widehat{\bbE}^{S^j})$, or $\mu/\lambda_j^2\in\mathrm{Spec}(\widehat{\bbE}^{S^j})$. In summary,
we have that
\begin{equation}\label{SpecRec}
\mathrm{Spec}(\widehat{\bbE}^S) = \{1,\ep{\iu\theta_{0j}}\lambda_j,\ep{-\iu\theta_{0j}}\lambda_j, \lambda_j^2\}\cdot \mathrm{Spec}(\widehat{\bbE}^{S^j})\,,
\end{equation}
which closes the induction. 

Clearly, the largest eigenvalue is $\prod_{i:\lambda_i>1} \lambda_i^2$. Its non-degeneracy follows from the recursion~(\ref{SpecRec}) and the fact that $\lambda_i^2\neq 1$ for all $i$.
\end{proof}
In particular, $\widehat{\bbE}^S$ is injective for all $S$. It follows from~(\ref{FRec}) that $M_P=0$ whenever $\lambda_j^2$ is a factor of the corresponding eigenvector. Hence, if there is an index $i$ with $\lambda_i>1$, then the right eigenvector for the largest eigenvalue is not strictly positive. Similarly for the corresponding left eigenvector whenever there is a $j$ for which $\lambda_j<1$, and it is therefore not faithful as a state on $\caM_k$. In both cases, the representation is not minimal.

Rescaling $\bbE$ by the largest eigenvalue of the transfer operator yields a completely positive map satisfying the conditions of Section~\ref{sub:MPS} and generates a unique translation invariant state on the chain algebra. We shall now describe the limiting state as well as the ground state spaces of the half-infinite chains. First, we consider the asymptotics of the norms of $\psi_{[1,N]}^S$ as $N\to\infty$. It is straightforward to observe that the same holds for $\psi_{[-N+1,0]}^S$ by replacing the conditions $\lambda_i\lessgtr1$ by $\lambda_i\gtrless1$.
\begin{lemma} \label{lma:norms}
Let $\psi_N^S$ be the vectors given by~(\ref{GSvectors}). There exists a finite constant $C^S>0$ such that
\begin{equation*}
\lim_{N\to\infty}\frac{\left\Vert \psi_N^S \right\Vert^2}{\prod_{i\in S:\lambda_i>1} \lambda_i^{2N}} \longrightarrow C^S\,,
\end{equation*}
where the denominator is equal to $1$ if the product is empty.
\end{lemma}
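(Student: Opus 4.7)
The strategy is to compute $\norm{\psi_N^S}^2$ explicitly via the transfer operator $\widehat{\bbE}$, exploiting the tensor-product structure of the representation~(\ref{v0},\ref{vi}). Applying~(\ref{ScalarProdGamma}) with $A=\idtyty^{\otimes N}$ and $B_L=B_R=B^S$, and choosing $\{\chi_p\}$ to be the computational basis of $\bbC^{2^n}$, the tensor-product form of $B^S$ (namely $\sigma^-$ at positions $j\in S$ and $P=\sigma^+\sigma^-$ elsewhere) forces $(B^S)^*\chi_p\neq 0$ only for the single basis vector $\phi_S$ carrying the $Q$-eigenvector at sites in $S$ and the $P$-eigenvector elsewhere, in which case $(B^S)^*\phi_S=\Omega$, the reference vector satisfying $\ketbra{\Omega}=P^{\otimes n}$. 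Hence only the term $\chi_p=\chi_q=\phi_S$ contributes to the double sum, and
\begin{equation*}
\norm{\psi_N^S}^2 = \braket{\phi_S}{\widehat{\bbE}^N(P^{\otimes n})\,\phi_S}.
\end{equation*}

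For each $T\subseteq\{1,\ldots,n\}$, set $M_T:=\bigotimes_{j\in T}Q\otimes\bigotimes_{j\notin T}P$ and $f_T:=\prod_{i\in T}\lambda_i^2$. An elementary computation, analogous to the decomposition of $\widehat{\bbE}$ used in the proof of the previous lemma and based on $\sigma^- P\sigma^+=Q$, $\sigma^- Q\sigma^+=0$ and the diagonal action of all non-$\sigma^+$ tensor factors, shows that $\mathrm{span}\{M_T\}$ is $\widehat{\bbE}$-invariant and
\begin{equation*}
\widehat{\bbE}(M_T) = f_T\, M_T + f_T\sum_{j\notin T}M_{T\cup\{j\}}.
\end{equation*}
Since $P^{\otimes n}=M_\emptyset$ and $\braket{\phi_S}{M_T\phi_S}=\delta_{S,T}$, writing $\widehat{\bbE}^N(M_\emptyset)=\sum_T\alpha_T(N)M_T$ gives $\norm{\psi_N^S}^2=\alpha_S(N)$, where the coefficients satisfy the triangular recursion
\begin{equation*}
\alpha_T(N+1) = f_T\,\alpha_T(N) + \sum_{j\in T}f_{T\setminus\{j\}}\,\alpha_{T\setminus\{j\}}(N),\qquad \alpha_T(0)=\delta_{T,\emptyset}.
\end{equation*}

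Set $T_+:=\{i\in T:\lambda_i>1\}$ and $\beta_T(N):=\alpha_T(N)/f_{T_+}^N$. The lemma is the case $T=S$ of the more general claim that $\beta_T(N)$ converges to a positive limit $C^T$, which I would prove by induction on $|T|$. The base case $T=\emptyset$ is immediate since $\alpha_\emptyset\equiv 1$. For the inductive step, dividing the recursion by $f_{T_+}^{N+1}$ yields $\beta_T(N+1)=r_T\,\beta_T(N)+g(N)$, where $r_T:=f_T/f_{T_+}=\prod_{i\in T,\,\lambda_i<1}\lambda_i^2$ and $g(N)$ is a combination of the $\beta_{T\setminus\{j\}}(N)$ with weights that are constant in $N$ when $\lambda_j<1$ and of order $\lambda_j^{-2N}$ when $\lambda_j>1$. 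If $T\not\subseteq T_+$ then $r_T<1$ and the inductive hypothesis yields $g(N)\to g_\infty>0$, so $\beta_T(N)\to g_\infty/(1-r_T)=:C^T$ by standard stability of first-order linear recursions. If instead $T\subseteq T_+$ then $r_T=1$ while $g(N)$ decays geometrically like $\max_{j\in T}\lambda_j^{-2N}$; starting from $\beta_T(0)=0$, the telescoping identity $\beta_T(N)=\sum_{M<N}g(M)$ exhibits $\beta_T(N)$ as a convergent series of positive terms. Positivity of $C^T$ is then propagated from $C^\emptyset=1$ by the inductive contributions of smaller $C^{T'}$ to $g(N)$. The main obstacle is this second sub-case, where the limit is not a contractive fixed point but the tail of a summable series; here one must carefully track the geometric decay of the corrections and combine it with the inductive positivity of the $C^{T'}$ with $T'\subsetneq T$ to conclude.
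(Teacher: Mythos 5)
Your argument is correct, and it reaches the conclusion by a genuinely different route than the paper, even though both start from the same reduction~(\ref{ScalarProdGamma}). The paper substitutes $B^S$ and then invokes the spectral decomposition of the transfer operator established in the preceding lemma, writing $\norm{\psi_N^S}^2=\sum_k t_k^N\Tr\left(L_k(B^S)\str R_kB^S\right)$ as in~(\ref{overlap}); it then uses the tensor-product and diagonality structure of the eigenvectors $R_k$ (via $B^S=P_jB^SP_j$ for $j\notin S$ and $P_jR_kP_j=0$ when $\lambda_j^2$ divides $t_k$) to see that only the eigenvalues $\prod_{j\in T}\lambda_j^2$ with $T\subseteq S$ contribute, the largest of which is $\prod_{i\in S:\lambda_i>1}\lambda_i^2$. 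You never touch the eigenvectors: you identify the $\widehat{\bbE}$-invariant subspace spanned by the diagonal matrices $M_T$, on which the action is triangular, so that $\norm{\psi_N^S}^2=\alpha_S(N)$ obeys an explicit first-order linear recursion which you resolve by induction on $\abs{T}$, splitting into the contractive case $r_T<1$ and the marginal case $r_T=1$ with geometrically summable forcing. What each approach buys: the paper's proof is shorter and reuses spectral data needed elsewhere (e.g.\ in the gap proof), but it presumes the biorthogonal eigendecomposition on the relevant block and leaves the strict positivity of the coefficient of the dominant eigenvalue as an immediate observation; your recursion is more elementary and self-contained, is insensitive to possible coincidences among the products $\prod_{j\in T}\lambda_j^2$, and delivers $C^S>0$ explicitly by propagating positivity from $C^\emptyset=1$. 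Note only that, like the paper, you implicitly use the standing assumption $\lambda_i\neq1$ for all $i\neq0$ (otherwise the limit in the lemma would be infinite, and your marginal case would acquire a non-decaying forcing term), which is the hypothesis in force throughout this part of the section.
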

\begin{proof}
Recalling~(\ref{ScalarProdGamma}), we have that
\begin{equation}
\left\Vert \psi_N^S \right\Vert^2 = \sum_{p,q=1}^k \braket{\chi_p}{ \widehat{\bbE}^{(N)}\left((B^{S})\str |\chi_p\rangle\langle\chi_q|B^{S}\right) \chi_q} 
= \sum_{k=1}^{2^n} t_k^N\Tr\left(L_k(B^{S})\str R_k B^{S}\right)\,, \label{overlap}
\end{equation}
where $\{\chi_p\}_{p=1}^k$ is an arbitrary orthonormal basis of $\bbC^k$, and $R_k,L_k$ are the right and left eigenvectors of the transfer operator for the eigenvalue $t_k$. Since all $R_k$'s are simple tensor products and with $B^S$ given in~(\ref{Bs}), it is immediate to observe that $(B^{S})\str R_k B^{S} = 0$ whenever $R_k$ is not diagonal. Moreover, $B^{S} = P_j B^{S}P_j$ iff $j\notin S$. But $P_j R_k P_j = 0$ if $\lambda_j^2$ is a factor of $t_k$. Hence, the only terms of the sum that do not vanish are those corresponding to the eigenvalues of the form $t_k = \prod_{j}\lambda_j^2$ where the product runs over any subset of $S$. In the $N\to\infty$ limit, the norm is therefore dominated by the eigenvalue $\prod_{i\in S:\lambda_i>1}\lambda_i^2$. Note that if $\{i\in S:\lambda_i>1\}=\emptyset$, all summands vanish exponentially but the one corresponding to the eigenvalue $1$.
\end{proof}

Let
\begin{equation*}
\mean{A}{\psi} = \frac{\braket{\psi}{A\psi}}{\left\Vert \psi \right\Vert^2}
\end{equation*}
denote the expectation value of a local observable $A$ in the state $\psi$.
\begin{prop} \label{prop:ThermoLimits}
Let $A$ be a local observable, namely $A\in\caB(\caH_{[a,b]})$ for some $a<b\in\bbZ$. Then, for any $S\in\caP(\{1,\ldots,n\})$,
\begin{align*}
&\lim_{x,y\to\infty}\left\vert \mean{A}{\psi_{[a-x,b+y]}^S} - \mean{A}{\Omega}\right\vert = 0 \\
\text{for any }x\geq 0,\quad &\lim_{y\to\infty}\left\vert \mean{A}{\psi_{[a-x,b+y]}^S} - \mean{A}{\psi_{[a-x,b+y]}^{S_L}}\right\vert = 0\\
\text{for any }y\geq 0, \quad&\lim_{x\to\infty}\left\vert \mean{A}{\psi_{[a-x,b+y]}^S} - \mean{A}{\psi_{[a-x,b+y]}^{S_R}}\right\vert = 0 \\
\end{align*}
where $S_L = \{i\in S:\lambda_i<1\}$ and $S_R = \{i\in S:\lambda_i>1\}$, with the convention $\psi^{\emptyset}_{\cdots}\equiv\Omega$.
\end{prop}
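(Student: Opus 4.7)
The strategy is to use the explicit form of the ground states in the computational basis. The identity
\begin{equation*}
\left\vert\braket{e_0\otimes\cdots\otimes e_{s_1}\otimes\cdots\otimes e_{s_m}\otimes\cdots\otimes e_0}{\psi_{\Lambda}^{S}}\right\vert^2 = \prod_{j=1}^m\lambda_{s_j}^{2 x_j}
\end{equation*}
(displayed just before the statement), combined with the vanishing of all other components of $\psi^S_\Lambda$ with $\Lambda=[a-x,b+y]$, exhibits the normalised state as a random configuration $(x_j)_{j\in S}$ of $\abs{S}$ labelled particles in $\Lambda$ whose joint density is proportional to $\prod_{j\in S}\lambda_{s_j}^{2 x_j}$, subject to the constraint that the $x_j$ are distinct. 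This constraint modifies the marginals only by exponentially small corrections in the chain length, so each particle is effectively distributed according to the geometric weight $\lambda_{s_j}^{2 x}$ restricted to $\Lambda$. For $j\in S$ with $\lambda_{s_j}>1$ this concentrates on the right edge with tail $P(b+y-x_j\geq k)\leq C_j\lambda_{s_j}^{-2k}$, and for $j\in S$ with $\lambda_{s_j}<1$ symmetrically on the left edge.

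Next, I decompose $\braket{\psi^S_\Lambda}{A\psi^S_\Lambda}$ by conditioning on which particles lie in $[a,b]$. Using the MPS representation of Section~\ref{sub:MPS},
\begin{equation*}
\braket{\psi^S_\Lambda}{A \psi^S_\Lambda} = \sum_{p,q}\braket{\chi_p}{\widehat{\bbE}^{(y)}\circ\bbE_A^{(b-a+1)}\circ\widehat{\bbE}^{(x)}\!\left((B^S)\str \ket{\chi_p}\bra{\chi_q} B^S\right)\chi_q}\,,
\end{equation*}
and inserting the spectral decomposition of $\widehat{\bbE}$, the argument from the proof of Lemma~\ref{lma:norms} selects only eigenvalues $t_k=\prod_{j\in S'}\lambda_j^{2}$ with $S'\subseteq S$. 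Grouping contributions by the subset $S_{\mathrm{in}}\subseteq S$ of particles assigned to positions inside $[a,b]$ then yields a decomposition of the form
\begin{equation*}
\frac{\braket{\psi^S_\Lambda}{A\psi^S_\Lambda}}{\norm{\psi^S_\Lambda}^{2}} = \sum_{S_{\mathrm{in}}\subseteq S}P(S_{\mathrm{in}})\,\mean{A}{\psi^{S_{\mathrm{in}}}_{[a,b]}}\,,
\end{equation*}
where $P(S_{\mathrm{in}})$ is the probability that exactly the particles indexed by $S_{\mathrm{in}}$ occupy positions in $[a,b]$. These probabilities are governed by the exponential tails established in the first paragraph.

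The three limits are immediate. (i) As $x,y\to\infty$, every right-binder exits to the right and every left-binder exits to the left, so $P(\emptyset)\to 1$; since $\psi^{\emptyset}_{[a,b]}=\Omega$, one gets $\mean{A}{\psi^S_\Lambda}\to \mean{A}{\Omega}$. (ii) With $x$ fixed and $y\to\infty$, only the right-binders are expelled, so $P(S_{\mathrm{in}}\not\subseteq S_L)\to 0$; because the right-binders, being localised at distance $O(1)$ from $b+y$, decouple from the left-binders in the joint distribution in the limit, the conditional distribution over $S_{\mathrm{in}}\subseteq S_L$ reproduces the analogous decomposition of $\mean{A}{\psi^{S_L}_\Lambda}$. (iii) is symmetric. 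The main obstacle is justifying the conditional decomposition rigorously in the presence of phases in cross-terms of the inner product and of the distinctness constraint on positions; the MPS framework handles both cleanly, since its tensor structure reduces all cross-terms to scalar contractions governed by $\widehat{\bbE}$ on the two outside segments and by $\bbE_A$ on $[a,b]$, while the selection rules from Lemma~\ref{lma:norms} eliminate exactly the eigenvalues that would couple inside and outside configurations nontrivially.
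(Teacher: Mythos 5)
Your proposal is correct and follows essentially the same route as the paper's proof: expand $\mean{A}{\psi^S_{[a-x,b+y]}}$ over the explicit particle configurations (equivalently via the transfer operator as in Lemma~\ref{lma:norms}), use locality of $A$ to kill all cross terms between configurations differing outside $[a,b]$, and conclude from the geometric concentration of each particle at its edge together with the norm asymptotics of Lemma~\ref{lma:norms}. The only difference is presentational: the paper carries the computation out explicitly for $\abs{S}=1$ and sketches the general case, whereas you organize the general case as a convex decomposition over the set of particles lying in $[a,b]$.
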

\begin{proof}
For simplicity, we consider the case of $S =\{1\}$, and we drop the index $1$ everywhere. The general case follows similarly using Lemma~\ref{lma:norms}. We have
\begin{equation*}
\mean{A}{\psi_{[a-x,b+y]}} = \left\Vert \psi_{[a-x,b+y]} \right\Vert^{-2}\sum_{i,j=a-x}^{b+y} \ep{-\iu\theta(i-j)}\lambda^{i+j}\braket{e_0\cdots  e_1\cdots e_0}{A e_0\cdots  e_1\cdots e_0}
\end{equation*}
where $e_1$ is at position $i$ on the left and $j$ on the right. All summands where one index belongs to $[a,b]$ and the other does not vanish. Therefore,
\begin{align*}
\mean{A}{\psi_{[a-x,b+y]}} &= \left\Vert \psi_{[a-x,b+y]} \right\Vert^{-2} \Bigg[ \sum_{i\notin[a,b]} \lambda^{2i} \braket{e_0\cdots e_0}{A_{[a,b]} e_0\cdots e_0} \\
&\quad + \sum_{i,j\in[a,b]} \ep{-\iu\theta(i-j)}\lambda^{i+j} \braket{e_0\cdots e_1\cdots e_0}{A_{[a,b]} e_0\cdots e_1\cdots e_0} \Bigg] = T_1 + T_2
\end{align*}
where $A_{[a,b]}$ is the restriction of $A$ to the interval $[a,b]$. Now recall that $\left\Vert \psi_{[a-x,b+y]} \right\Vert^2 = \sum_{i=a-x}^{b+y} \lambda^{2i}$, which diverges as $x\to\infty$ if $\lambda<1$ and as $y\to\infty$ if $\lambda>1$. Since the interval $[a,b]$ is fixed and finite, $\lim_{x,y\to\infty}\abs{T_2} = 0$, independently of the value of $\lambda$, and further
\begin{multline*}
\lim_{x,y\to\infty}\left\vert \mean{A}{\psi_{[a-x,b+y]}} - \mean{A}{\Omega}\right\vert 
\\
\leq \abs{\braket{e_0\cdots e_0}{A_{[a,b]} e_0\cdots e_0}} \lim_{x,y\to\infty}  \abs{1-\left\Vert \psi_{[a-x,b+y]} \right\Vert^{-2} \sum_{i\notin[a,b]} \lambda^{2i}} + \lim_{x,y\to\infty}\abs{T_2} = 0\,,
\end{multline*}
which proves the first statement of the proposition. We now assume that $\lambda<1$, the other case is the same after exchanging the roles of the left and the right boundaries. As $x\to\infty$, the calculation above continues to hold with $y$ fixed. Finally, if $x$ is fixed, $\lim_{y\to\infty}\left\Vert \psi_{[a-x,b+y]} \right\Vert^2 = (1-\lambda^{2i})^{-1}$ so that both $\lim_{y\to\infty}\abs{T_1}$ and $\lim_{y\to\infty}\abs{T_2}$ are finite, non-vanishing constants.
\end{proof}
Summarizing, there is a unique ground state in the thermodynamic limit, \ie on $\bbZ$, which is a simple product state $\Omega$. There is no correlation between spins at different sites of the chain. For each of the two half-infinite chains, the ground states are determined by their particle content, with only left particles appearing on the right infinite chain with a left boundary and symmetrically for the left infinite chain. The binding of the particles to the edges is reflected in the non trivial spectrum of the transfer operator. There are $2^{n_L}$, resp. $2^{n_R}$, ground states on $[1,+\infty)$, resp. $(-\infty,0]$. As can be read from the proof, the convergence to the limiting ground states is exponential, with rate at least
\begin{equation}\label{RateOfConvergence}
\varepsilon = \mathrm{max}\left\{\mathrm{min}\left\{\lambda_i, \lambda_i^{-1}\right\}: 1\leq i \leq n\right\}\,.
\end{equation}

Note that it would be wrong to conclude that these particles are massless. In fact, there is a mass gap for each of them when they are in the bulk, see~(\ref{MassGap}) and its proof below, but the binding energy to the edge compensates exactly for it as long as no other particle of that type is already bound.

The rest of this section is devoted to the proofs of Proposition~\ref{Prop:PVBS_Gap} and Theorem~\ref{thm:PVBS_phases}. We start with the lemma below which is a simple but crucial property relating a ground state of the long chain to the ground states of the two subsystems obtained by truncating the chain.

\begin{lemma} \label{lma:truncate}
For any $S\subseteq\{1,\ldots,n\}$ and $0 < k < N$, 
\begin{equation*}
\psi_N(S) = \sum_{\substack{S_k\subseteq S \\ \abs{S_k}\leq k}} T_{N,k}(S_k) \psi_{N-k}(S\setminus S_k) \otimes \psi_{k}(S_k)\,,
\end{equation*}
where
\begin{equation*}
\abs{T_{N,k}(S_k)} = \prod_{i\in S_k}\lambda_i^{N-k}\,.
\end{equation*}
\end{lemma}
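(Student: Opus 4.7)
I will work in the MPS representation $\psi_N(S)=\Gamma_N(B^S)$ from (\ref{GSvectors}) and split the matrix product inside the trace (\ref{mps}) between positions $N-k$ and $N-k+1$. The relations $v_s^2=0$ together with the form of $B^S$ in (\ref{Bs}) force the nonvanishing contributions to come only from index patterns in which every $s\in S$ appears exactly once and all remaining indices are $0$; I regroup these by
$$S_k := \{s\in S: s\text{ appears at a position in }[N-k+1,N]\}.$$
The constraint $|S_k|\leq k$ is then automatic, since the right block accommodates only $k$ sites.

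Exploiting the tensor-product structure of $B^S$ and of every $v_i$ on $\bigotimes_{j=1}^n \bbC^2_j$ provided by (\ref{Bs}), (\ref{v0}), (\ref{vi}), the trace factorizes as $\prod_{j=1}^n \Tr\bigl(B^S_j (V_R)_j (V_L)_j\bigr)$, where $V_L$ and $V_R$ denote the matrix products associated to the first $N-k$ and the last $k$ sites respectively. A direct factor-by-factor computation, using the elementary identities $P\sigma^+=\sigma^+=\sigma^+ Q$, $\sigma^+ P=0=Q\sigma^+$, $\sigma^-\sigma^+=Q$, and the diagonal form $(v_i)_j=\alpha_{ij}P+\lambda_j Q$ with $\alpha_{ij}$ a phase for all $i\neq j$, shows the following: for $j\in S_k$, the unique $\sigma^+$ appearing on strand $j$ inside $V_R$ pulls out the $Q$-eigenvalue $\lambda_j^{N-k}$ of the purely diagonal $V_L$ on that strand, producing the extra factor $\lambda_j^{N-k}$; no such factor is produced for $j\in S\setminus S_k$ or $j\notin S$. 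Modulo phases, this reproduces exactly the trace defining $\Gamma_{N-k}(B^{S\setminus S_k})\otimes \Gamma_k(B^{S_k})$, scaled by $\prod_{i\in S_k}\lambda_i^{N-k}$.

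To conclude it remains to verify that the residual phases group correctly. Phases from (\ref{cr1}) arising from swaps between two particles both in $S_k$ (resp.\ both in $S\setminus S_k$) are, by the same token, already absorbed in $\Gamma_k(B^{S_k})$ (resp.\ in $\Gamma_{N-k}(B^{S\setminus S_k})$). Cross-block phases between a particle of $S\setminus S_k$ and one of $S_k$ depend only on the two subsets, and not on the specific positions, because every position in $S\setminus S_k$ lies strictly to the left of every position in $S_k$; they therefore collect into a single configuration-independent scalar $T_{N,k}(S_k)$ of modulus $\prod_{i\in S_k}\lambda_i^{N-k}$. The main delicate point I anticipate is precisely this cross-block phase bookkeeping, but it reduces to a finite count of applications of (\ref{cr1}) whose outcome is fully determined by the partition $S=(S\setminus S_k)\sqcup S_k$, and so does not obstruct the identification of the factorized form.
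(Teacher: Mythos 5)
Your proof is correct and follows essentially the same route as the paper: the decisive observation in both is that each particle of $S_k$ sits $N-k$ sites further from the left boundary in $\psi_N(S)$ than in $\psi_k(S_k)$, so it carries an extra factor $\lambda_i^{N-k}$ while all remaining phase factors depend only on the partition $S=(S\setminus S_k)\sqcup S_k$ and not on the configuration. The only (minor) difference is that the paper justifies the block decomposition by invoking the intersection property together with the once-per-particle structure, whereas you verify the factorization directly by splitting the MPS trace strand by strand in the $2^n$-dimensional representation --- an explicit rendering of the same argument.
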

\begin{proof}
The decomposition follows from the intersection property and the fact that any particle of $S$ can appear at most once in $\psi_N(S)$. The position of any particle of $S_k$ in $\psi_N(S)$ is $N-k$ steps further from the left boundary than in $\psi_{k}(S_k)$, so that each of them is associated with an additional factor $\lambda_i^{N-k}$, up to phases.
\end{proof}

\begin{proof}[Proof of Proposition~\ref{Prop:PVBS_Gap}] \emph{Part i: Lower bound.} 
First we note that the intersection property implies that
\begin{equation*}
G_{[1,N]}-G_{[1,N+1]} = (\idtyty - G_{[1,N+1]})G_{[1,N]}\,,
\end{equation*}
so that a vector $\Phi$ the range of $G_{[1,N]}-G_{[1,N+1]}$ is characterized equivalently by
\begin{equation*}
\Phi\in \caG_{[1,N]} \cap \caG_{[1,N+1]}^\perp\,.
\end{equation*}
Then,
\begin{equation*}
g_{k,N}^2 = \sup_{\substack{\Phi\in \caG_{[1,N]} \cap \caG_{[1,N+1]}^\perp \\ \Vert\Phi\Vert = 1}}\left\Vert G_{[N-k+2, N+1]}\Phi\right\Vert^2 = \sup_{\substack{\Phi\in \caG_{[1,N]} \cap \caG_{[1,N+1]}^\perp \\ \Vert\Phi\Vert = 1}}\braket{G_{[N-k+2, N+1]}\Phi}{\Phi}\,.
\end{equation*}
$\Phi\in\caG_{[1,N]}$ implies $\Phi\in\caG_{[1,N-k+1]}$ by the intersection property. Moreover, $H_{[1,N-k+1]}$ commutes with $G_{[N-k+2, N+1]}$ so that
\begin{equation*}
G_{[N-k+2, N+1]}\Phi \in \caG_{[1,N-k+1]} \cap \caG_{[N-k+2, N+1]} = \caG_{[1,N-k+1]} \otimes \caG_{[N-k+2, N+1]}\,.
\end{equation*}
Hence,
\begin{multline} \label{PVBS:start}
g_{k,N}^2 \leq \sup_{\Phi\in \caG_{[1,N+1]}^\perp, \Vert\Phi\Vert = 1} \sum_{P, P', S\subset\{1,\ldots,n\}}\sum_{i=0}^n \\
\Bigg\vert \frac{\overline{\phi_{k,N}(P,P')}\varphi_N(S,i)}{\Vert \psi_{N-k+1}(P) \Vert \Vert \psi_{k}(P') \Vert\Vert \psi_N(S) \Vert }\braket{\psi_{N-k+1}(P)\otimes\psi_{k}(P')}{\psi_N(S)\otimes e_i} \Bigg\vert\,,
\end{multline}
where $\phi_{k,N}(P,P')= \langle\hat{\psi}_{N-k+1}(P)\otimes\hat{\psi}_{k}(P'),\Phi\rangle$ and $\varphi_N(S,i)= \langle\hat{\psi}_N(S)\otimes e_i,\Phi\rangle$. The normalization $\Vert \Phi \Vert = 1$ and the orthogonality of the vectors $\{\psi_N(S)\}_S$ implies that $\vert \varphi_N(S,i) \vert \leq 1$ and $\vert \phi_{k,N}(P,P') \vert \leq 1$. Moreover, $\Phi \in \caG_{[1,N+1]}^\perp$, so that
\begin{equation}\label{PVBS:orthogonality}
0=\sum_{S',i} \frac{\varphi_N(S',i)}{\Vert \psi_N(S') \Vert \Vert \psi_{N+1}(S) \Vert} \braket{\psi_{N+1}(S)}{\psi_N(S') \otimes e_i} 
\end{equation}
for all $S$. By Lemma~\ref{lma:truncate}, this reads
\begin{align*}
0
&= \sum_{S',i} \sum_{j\in S\cup\{0\}} \frac{\varphi_N(S',i)}{\Vert \psi_N(S') \Vert \Vert \psi_{N+1}(S) \Vert} T_{N+1,1}(j)\braket{\psi_{N}(S^j)\otimes e_j}{\psi_N(S') \otimes e_i} \\
&= \sum_{j\in S\cup\{0\}} \varphi_N(S^j,j) \frac{ \Vert \psi_N(S^j) \Vert}{\Vert \psi_{N+1}(S) \Vert} T_{N+1,1}(j)\,.
\end{align*}
Denoting by $S_{L/R} :=S\cap\Lambda_{L/R}$ and observing that $\Vert \psi_N(S^j) \Vert / \Vert \psi_{N+1}(S) \Vert \, T_{N+1,1}(j) = \caO(\lambda_i^N)$ if $i\in S_L$ by Lemmas~\ref{lma:norms} and~\ref{lma:truncate}, we obtain
\begin{equation} \label{PVBS:NimitOrth}
\abs{\sum_{j\in S_R\cup\{0\}} \varphi_N(S^j;j) \frac{ \Vert \psi_N(S^j) \Vert}{\Vert \psi_{N+1}(S) \Vert} T_{N+1,1}(j)} 
\leq C\varepsilon_L^{N}
\end{equation}
for any $S$ (it vanishes if $S_L=\emptyset$), where $C$ is a constant and
\begin{equation*}
\varepsilon_L = \mathrm{max}\left\{\lambda_i:i \in \Lambda_L \right\}\,.
\end{equation*}
Furthermore,
\begin{equation*}
\braket{\psi_{N-k+1}(P)\otimes\psi_{k}(P')}{\psi_{N}(S)\otimes e_i}
= 1_{i\in P'} 1_{S=P\cup{P'}^i} T_{k,1}(i) T_{N,k-1}({P'}^i) \Vert \psi_{N-k+1}(P) \Vert^2 \Vert \psi_{k-1}({P'}^i) \Vert^2\,.
\end{equation*}
Hence,
\begin{equation*}
g_{k,N}^2 \leq \sup_\Phi \sum_{P,P'}\sum_{i\in P'}  \abs{\overline{\phi_{k,N}(P,P')}\varphi_N(P\cup {P'}^i,i) \frac{\Vert \psi_{N-k+1}(P) \Vert \Vert \psi_{k-1}({P'}^i)\Vert^2}{\Vert \psi_{k}(P') \Vert \Vert \psi_{N}(P\cup{P'}^i) \Vert} T_{k,1}(i) T_{N,k-1}({P'}^i)}
\end{equation*}
Asymptotically,  for $N\geq 2k-1$, the coefficients are
\begin{equation*}
\caO\left(\frac{\prod_{j\in P'\cap\Lambda_L}\lambda_j^k}{\prod_{j\in P\cap\Lambda_R}\lambda_j^k}\right)
\end{equation*}
and hence vanish as $C_1\varepsilon^{k}$ with $\varepsilon$ defined in~(\ref{RateOfConvergence}) in all cases but $P\cap\Lambda_R = \emptyset = P'\cap\Lambda_L$, and in particular $i\in \Lambda_R\cup\{0\}$. The latter terms can be recast as follows
\begin{multline*}
g_{k,N}^2 \leq C_1\varepsilon^k \\
+C_2\sup_\Phi \sum_{S} \abs{\sum_{i\in S_R\cup\{0\}} \varphi_N(S^i,i) \frac{\Vert \psi_{k-1}(S_R^i)\Vert}{\Vert \psi_{k}(S_R) \Vert} T_{k,1}(i)}
\frac{\Vert \psi_{N-k+1}(S\setminus S_R) \Vert \Vert \psi_{k-1}(S_R^i)\Vert \abs{T_{N,k-1}(S_R^i)}}{ \Vert \psi_{N}(S^i) \Vert}\,.
\end{multline*}
It remains to observe that the second quotient is uniformly bounded and that the sum in the absolute value is exponentially small for any $\Phi$ by the orthogonality condition~(\ref{PVBS:NimitOrth}), as
\begin{equation*}
\frac{\Vert \psi_{N}(S^i) \Vert}{\Vert \psi_{N+1}(S) \Vert} T_{N+1,1}(i) = \frac{\Vert \psi_{k-1}(S_R^i)\Vert}{\Vert \psi_{k}(S_R)\Vert}T_{k,1}(i)(1+\caO(\varepsilon^{k-1}))\end{equation*}
for any $N\geq 2k-1$. Finally, since $\varepsilon_L\leq \varepsilon$, there exists a finite constant $K$ such that
\begin{equation*}
g_{k,N}^2 \leq K \varepsilon^{k-1}<1/k\,.
\end{equation*}
for $k$ sufficiently large and all $N\geq 2k-1$. This yields a uniform lower bound on the spectral gap by the remarks in Section~\ref{sub:MPS}.

\emph{Part ii: Upper bound.} We shall use the variational principle
\begin{equation*}
\gamma(\Lambda, \Theta, N) = \inf_{0\neq \psi\perp\mathrm{Ker}(H_{[a,b]})}\left\langle H_{[a,b]}\right\rangle_\psi = \inf_{\psi\notin\mathrm{Ker}(H_{[a,b]})}\left\langle H_{[a,b]}\right\rangle_{\psi-G_{[a,b]}\psi} = \inf_{\psi\notin\mathrm{Ker}(H_{[a,b]})}\frac{\braket{\psi}{H_{[a,b]}\psi}}{\Vert(1-G_{[a,b]})\psi\Vert^2}\,.
\end{equation*}
First, we observe that subspaces with a fixed number of particles are orthogonal to each other and invariant for the Hamiltonian. Let us consider $H_{[a,b]}^i$, the Hamiltonians restricted to the subspace containing exactly one particle of type $i$ and none other. In order to study its spectrum, we center and rescale it to
\begin{equation*}
K_{[a,b]}^i := -(\lambda_i+\lambda_i^{-1})(H_{[a,b]}^i-1)
\end{equation*}
which has matrix elements 
\begin{equation*}
(K_{[a,b]}^i)(x,y) = \begin{cases} 
\ep{-\iu\theta_{i0}}\delta_{x+1,y} + \ep{\iu\theta_{i0}}\delta_{x-1,y} & a < x < b\,, \\
\lambda_i^{-1} \delta_{a,y} + \ep{-\iu\theta_{i0}}\delta_{a+1,y} & x=a\,, \\
\ep{\iu\theta_{i0}}\delta_{b-1,y} + \lambda_i \delta_{b,y} & x=b\,.
\end{cases}
\end{equation*}
On a finite chain $[a,b]$ of length $N$, let $w$ be a variational vector of the form $w(x) = \exp(\iu k x)$ for $a\leq x\leq b$ and any $k\in\bbR$. Then,
\begin{equation*}
\braket{w}{K_{[a,b]}^i w} = (\lambda_i^{-1} + \lambda_i)+(N-1)E(k)\,,\end{equation*}
where $E(k) = 2\cos(\theta_{i0}-k)$. Moreover,
\begin{equation*}
\Vert(1-G_{[a,b]}^i)w\Vert^2 = \Vert w \Vert^2 - \Vert G_{[a,b]}^i w \Vert^2 = N-\frac{1}{\Vert \psi_{[a,b]}^i \Vert^2}\abs{\braket{\psi_{[a,b]}^i}{w}}^2=: N-C_N(k)
\end{equation*}
where $\psi_{[a,b]}^i$ is the ground state given in~(\ref{iGS}). In terms of the actual Hamiltonian, we have
\begin{equation}\label{ExpHam_i}
\frac{\braket{w}{H_{[a,b]}^iw}}{\Vert(1-G_{[a,b]}^i)w\Vert^2} = \frac{N-1}{N-C_N(k)} \left(1-\frac{E(k)}{\lambda_i^{-1} + \lambda_i}\right)
\,.
\end{equation}
Since both $\vert\langle\psi_{[a,b]}^i,w\rangle\vert$ and $\Vert \psi_{[a,b]}^i \Vert^2$ are geometric series, of ratio $\lambda_i\exp{\iu(k-\theta_{i0})}$ respectively $\lambda_i^2$, the quotient $C_N(k)$ is convergent as $N\to\infty$ for any $\lambda_i\neq1$. Since~(\ref{ExpHam_i}) holds for any $k$, the least upper bound is obtained for $k-\theta_{i0} = 0$, in which case $E(\theta_{i0}) = 2$ and
\begin{equation*}
C_N(\theta_{i0}) = \frac{(1+\lambda_i)}{(1-\lambda_i)}\frac{(1-\lambda_i^N)}{(1+\lambda_i^N)}
\leq \frac{(1+\lambda_i)}{\abs{1-\lambda_i}}=C_i
\end{equation*}
for all $N$. Hence,
\begin{equation*}
\frac{\braket{w}{H_{[a,b]}^iw}}{\Vert(1-G_{[a,b]}^i)w\Vert^2} \leq \left(1-\frac{2}{\lambda_i^{-1} + \lambda_i}\right)\left(1 + \frac{C_i-1}{N-C_i}\right)\,.
\end{equation*}
\end{proof}

With the normalization $\lambda_0=1$, the assumption that $\lambda_i\neq1$ for all $i\neq0$ is essential in the proof of the existence of a spectral gap. In fact, the variational upper bound shows that the gap closes indeed as $\vert\lambda_i-1\vert\to 0$ for some $i\neq 0$.

The upper bound of Proposition~(\ref{Prop:PVBS_Gap}) is sharp in each invariant subspace, in the sense that $1-2/(\lambda_i^{-1} + \lambda_i)$ also a strict lower bound for finite $N$, and therefore gives the exact value of the gap in the thermodynamic limit for the Hamiltonian restricted to the one-particle spaces. We continue working with $K^i$ for which the spectral picture there is inverted: The minimal eigenvalue $0$ for the Hamiltonian $H^i$ is the maximal one $\lambda_i+\lambda_i^{-1}$ for its transformed version $K^i$, and the gap bound $1-2/(\lambda_i+\lambda_i^{-1})$ for $H^i$ corresponds to $2$ for $K^i$. Let us write the eigenvalue equation $K_{N+2}^i \psi = E \psi$ using the transfer matrix
\begin{equation*}
T(E) = \begin{pmatrix}
\ep{\iu\theta_{i0}} E & -\ep{2\iu\theta_{i0}} \\ 1 & 0 \end{pmatrix} = \begin{pmatrix}
\ep{\iu\theta_{i0}} & 0 \\ 0 & 1 \end{pmatrix}
\begin{pmatrix}
E & -1 \\ 1 & 0 \end{pmatrix}
\begin{pmatrix}
1 & 0 \\ 0 & \ep{\iu\theta_{i0}} \end{pmatrix}\,.
\end{equation*}
The boundary conditions
\begin{equation*}
\psi(1) = \ep{\iu\theta_{i0}}(E-\lambda_i^{-1})\psi(0) \,,\qquad \psi(N) = \ep{-\iu\theta_{i0}}(E-\lambda_i)\psi(N+1)\,,
\end{equation*}
yield the equation
\begin{equation} \label{TransferEq}
\begin{pmatrix}
1 \\ E-\lambda_i \end{pmatrix} \psi_{out} = \ep{\iu (N+1) \theta_{i0}} \begin{pmatrix}
E & -1 \\ 1 & 0 \end{pmatrix}^{N}
\begin{pmatrix}
E-\lambda_i^{-1} \\ 1 \end{pmatrix} \psi_{in}
\end{equation}
on the chain of length $N+2$. Since
\begin{equation*}
\begin{pmatrix}
E & -1 \\ 1 & 0 \end{pmatrix}^N = 
\begin{pmatrix}
U_N(E/2) & -U_{N-1}(E/2) \\ U_{N-1}(E/2) & -U_{N-2}(E/2)
\end{pmatrix}
\end{equation*}
where $U_N(x)$ are the Chebyshev polynomials of the second kind, (\ref{TransferEq}) reduces to
\begin{equation*}
\left( E-\lambda_i \right)\left[U_{N}(E/2)\left( E-\lambda_i^{-1} \right) - U_{N-1}(E/2) \right]  
= U_{N-1}(E/2)\left( E-\lambda_i^{-1} \right) - U_{N-2}(E/2)\,,
\end{equation*}
and further to
\begin{equation}
\label{Cheb}
U_{N}(E/2)
= \left[\frac{1}{ \left( E-\lambda_i \right)} + \frac{1}{\left( E-\lambda_i^{-1} \right) } \right] U_{N-1}(E/2) - \frac{1}{\left( E-\lambda_i^{-1} \right) \left( E-\lambda_i \right)}U_{N-2}(E/2)\,.
\end{equation}
If $E = \lambda_i+\lambda_i^{-1}\geq 2$, this equation is precisely the defining recursion relation for the Chebyshev polynomials, so that $\lambda_i+\lambda_i^{-1}$ is an eigenvalue for all $N$, and indeed the largest one. Moreover, as $U_N(1) = N+1$, we also note that $E = 2$ does not belong to the spectrum for any finite $N$, as~(\ref{Cheb}) reads
\begin{equation*}
1 = \frac{3-(\lambda_i+\lambda_i^{-1})}{5-2(\lambda_i+\lambda_i^{-1})}
\end{equation*}
whose only real solution is $\lambda_i=1$, which is excluded. Now, we can use the eigenvalues of the transfer matrix
\begin{equation*}
t_\pm = \frac{1}{2}\left(E\pm\sqrt{E^2-4}\right)
\end{equation*}
to diagonalize it away from $|E|=2$ and obtain the following form of~(\ref{TransferEq}):
\begin{align*}
\left(1-t_-\left(E-\lambda_i\right)\right)\psi_{out} &= \ep{i(N+1)\theta_{i0}}t_+^N\left((E-\lambda_i^{-1})-t_-)\right)\psi_{in} \\
\left(-1+t_+\left(E-\lambda_i\right)\right)\psi_{out} &= \ep{i(N+1)\theta_{i0}}t_-^N\left(-(E-\lambda_i^{-1})+t_+)\right)\psi_{in}
\end{align*}
If $E \neq \lambda_i+\lambda_i^{-1}$, the system yields
\begin{equation*}
\frac{1-t_- (E-\lambda_i)}{1-t_+(E-\lambda_i)}\frac{t_+ - (E-\lambda_i^{-1})}{t_- -(E-\lambda_i^{-1})} 
 = \left(\frac{t_+}{t_-}\right)^{N}\,.
\end{equation*}
The left hand side is strictly decreasing for $E\in[2, \lambda_i+\lambda_i^{-1})$ and its limiting value at $E=2$ is $1$, so that
\begin{equation*}
1 > \frac{1-t_- (E-\lambda_i)}{1-t_+(E-\lambda_i)}\frac{t_+ - (E-\lambda_i^{-1})}{t_- -(E-\lambda_i^{-1})}
 = \left(\frac{t_+}{t_-}\right)^{N}\,,
\end{equation*}
which is a contradiction as $t_+>t_-$ for $E>2$. Hence, there are no eigenvalues of $K_{N+2}^i$ in the interval $[2, \lambda_i+\lambda_i^{-1})$. Therefore, if $\gamma^i(N)$ is the spectral gap above the ground state of $H^i_N$, we have that
\begin{equation*}
\inf_{N\in\bbN}(\gamma^i(N)) \geq 1-\frac{2}{\lambda_i+\lambda_i^{-1}}\,.
\end{equation*}
We further conjecture that
\begin{equation*}
\min\left\{ 1-\frac{2}{\lambda_i+\lambda_i^{-1}} : 1\leq i \leq n\right\}
\end{equation*}
is in fact the exact gap of the full system in the thermodynamic limit.

\begin{proof}[Proof of Theorem~\ref{thm:PVBS_phases}]
We recall that if two models are in the same gapped ground state phase, then for any lattice under consideration their ground state spaces are mapped onto each other by a quasi-local automorphism. Suppose that $n_L^1 \neq n_L^2$. Then the dimension of the ground state spaces of the two PVBS models on the half-infinite chain with a left boundary are not equal, and the two sets cannot be related by an automorphism. The same holds for $n_R^1\neq n_R^2$ and an infinite chain with right boundary. Hence, the condition $(n^1_L, n^1_R) = (n^2_L, n^2_R)$ is necessary for equivalence. In order to show that it is also sufficient, we construct a smooth interpolating path of gapped Hamiltonians and conclude by~\cite{BMNS}. The parameters of the two PVBS models are denoted by $(\{\lambda_i\},\{\theta_{ij}\})$, respectively $(\{\mu_i\},\{\varphi_{ij}\})$. By relabeling the basis elements $e_i$, we can assume that
\begin{equation*}
\lambda_1\leq \lambda_2\leq\cdots\leq\lambda_{n_L}<1<\lambda_{n_L+1} \leq \cdots\leq\lambda_{n_L+n_R}\,.
\end{equation*}
Let $(i_1,\ldots,i_{n})$ be the sequence of indices for such that
\begin{equation*}
\mu_{i_1}\leq \cdots\leq\mu_{i_{n_L}}<1<\mu_{i_{n_L+1}}\leq \cdots\leq\mu_{i_{n_L+n_R}}\,.
\end{equation*}
Let $u$ be the unitary map defined by
\begin{equation*}
ue_{i_j} = e_j\,,\qquad ue_0 = e_0\,,
\end{equation*}
and $u(s)\in SU(n+1)$ a smooth path for $s\in[0,1]$, such that $u(1)=1$ and $u(0) = u$. Finally, let $U(s) = \otimes_{x=a}^b u(s)$. In the same interval, let
\begin{align*}
\lambda_j(s) &= s\lambda_j + (1-s)\mu_{i_j} \\
\theta_{jk}(s) &= s\theta_{jk} + (1-s)\varphi_{i_ji_k}
\end{align*}
and $h(s)$ be the PVBS interaction constructed from these $s$-dependent parameters. Then the Hamiltonian
\begin{equation*}
H_{[a,b]}(s):= U(s) \left[\sum_{x=a}^{b-1} h(s)_{x,x+1} \right] U(s)\str
\end{equation*}
satisfies all the properties required for the construction of a quasi-local automorphism. Clearly, $H_{[a,b]}(0) = H_{[a,b]}^{\mu,\varphi}$ and $H_{[a,b]}(1) = H_{[a,b]}^{\lambda,\theta}$. Moreover, the linear interpolation of the parameters generates a smooth family of normalized ground state vectors $\hat\psi_{[a,b]}^S(s)$, which remain orthogonal to each other for $s\in[0,1]$. This implies that the path $h(s)$ is smooth as well. Since all paths $\lambda_j(s)$ interpolate linearly between parameters that are either both larger or both smaller than $1$, we have that $\abs{\lambda_j(s)-1}>0$ for all $j$ and $s\in[0,1]$. Hence, Proposition~\ref{Prop:PVBS_Gap} ensures that the Hamiltonian $\sum_{x=a}^{b-1} h(s)_{x,x+1}$ is gapped, uniformly in the length $\abs{b-a+1}$. Finally, the unitary conjugation preserves the spectrum and therefore the gap, and since $U(s)$ is local, the Hamiltonian $H_{[a,b]}(s)$ has a nearest neighbor interaction.
\end{proof}
%


\section{The ground state phase of the AKLT model}


In the previous section, we introduced a family of simple models that can be classified according to their boundary behavior. They are all gapped with a unique ground state in the bulk, but they support a number of additional edge modes in their ground state spaces as soon as boundaries are present. In the bulk, these states become gapped excitations. 

We believe that these characteristics are the key for the classification of gapped phases beyond symmetry breaking in one dimension. We now illustrate the use of the PVBS classes with a physically relevant example: The AKLT model~\cite{AKLT}. It is a one-dimensional spin-$1$ chain, namely $\caH_x = \bbC^3$ carrying the three dimensional representation of $\mathfrak{su}(2)$, with generators denoted $(S^1, S^2, S^3)$. Its Hamiltonian has a nearest-neighbor interaction
\begin{equation*}
H_{AKLT} = \sum_x \left[ \frac{1}{2} \left(S_x\cdot S_{x+1}\right) + \frac{1}{6}\left(S_x\cdot S_{x+1}\right)^2 + \frac{1}{3} \right]
\end{equation*}
corresponding to the projection onto the spin-$2$ subspace of every pair of neighbouring spins $\bbC^3\otimes\bbC^3$. As such, $H_{AKLT}\geq 0$. Its unique ground state in the thermodynamic limit has a matrix product representation identified in~\cite{FCS} described by the set of matrices
\begin{equation} \label{AKLTws}
w_{-} = \begin{pmatrix}
0 & -\sqrt{2/3} \\  0 & 0
\end{pmatrix}\,,\qquad
w_0 = \begin{pmatrix}
-1/\sqrt{3} & 0 \\  0 & 1/\sqrt{3}
\end{pmatrix}\,,\qquad
w_+ = \begin{pmatrix}
0 & 0 \\ \sqrt{2/3} & 0
\end{pmatrix}\,,
\end{equation}
in a fixed basis $(\chi_l)_{l=\pm}$ of $\bbC^2$. They satisfy the following commutation relations
\begin{gather}\label{AKLTCommutation}
w_0w_- = -w_-w_0\,,\qquad w_0w_+ = -w_+w_0\,,\qquad w_\pm^2 = 0\,,Ê\\
w_-w_+ + w_+w_- = -2w_0^2\,.
\end{gather}
For notational clarity, we denote here the transfer operator and the matrix product map on $\caM_2$ by $\widehat{\bbT}$ and $\Upsilon_N$. The uniqueness of the thermodynamic ground state follows from the fact that $1$ is an isolated eigenvalue. Moreover, and unlike the PVBS situation, there are short range correlations between sites in the thermodynamic limit corresponding to the triply degenerate eigenvalue $-1/3$.

The four-dimensional ground state space of a finite chain, namely the range of $\Upsilon_N$, has the following picture: It is isomorphic to the tensor product of two Bloch spheres that can be thought of as sitting at the two boundaries of the chain. Despite the superficial simplicity of this description, the ground states have a complicated structure that can be understood as follows. In the product basis, the only words with non vanishing coefficients in any ground state are those with alternating `$+$' and `$-$', separated by an arbitrary number of `$0$'. The four ground states differ by whether the first non zero letter is a `$+$' or a `$-$' and whether they come in equal numbers or with one additional `$+$' or `$-$'. In particular, they do not differ in the bulk and they all converge to the same weak-* limit point in the thermodynamic limit. Moreover, the system on the half-infinite line has a two-dimensional ground state space isomorphic to a single Bloch sphere. For a more detailed description of the model, but in the slightly different language of valence bond states, we refer to the original article~\cite{AKLT}. Furthermore, \cite{String, Hidden} identify a `hidden string order' as a rigorous description of the `dilute Neel order' heuristically introduced above.

With this quick introduction, we can now state the main result of this section.
\begin{thm} \label{thm:AKLT_phase}
The AKLT model belongs to the PVBS phase with $n_L = n_R = 1$.
\end{thm}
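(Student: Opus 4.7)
The strategy is to exhibit a continuously differentiable family of translation-invariant nearest-neighbor Hamiltonians $H(s)$, $s\in[0,1]$, with $H(0)$ the AKLT Hamiltonian and $H(1)$ a PVBS Hamiltonian with $n_L=n_R=1$, satisfying a uniform positive lower bound on the spectral gap along the path. By the spectral-flow construction of \cite{BMNS}, such a curve yields a quasi-local automorphism intertwining the ground-state spaces, and combined with Theorem \ref{thm:PVBS_phases} this identifies the AKLT model with the PVBS phase with $n_L=n_R=1$.

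First I fix the target: a PVBS model with $n=2$ and parameters $\lambda_1<1<\lambda_2$, so that exactly one particle binds to each edge. Both the AKLT and PVBS two-site ground-state subspaces $\caG_2\subset\bbC^3\otimes\bbC^3$ are four-dimensional. Following the philosophy emphasized in the introduction, the smooth deformation is carried out at the level of the ground-state subspaces and the Hamiltonians themselves rather than at the level of the generating MPS matrices. I therefore choose a smooth family $\caG_2(s)$ of four-dimensional subspaces interpolating $\caG_2(0)=\caG_2^{\mathrm{AKLT}}$ and $\caG_2(1)=\caG_2^{\mathrm{PVBS}}$, for example by rotating one orthonormal basis into the other through a smooth path in the Grassmannian, and define $h(s)$ as the orthogonal projection onto $\caG_2(s)^\perp$.

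The technical core is to verify that, for every $s\in[0,1]$, the finite-volume kernel of $H_{[a,b]}(s)=\sum_{x=a}^{b-1} h_{x,x+1}(s)$ coincides with the intersection
\[
\caG_N(s)=\bigcap_{x=0}^{N-2} (\bbC^3)^{\otimes x}\otimes \caG_2(s)\otimes (\bbC^3)^{\otimes(N-2-x)},
\]
and has dimension $4$. The natural route is to exhibit a smooth MPS representation $\{v_i(s)\}_{i=0}^2$ of $\caG_N(s)$ in some common, possibly non-minimal, dimension $k$ and to invoke Lemma \ref{lma:IntProp}. This demands that the deformation preserve the relations $v_i(s)^2=0$ for $i=1,2$ and keep the monomials $\{v^I(s):I\in S^{(r)}\}$ linearly independent for all $r$. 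With the intersection property in hand, the gap is controlled by the martingale bound (\ref{martingale})--(\ref{GapMartingale}): the argument of Proposition \ref{Prop:PVBS_Gap}, together with the appropriate analogues of Lemmas \ref{lma:truncate} and \ref{lma:norms} along the path, gives $g_{k,N}(s)\leq K\varepsilon(s)^{k-1}$, where $\varepsilon(s)$ is the maximal subleading-eigenvalue modulus of the transfer operator. Since at $s=0$ the AKLT transfer operator has subleading eigenvalue $-1/3$ and at $s=1$ the PVBS transfer operator has subleading modulus $\max\{\lambda_1,\lambda_2^{-1}\}<1$, a judicious choice of path keeps $\varepsilon(s)$ bounded strictly below one, and choosing $k$ large enough yields $g_{k,N}(s)<1/\sqrt{k}$ uniformly in $s$ and $N$, hence a uniform gap by \eq{GapMartingale}.

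The main obstacle lies in the construction of this path. The AKLT relations (\ref{AKLTCommutation}) include the non-homogeneous identity $w_-w_++w_+w_-=-2w_0^2$, which has no direct counterpart in the PVBS family (\ref{cr1})--(\ref{cr2}), and the AKLT minimal MPS lives in dimension $k=2$ whereas the PVBS representation used in Section \ref{Sec:PVBS} lives in $k=4$. This is precisely the delicate phenomenon flagged in the introduction: smoothly connecting MPS of different minimal dimensions forces the use of non-minimal representations whose transfer operators can develop additional peripheral eigenvalues, and the pointwise parent-Hamiltonian construction need not be continuous. Designing a path $\caG_2(s)$ that (i) stays four-dimensional, (ii) admits a continuous MPS resolution satisfying the hypotheses of Lemma \ref{lma:IntProp}, and (iii) leaves the subleading transfer-operator spectrum uniformly inside the open unit disc, is the heart of the proof; all remaining ingredients are adaptations of the PVBS analysis already carried out.
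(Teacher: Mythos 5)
Your high-level strategy coincides with the paper's: build a $C^1$ path of nearest-neighbor interactions from the AKLT interaction to a PVBS interaction with $n_L=n_R=1$, prove a gap bound uniform in $s$ and in the volume via the martingale method, and invoke the spectral flow of \cite{BMNS}. But as written this is a plan, not a proof: the two items you yourself call ``the heart of the proof'' are exactly what the paper has to work for, and you leave both open. First, the path. An arbitrary smooth rotation of $\caG_2^{\mathrm{AKLT}}$ into $\caG_2^{\mathrm{PVBS}}$ in the Grassmannian gives no control on $\mathrm{Ker}(H_{[a,b]}(s))$, its dimension, or the intersection property; the paper instead writes down an explicit one-parameter family of five vectors (\ref{G2}) (equivalently the deformed algebra (\ref{sCommutation1})--(\ref{sCommutation2}) with the $2\times2$ matrices (\ref{wMatrices})), verifies the hypotheses of Lemma~\ref{lma:IntProp} for it, identifies the kernel with $\mathrm{Ran}(\Upsilon_N(s))$ for $s>0$, and reaches the PVBS endpoint only as a limit of ground-state spaces, since $w_+$ vanishes there and the two-dimensional representation degenerates. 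None of this is supplied by ``a judicious choice of path.''

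Second, and more seriously, your uniformity criterion --- keep the subleading transfer-operator spectrum strictly inside the unit disc --- is not sufficient, and treating the gap estimate as ``an adaptation of the PVBS analysis'' misses the main analytic novelty of this section. Along the paper's path the subleading eigenvalues $-f(s)\cos^2(s)$ and $f(s)^2\cos^2(s)-\sin^2(s)$ are uniformly bounded away from modulus $1$, yet the standard martingale estimate only yields $g_{k,N}(s)^2\leq C(s)\,\varepsilon^{k-1}$ with $C(s)$ diverging like $\sin^{-3}(s)$ as the ground state becomes a product: the invariant state $\varrho(s)$ loses faithfulness ($\varrho_2(s)\to0$) and the normalized basis matrices $A^3_N(s),A^4_N(s)$ carry factors $g(s)/\sin(s)$. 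A bound uniform near the PVBS endpoint therefore requires the separate perturbative argument of Lemma~\ref{lma:Commutation}(ii) --- expansion in powers of $\sin(s)$, cancellation of the would-be singular terms via the a priori bound $g_{k,N}\leq 1$, and the technical choice of $f$ constant near the endpoint --- which is precisely the step your plan defers (and your phrasing in terms of ``additional peripheral eigenvalues'' misidentifies the obstruction). Until you either carry out such an argument or provide another mechanism for uniformity at the degenerate endpoint, the theorem is not established.
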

In other words, there exists a quasi-local automorphism of the observable algebra that maps the ground state spaces of the AKLT model on finite, half-infinite and infinite chains onto those of the PVBS class with one left and one right boundary particle. In particular, the intricate bulk state of the AKLT chain is equivalent to the simple product state common to all PVBS models. 

The importance of the locality property of the automorphism comes into sharp focus when one considers the ground states on the infinite and half-infinite chains. In finite volume, any unitary will preserve the dimension of the ground state space. The non-local unitary transformation introduced by Kennedy and Tasaki in~\cite{Hidden} to reveal the hidden string order transforms the 4 AKLT  ground states into 4 translation invariant product states, which leads to 4 distinct bulk ground states in the thermodynamic limit. This example shows that non-local unitary transformations do not preserve the structure of the bulk ground state(s) and clearly would not be useful to classify gapped ground state phases.

To prove the quasi-local automorphic equivalence of Theorem~\ref{thm:AKLT_phase}, it suffices again to construct a smooth path of uniformly gapped Hamiltonians interpolating between the two models. We shall first construct a path of interactions, show that the Hamiltonians have matrix product ground states and use this property to prove the existence of a gap.

In~\cite{PhysRevB.86.035149}, we presented the result without complete proof. In order to make the comparison with the current article easier, we note the change of notation from $\lambda(s)$ and $\mu(s)$ there to $f(s)$ and $g(s)$ here.


\subsection{A path of Hamiltonians}
\label{sub:HamPath}

In order to obtain a smooth path of Hamiltonians, we define a smooth path of nearest neighbor, translation invariant interactions interpolating between the AKLT interaction and the PVBS interaction for $n_L = n_R = 1$. Both extreme points being projections, it suffices to define a smooth path of vectors, and to use their span as the range of a path of projections. Let $s_0$ be defined by $\sin(s_0) =\sqrt{2/3}$. For $s\in[0,s_0]$, let $h(s)$ be the projection onto the five dimensional space spanned by the vectors
\begin{equation} \label{G2}
\begin{split}
&e_+\otimes e_+ \,,\qquad e_-\otimes e_- \,,\qquad e_+\otimes e_0 + f(s) e_0\otimes e_+ \,,\qquad e_0\otimes e_- + f(s) e_-\otimes e_0 \,, \\
&e_+\otimes e_- + g(s)\frac{\sin(s)}{\cos^2(s)}e_0\otimes e_0 + f^2(s)e_-\otimes e_+\,,
\end{split}
\end{equation}
where
\begin{equation} \label{IsoCond}
|g(s)|^2 + |f(s)|^2\cos^2(s) = 1\,.
\end{equation}
\begin{lemma} \label{lem:HScont}
Assume that $s\mapsto f(s)$ is $C^1([0,s_0])$, that $|f|\leq 1$ and let $g(s)$ be a solution of~(\ref{IsoCond}). Then, for any $a<b\in\bbZ$, the map
\begin{equation*}
s\longmapsto H_{[a,b]}(s) = \sum_{x=a}^{b-1}h(s)_{x,x+1}
\end{equation*}
is of class $C^1([0,s_0]; \caB(\bbC^{3(b-a+1)}))$.
\end{lemma}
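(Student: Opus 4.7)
The plan is to reduce the $C^1$-regularity of the finite-volume Hamiltonian to that of the single-bond projection $h(s)$ on $\bbC^3\otimes\bbC^3$. Since $H_{[a,b]}(s)=\sum_{x=a}^{b-1}h(s)_{x,x+1}$ is a finite sum of translates of one operator embedded into a fixed tensor-product space, it suffices to show $s\mapsto h(s)\in C^1([0,s_0];\caB(\bbC^3\otimes\bbC^3))$.

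By definition, $h(s)$ is the orthogonal projection onto the span of the five vectors listed in~(\ref{G2}). I would collect these vectors as the columns of a $9\times 5$ matrix $V(s)$ in the standard basis of $\bbC^3\otimes\bbC^3$ and use the standard formula
$$h(s)\;=\;V(s)\bigl(V(s)\str V(s)\bigr)^{-1}V(s)\str.$$
The entries of $V(s)$ are polynomial functions of $f(s)$, $g(s)$, and the scalar $\sin(s)/\cos^2(s)$. On the compact interval $[0,s_0]$ one has $\cos(s)\geq\cos(s_0)=1/\sqrt{3}>0$, so $\sin/\cos^2$ is $C^\infty$; together with the hypothesis $f\in C^1$ and a $C^1$-selection of $g$ satisfying~(\ref{IsoCond}) (a smooth positive branch of the square root), this makes $s\mapsto V(s)$, and hence $s\mapsto V(s)\str V(s)$, a $C^1$ matrix-valued map.

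The key step is to verify that the Gram matrix $V(s)\str V(s)$ is invertible for every $s\in[0,s_0]$, i.e.\ that the five spanning vectors are linearly independent throughout. I would obtain this by decomposing $\bbC^3\otimes\bbC^3$ into the five weight spaces of $S^3\otimes\Id+\Id\otimes S^3$, with total weights $+2,+1,0,-1,-2$. Inspection of~(\ref{G2}) shows that the five vectors lie, respectively, in these five distinct weight spaces and each is manifestly non-zero (the weights $\pm 2$ are standard basis vectors, the weights $\pm 1$ contain $e_\pm\otimes e_0$ with unit coefficient, and the weight-$0$ vector contains $e_+\otimes e_-$ with unit coefficient). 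Pairwise orthogonality together with non-vanishing yields linear independence, so $V(s)\str V(s)$ is positive-definite uniformly in $s\in[0,s_0]$.

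Given this uniform invertibility, $(V\str V)^{-1}$ is $C^1$ in $s$ by smoothness of matrix inversion on $\mathrm{GL}_5(\bbC)$, and the product $V(V\str V)^{-1}V\str=h(s)$ inherits the $C^1$-regularity; summing the finitely many translates then gives the claim. The only genuine obstacle I see is ensuring a $C^1$ selection of $g$ satisfying~(\ref{IsoCond}); this is unproblematic as long as $|f(s)|\cos(s)<1$ on $[0,s_0]$, which is the situation envisaged in the construction of the interpolating path in the next subsection.
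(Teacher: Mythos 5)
Your proposal is correct and follows essentially the same route as the paper: reduce to the single-bond projector $h(s)$, observe that the five spanning vectors of~(\ref{G2}) are $C^1$ in $s$ and mutually orthogonal (hence uniformly linearly independent), and conclude that the projection, and thus the finite sum $H_{[a,b]}(s)$, is $C^1$. The only cosmetic differences are your use of the Gram-matrix formula $V(V\str V)^{-1}V\str$ and the weight-space argument for orthogonality, where the paper simply invokes the orthogonality of the vectors directly; your explicit remark about choosing a $C^1$ branch of $g$ satisfying~(\ref{IsoCond}) is a point the paper glosses over and is indeed harmless in the intended regime.
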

\begin{proof}
If $f(s)$ is $C^1([0,s_0])$ then all of the vectors above are strongly continuously differentiable in $[0,s_0]$, and since they are orthogonal to each other for all $s$, the projector $h(s)$ is continuously differentiable in $[0,s_0]$ in the strong operator topology. Hence $s\mapsto H_{[a,b]}(s)$ is strongly $C^1$, and therefore also in norm for all $a<b\in\bbZ$.
\end{proof}
Let now $f$ be an monotonely increasing real function satisfying the boundary conditions
\begin{equation*}
f(0) = 1/\sqrt{2}\,,\qquad f(s_0) = 1\,,
\end{equation*}
At $s=s_0$, the interaction is the AKLT interaction. At $s=0$ on the other hand, the range~(\ref{G2}) of the interaction corresponds precisely to the vectors~(\ref{phiij},\ref{phiii}) orthogonal to the ground state space of the PVBS model with parameters
\begin{equation*}
\lambda_+ = \sqrt{2}\,,\qquad \lambda_-=1/\sqrt{2}\,,
\end{equation*}
and
\begin{equation*}
\theta_{+-} = \theta_{0+} = \theta_{0-} = \pi\,.
\end{equation*}
Hence, the Hamiltonian at $s=0$ is a PVBS model with $n_L = n_R = 1$ and the path of Hamiltonians defined in Lemma~\ref{lem:HScont} interpolates as announced. It remains to show that these operators have a uniform spectral gap above the ground states.


\subsection{A path of matrix product states}
\label{sub:PathFar}

In this section, we show that the Hamiltonians can be constructed pointwise for $s>0$ as the parent Hamiltonians (see~\cite{VBSHam}) of an interpolating family of matrix product states. For $s\in[0,s_0]$, we define the $2\times 2$ matrices
\begin{equation} \label{wMatrices}
w_{-}(s) = \begin{pmatrix}
0 & -g(s) \\  0 & 0
\end{pmatrix}\,,\qquad
w_0(s) = -\cos(s)\begin{pmatrix}
1 & 0 \\ 0 & -f(s)
\end{pmatrix}\,,\qquad
w_+(s) = \begin{pmatrix}
0 & 0 \\ \sin(s) & 0
\end{pmatrix}\,,
\end{equation}
where $f(s)$ and $g(s)$ were introduced above. For all $s$, let $\bbT_A(s;B)$ be the completely positive map corresponding to these matrices.
\begin{lemma}\label{lem:MPS}
Let $f$ be as in Lemma~\ref{lem:HScont} and $g$ be a solution of~(\ref{IsoCond}). Then for $s\in(0,s_0]$, there exists an ergodic purely generated MPS on the spin-$1$ chain corresponding to the matrices $\{w_i(s)\}_{i=\pm,0}$.
\end{lemma}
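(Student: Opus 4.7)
The plan is to verify, directly by a $2\times 2$ matrix computation, the two ingredients of an ergodic purely generated MPS as laid out in Section~\ref{sub:MPS}: first, that the map $\bbT(s;A\otimes B) := \sum_{ij} A_{ij}\, w_i(s)\str B\, w_j(s)$ is pure with invariant fixed elements $e\in\caM_2$ and $\rho\in\caM_2\str$; and second, that $1$ is an isolated, non-degenerate eigenvalue of the transfer operator $\widehat{\bbT}(s)$ whose modulus strictly dominates the rest of the spectrum.

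Purity is essentially algebraic: from the explicit form of $w_i(s)\str w_i(s)$ one reads off
\begin{equation*}
\sum_i w_i(s)\str w_i(s) = \mathrm{diag}\bigl(\sin^2(s)+\cos^2(s),\; \abs{g(s)}^2+f^2(s)\cos^2(s)\bigr) = \idtyty,
\end{equation*}
which is exactly the isometry condition~(\ref{IsoCond}). Hence $V(s) := \sum_i e_i\otimes w_i(s)$ is an isometry $\bbC^2\to\bbC^3\otimes\bbC^2$, the map is pure, and $e = \idtyty$ is a fixed point of $\widehat{\bbT}(s)$. For the spectral analysis one exploits that $\widehat{\bbT}(s)$ respects the decomposition of $\caM_2$ into diagonal and off-diagonal parts (since $w_\pm$ have purely off-diagonal support and $w_0$ is diagonal). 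On the off-diagonal part, $\widehat{\bbT}(s)$ acts as multiplication by $-f(s)\cos^2(s)$ on each of $E_{12}$ and $E_{21}$; parametrizing diagonals as $B = \mathrm{diag}(a,d)$, the restriction of $\widehat{\bbT}(s)$ is represented by the matrix
\begin{equation*}
\begin{pmatrix}\cos^2(s) & \sin^2(s)\\ \abs{g(s)}^2 & f^2(s)\cos^2(s)\end{pmatrix},
\end{equation*}
which has $1$ as an eigenvalue (consistent with $e=\idtyty$ being fixed) and, by Vieta's formula combined with~(\ref{IsoCond}), second eigenvalue $f^2(s)\cos^2(s)-\sin^2(s)$. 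The full spectrum is therefore
\begin{equation*}
\mathrm{Spec}(\widehat{\bbT}(s)) = \bigl\{1,\; f^2(s)\cos^2(s)-\sin^2(s),\; -f(s)\cos^2(s),\; -f(s)\cos^2(s)\bigr\}.
\end{equation*}

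For $s\in(0,s_0]$, the hypotheses $\abs{f}\leq 1$ and $s_0 < \pi/2$ (since $\sin(s_0) = \sqrt{2/3}$) give $\abs{f(s)\cos^2(s)}\leq\cos^2(s)<1$ and $-1<-\sin^2(s)\leq f^2(s)\cos^2(s)-\sin^2(s)\leq\cos(2s)<1$; thus $1$ is a non-degenerate isolated eigenvalue strictly dominating the subleading ones, which is precisely the peripheral-spectrum condition for ergodicity. The invariant state $\rho$ is then obtained from the analogous spectral analysis of the dual completely positive map $\tilde{\bbT}(s;B) := \sum_i w_i(s) B w_i(s)\str$, whose spectrum coincides with that of $\widehat{\bbT}(s)$ and whose fixed point can be solved explicitly as $\sigma(s) \propto \mathrm{diag}(\abs{g(s)}^2,\sin^2(s)) \geq 0$, defining a state on $\caM_2$ after normalization. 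The only real obstacle is the bookkeeping ensuring that no subleading eigenvalue touches the unit circle on $(0,s_0]$; this is exactly what fails at the endpoint $s=0$, where $w_+$ vanishes, $\sigma$ collapses onto a rank-one projection, and the representation ceases to be minimal -- consistent with the fact that the $s=0$ interaction is the PVBS model with product bulk state and one-dimensional minimal MPS representation.
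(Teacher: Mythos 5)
Your proposal is correct and follows essentially the same route as the paper's proof: the isometry condition~(\ref{IsoCond}) gives the fixed point $\idtyty$, the diagonal/off-diagonal block decomposition yields exactly the eigenvalues $1$, $-f(s)\cos^2(s)$ (twice) and $f^2(s)\cos^2(s)-\sin^2(s)$ listed in the paper, with all subleading moduli strictly below $1$ for $s\in(0,s_0]$, and your fixed point $\sigma(s)\propto\mathrm{diag}(\abs{g(s)}^2,\sin^2(s))$ of the dual map is precisely the paper's left eigenvector $\varrho(s)$. The only cosmetic difference is that the paper simply exhibits the four right eigenvectors and also records explicitly that $\varrho(s)$ is faithful for $s>0$, a point your remark about the rank collapse at $s=0$ covers implicitly.
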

\begin{proof}
First, we note that Condition~(\ref{IsoCond}) ensures that $W(s)$ is an isometry, or equivalently that $\idtyty$ is an eigenvector of $\widehat{\bbT}(s)$ for the eigenvalue $1$, as
\begin{equation*}
\widehat{\bbT} (s;\idtyty) = \sum_{i=\pm,0} w_i(s)\str w_i(s) = \idtyty\,.
\end{equation*}
In fact, all right eigenvectors are given by
\begin{equation*}
R_1(s)=\idtyty\,,\qquad
R_2(s)=\sigma^+\,, \qquad 
R_3(s)=\sigma^-\,, \qquad
R_4(s)=\varrho_2(s) P - \varrho_1(s)Q
\end{equation*}
where $\varrho_2(s) = \left(\sin^2(s) / g^2(s)\right) \varrho_1(s)$. The associated eigenvalues are given by $t_1(s)=1$, $t_2(s)=t_3(s)=-f(s)\cos^2(s)$ and $t_4(s)=\cos^2(s)-g(s)^2 = f(s)^2\cos^2(s)-\sin^2(s)$. The assumptions on $f(s)$ ensure that $1$ is non degenerate for all $s\in[0,s_0]$. The corresponding unique left eigenvector is given by $\varrho(s) = \varrho_1(s) P + \varrho_2(s)Q$. With the normalization, we get
\begin{equation*}
\varrho_1(s) = \frac{g(s)^2}{g(s)^2 + \sin^2(s)}\,,\qquad \varrho_2(s) = \frac{\sin^2(s)}{g(s)^2 + \sin^2(s)} = 1-\varrho_1(s)\,,
\end{equation*}
and the matrix $\varrho$ defines a faithful state on the auxiliary algebra $\caM_2$ for $s>0$. Hence, the triple $(\bbT,\idtyty,\varrho)$ generates a translation invariant state on the chain algebra, which is moreover extremal since the peripheral spectrum of $\widehat{\bbT}(s)$ is trivial for all $s\in[0,s_0]$.
\end{proof}
In fact, the transfer operator $\widehat{\bbT}(s)$ can be completely diagonalized
\begin{equation*} 
\widehat{\bbT}(s;A) = \sum_{i=1}^4 t_i(s)\, \Tr(L_i(s) A) R_i(s) \,,
\end{equation*}
with the left eigenvectors $L_i(s)$ given by
\begin{equation*}
L_1(s)=\varrho(s)\,,\qquad 
L_2(s)=R_3(s) \qquad L_3(s)=R_2(s)\,,\qquad
L_4(s)=P-Q = \sigma^z \,.
\end{equation*}
The normalization has been chosen so that $\Tr(L_iR_i) = 1$.

The matrices $\{w_i(s)\}_{i=0}^n$ form a representation of the following deformation of the algebra~(\ref{AKLTCommutation}) of the AKLT model into that of the PVBS model:
\begin{gather}\label{sCommutation1}
f(s)w_0(s)w_-(s) = -w_-(s)w_0(s)\,,\qquad w_0(s)w_+(s) = -f(s)w_+(s)w_0(s)\,,\qquad w_\pm(s)^2 = 0\,,Ê\\
w_-(s)w_+(s) + f(s)^2 w_+(s)w_-(s) = -\frac{g(s)\sin(s)}{\cos^2(s)}w_0(s)^2\,.\label{sCommutation2}
\end{gather}
Although the representation~(\ref{wMatrices}) holds for all $s$, including $s=0$, it fails to be faithful at the boundary point as $w_+(0) = 0$. A non-trivial four-dimensional representation of the algebra at that point was given in Section~\ref{Sec:PVBS}. Interestingly, the following lemma shows that the discontinuity in the dimension of the minimal non-trivial representation of the smooth deformation~(\ref{sCommutation1},\ref{sCommutation2}) is necessary.
\begin{lemma}
Let $v_-, v_0, v_+$ be a two-dimensional representation of the commutation relations~(\ref{cr1},\ref{cr2}) with $n=2$ and $\lambda_- \neq \lambda_+$. Then $v_-=0$ or $v_+=0$.
\end{lemma}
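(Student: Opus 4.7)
The approach is an argument by contradiction: assume that both $v_-$ and $v_+$ are nonzero and aim to conclude $\lambda_- = \lambda_+$. Since $v_\pm^2 = 0$ in $\caM_2$, each of $v_\pm$ is a rank-one nilpotent, so I may write $v_\pm = \ket{a_\pm}\bra{b_\pm}$ with $\braket{b_\pm}{a_\pm} = 0$.

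The first key step is a trace identity. The quasi-commutation (\ref{cr1}) reads $v_-v_+ = q\, v_+v_-$ with $q = \ep{\iu\theta_{-+}}\lambda_-/\lambda_+$, so $|q| = \lambda_-/\lambda_+ \neq 1$ by hypothesis and in particular $q \neq 1$. Cyclicity of the trace then yields $(1-q)\tr(v_-v_+) = 0$, hence
\begin{equation*}
\tr(v_-v_+) = \braket{b_-}{a_+}\,\braket{b_+}{a_-} = 0.
\end{equation*}
At least one of these inner products vanishes. Assuming $\braket{b_-}{a_+} = 0$ (the other case is symmetric), together with $\braket{b_-}{a_-}=0$, forces $\ket{a_+}\propto\ket{a_-}$ in the two-dimensional ambient space. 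Then $v_-v_+ = 0$, and the commutation in turn gives $v_+v_- = 0$, i.e.\ $\braket{b_+}{a_-} = 0$, from which the same dimension count yields $\bra{b_+}\propto\bra{b_-}$. Consequently $v_-$ and $v_+$ share a common range and kernel, so $v_+ = \alpha\,v_-$ for some $\alpha \neq 0$, and a change of basis places them in the form $v_\pm = \gamma_\pm \sigma^+$ with $\gamma_\pm \neq 0$.

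The last step extracts the contradiction from the commutations with $v_0$, which now both reduce to $v_0\sigma^+ = \kappa_\pm \sigma^+ v_0$ with $\kappa_\pm = \ep{\iu\theta_{0\pm}}\lambda_\pm^{-1}$. Writing $v_0$ as a $2\times2$ matrix with diagonal entries $a,d$ and off-diagonal entries $b$ (upper-right) and $c$ (lower-left), a direct comparison of entries shows that each relation forces $c=0$ and $a=\kappa_\pm d$. Since $\lambda_-\neq\lambda_+$ implies $|\kappa_-|\neq|\kappa_+|$, the two equalities for $a$ are incompatible unless $d = 0$, in which case $a = 0$ as well and $v_0 = b\,\sigma^+$.

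The main obstacle I anticipate is precisely this residual degenerate configuration, in which all three generators lie in $\bbC\,\sigma^+$: every product of generators vanishes, so the quasi-commutations become $0 = q\cdot 0$ and hold trivially despite $v_\pm \neq 0$. I would dispose of it by observing that the image of such a representation is one-dimensional and $v_0^2 = 0$, so it does not constitute a genuine two-dimensional representation in the sense relevant to the ambient discussion of faithful $2^n$-dimensional PVBS realizations; discarding it leaves $d \neq 0$, whence $\kappa_- = \kappa_+$ and, upon taking moduli, $\lambda_- = \lambda_+$, contradicting the hypothesis.
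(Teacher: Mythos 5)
Your argument is correct and is essentially the paper's: the engine in both is the cyclicity-of-the-trace identity, which together with $\lambda_-/\lambda_+\neq 1$ forces $\Tr(v_-v_+)=0$, after which the rank-one nilpotent structure in two dimensions forces $v_+$ to be a nonzero multiple of $v_-$ (both unitarily equivalent to multiples of $\sigma^+$), and the $(0,\pm)$ relations then collide with $\lambda_-\neq\lambda_+$. The only real difference is the endgame. The paper uses the $v_0$-relations at the very start, asserting that they already imply $v_-\neq v_+$, and writes $v_+=\beta U\str v_- U$ with $U\neq\idtyty$, reaching the contradiction once the trace identity forces $U$ diagonal; you instead postpone the $v_0$-step and do it entrywise, which is why you run into the residual configuration $v_-,v_0,v_+\in\bbC\,\sigma^+$. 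You are right that this configuration satisfies all the relations vacuously (every product vanishes) with $v_\pm\neq 0$, so it is a genuine exception to the statement read completely literally; note that the paper's proof skips over it as well, since its opening claim that the $(\pm,0)$ relations force $v_-\neq v_+$ tacitly uses $v_0v_-\neq 0$, which fails exactly there. Your disposal of it by a nontriviality requirement is therefore consistent with the source's implicit intent: in the context where the lemma is used (two-dimensional representations of the deformed algebra with $w_0(s)$ invertible, cf.\ the hypothesis $w_0^3\neq 0$ of the preceding lemma), one has $v_0v_\pm\neq 0$, which excludes the degenerate case and lets your computation ($c=0$, $a=\kappa_\pm d$, hence $\kappa_-=\kappa_+$ and $\lambda_-=\lambda_+$) close the argument. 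The remaining steps of your plan (the trace computation, the dimension counts giving $a_+\propto a_-$ and $b_+\propto b_-$, and the entrywise analysis of $v_0\sigma^+=\kappa_\pm\sigma^+v_0$) are all sound.
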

\begin{proof}
Assume that $v_-\neq0$ and $v_+\neq0$. The condition $\lambda_- \neq \lambda_+$ together with the commutation relations for $(-,0)$ and $(+,0)$ imply that $v_- \neq v_+$. In two dimensions, any nonzero nilpotent matrix must be unitarily equivalent to $\sigma^+$. Hence, we let $v_- = \alpha\sigma^+$, $v_+ = \beta U\str v_- U$, with $U\neq\idtyty$ and $\alpha\neq 0$, $\beta\neq 0$. Taking the trace of the relation for $(-,+)$ and using cyclicity yields $\Tr(v_-v_+) = 0$. This implies $(v_+)_{21}=0$, so that $\overline{U_{12}}U_{21} = 0$. Hence, by unitarity, $U=\idtyty$, which is a contradiction.
\end{proof}

At this point, one may also wonder why the commutation of $w_-(s)$ with $w_0(s)$ and that of $w_+(s)$ with $w_0(s)$ involve parameters that are inverses of each other. Indeed, at the PVBS point, $\lambda_+$ and $\lambda_-$ can be chosen independently of each other. Here again, it turns out that it is in fact dictated by the form of the commutation relations themselves.
\begin{lemma}\label{lma:CommutationSymmetry}
We consider the following quadratic relations:
\begin{gather*}
w_-w_0 = C_{-0}w_0 w_-\,,\qquad w_+w_0 = C_{+0}w_0 w_+ \,,\qquad w_\pm^2 = 0\,,Ê\\
w_-w_+ = C_{-+}w_+ w_- + D w_0^2\,,
\end{gather*}
where all coefficients $C_{ij}$ and $D$ are non zero. If there exists a faithful representation such that $w_0^3 \neq 0$, then $C_{+0} = 1/C_{-0}$. In particular,
\begin{equation*}
\left[w_+ w_-, w_i \right] = 0\,,\qquad i=0, \pm\,.
\end{equation*}
\end{lemma}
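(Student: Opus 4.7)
The proof is a short associativity calculation applied to the non-trivial quartic relation $w_- w_+ - C_{-+} w_+ w_- = D w_0^2$. The plan is to multiply this identity by $w_0$ on both sides and extract a consistency condition by using the two-generator relations to migrate $w_0$ past both $w_-$ and $w_+$ in one of the two products.

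Multiplying the quartic relation on the right by $w_0$ yields
\[
w_- w_+ w_0 - C_{-+}\, w_+ w_- w_0 = D\, w_0^3.
\]
Multiplying the same identity on the left by $w_0$ and then twice applying the rules $w_0 w_\pm = C_{\pm 0}^{-1} w_\pm w_0$ to push $w_0$ through each factor gives
\[
w_0 w_- w_+ - C_{-+}\, w_0 w_+ w_- = (C_{-0} C_{+0})^{-1}\bigl( w_- w_+ - C_{-+}\, w_+ w_-\bigr) w_0 = (C_{-0} C_{+0})^{-1} D\, w_0^3.
\]
The left-hand side of this last display is also equal to $D w_0^3$ (it is just the original relation multiplied by $w_0$ on the left), so
\[
\bigl(1 - (C_{-0} C_{+0})^{-1}\bigr) D\, w_0^3 = 0.
\]
The hypotheses $D \neq 0$ and $w_0^3 \neq 0$ in the faithful representation force $C_{+0} C_{-0} = 1$.

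Once this is in hand, the commutation $[w_+ w_-, w_0] = 0$ is a one-line computation: $(w_+ w_-) w_0 = C_{-0}\, w_+ w_0 w_- = C_{-0} C_{+0}\, w_0 w_+ w_- = w_0 (w_+ w_-)$. The commutators $[w_+ w_-, w_\pm]$ are handled by the same substitution strategy, using $w_\pm^2 = 0$ to kill the like-power term in each commutator and then invoking the quartic relation together with $C_{+0} C_{-0} = 1$ to transport the residual factors of $w_0^2$ through $w_\pm$.

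There is no analytical difficulty here; the whole argument is bookkeeping around one simple associativity trick. The essential content of the hypotheses is that the faithful representation with $w_0^3 \neq 0$ prevents the quartic relation from being degenerate: without it the scalar $D w_0^3$ could vanish independently, and the compatibility $C_{+0} C_{-0} = 1$ between the two two-generator scalars could not be extracted.
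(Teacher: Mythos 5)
Your derivation of $C_{+0}=1/C_{-0}$ is correct and is essentially the paper's own argument: the paper evaluates $w_-w_+w_0$ in two ways (pushing $w_0$ to the left first, versus applying the quartic relation first), which is the same associativity bookkeeping as your comparison of the left- and right-multiplied quartic relation, and it uses $D\neq 0$, the faithfulness, and $w_0^3\neq 0$ in exactly the same way. Your one-line verification of $[w_+w_-,w_0]=0$ is also fine.

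The last step of your proposal, however, does not go through: the claim that $[w_+w_-,w_\pm]=0$ follows ``by the same substitution strategy'' fails when you actually carry it out. Indeed, $[w_+w_-,w_+]=w_+w_-w_+-w_+^2w_-=w_+\bigl(C_{-+}w_+w_-+Dw_0^2\bigr)=D\,w_+w_0^2=D\,C_{+0}^2\,w_0^2w_+$, and similarly $[w_+w_-,w_-]=-D\,w_0^2w_-$. Since the lemma assumes $D\neq 0$ and $C_{+0}\neq 0$, these commutators vanish only if $w_0^2w_\pm=0$, which the listed relations do not imply. In fact, in the paper's own two-dimensional representation (\ref{wMatrices}) with $0<s<s_0$ all hypotheses hold (including $C_{+0}=1/C_{-0}$ and $w_0^3\neq 0$), yet $[w_+w_-,w_+]=D\,w_+w_0^2$ is a nonzero matrix there. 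So only the $i=0$ case of the ``in particular'' statement is actually a consequence of the relations; note that the paper's proof likewise stops once $C_{-0}C_{+0}Dw_0^3=Dw_0^3$ is obtained and never derives the $i=\pm$ commutators. You should either restrict your final claim to $i=0$ or point out explicitly that the $i=\pm$ cases require input beyond the stated hypotheses (e.g.\ $D=0$, the PVBS situation, where they hold trivially).
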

\begin{proof}
On the one hand,
\begin{equation*}
w_-w_+w_0 = C_{-0}C_{+0}w_0w_-w_+ = C_{-0}C_{+0}Dw_0^3 + C_{-0}C_{+0}C_{-+}w_0w_+w_-\,,
\end{equation*}
while on the other hand,
\begin{equation*}
w_-w_+w_0 = C_{-+}w_+w_-w_0 + Dw_0^3 = C_{-+}C_{-0}C_{+0} w_0w_+w_- + Dw_0^3
\,,
\end{equation*}
so that $C_{-0}C_{+0}Dw_0^3 = Dw_0^3$ which yields the claim.
\end{proof}
In the PBVS case, $D=0$ and the lemma does not apply. 

The condition $C_{+0} = 1/C_{-0}$ is equivalent to the linear independence of $\{w^I:I\in S^{(3)}\}$ at the level of the abstract algebra, and ensures that the intersection property holds.
\begin{lemma}\label{lem:IntersectionPath}
For $0<s\leq s_0$, let $\caR_N = \mathrm{Ran}(\Upsilon_N)$. For $N\geq 2$, the spaces $\caR_N(s)$ satisfy the intersection property~(\ref{Intersection}). Moreover, $\mathrm{dim}(\caR_N(s)) = 4$.
\end{lemma}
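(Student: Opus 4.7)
The strategy is to apply Lemma~\ref{lma:IntProp} to the representation $\{w_-(s), w_0(s), w_+(s)\}$ of the quadratic algebra~(\ref{sCommutation1})-(\ref{sCommutation2}). I would choose the set of normal monomials
\[S^{(2)} = \{(0,0),\,(+,0),\,(0,-),\,(+,-)\},\]
corresponding to pushing every generator $w_+$ to the far left and every $w_-$ to the far right. A case-by-case inspection of~(\ref{sCommutation1})-(\ref{sCommutation2}) shows that this is a valid choice of normal form: $w_\pm^2=0$ eliminates $(+,+)$ and $(-,-)$; the second relation of~(\ref{sCommutation1}) rewrites $(0,+)$ in terms of $(+,0)$ and the first rewrites $(-,0)$ in terms of $(0,-)$; and~(\ref{sCommutation2}) rewrites $(-,+)$ in terms of $(+,-)$ and $(0,0)$.

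A short combinatorial argument then identifies $S^{(r)}$: no pair in $S^{(2)}$ has first coordinate $-$, so $-$ can occur only at position $r$, and no pair has second coordinate $+$, so $+$ can occur only at position $1$. Therefore $S^{(r)}$ has exactly four elements, corresponding to the canonical products
\[w_0^r,\quad w_+ w_0^{r-1},\quad w_0^{r-1} w_-,\quad w_+ w_0^{r-2} w_-.\]
The linear-independence hypothesis of Lemma~\ref{lma:IntProp} would then be verified directly from~(\ref{wMatrices}): as $2\times 2$ matrices, these four have non-zero entries only at the positions $\{(1,1),(2,2)\}$, $(2,1)$, $(1,2)$, and $(2,2)$ respectively. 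Vanishing of a linear combination thus forces the coefficients successively, via the $(2,1)$, $(1,2)$, $(1,1)$, and finally $(2,2)$ entries, provided $\cos s$, $\sin s$, and $g(s)$ are all non-zero. These conditions hold throughout $(0,s_0]$: the first two since $0<s\leq s_0<\pi/2$, and $g(s)\neq 0$ because~(\ref{IsoCond}) gives $|g(s)|^2 = 1 - |f(s)|^2\cos^2(s) > 0$, using $|f|\leq 1$ together with $\cos s < 1$ for $s>0$.

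With the hypotheses of Lemma~\ref{lma:IntProp} verified, the intersection property~(\ref{Intersection}) for $\caR_N(s)$ follows directly. For the dimension statement, the four linearly independent canonical matrices span the four-dimensional algebra $\caM_2$, so the kernel characterization $\ker \Upsilon_N = \{B\in\caM_2 : \Tr(Bw^I)=0 \text{ for all } I\}$ forces $\ker\Upsilon_N = \{0\}$, and hence $\dim\caR_N(s) = 4$. The main subtlety is the identification of the correct asymmetric normal ordering: the procedure works because the proportionality constants in $w_0 w_+ = -f w_+ w_0$ and $w_- w_0 = -f w_0 w_-$ are reciprocal of each other, which is exactly the content of Lemma~\ref{lma:CommutationSymmetry} and is what ensures that the reduction of a general monomial to normal form produces no non-trivial relations beyond the four forms above.
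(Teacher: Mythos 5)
Your proposal is correct and follows essentially the same route as the paper: both invoke Lemma~\ref{lma:IntProp} with an explicit choice of quadratic normal form and then check the linear independence of the four canonical monomials directly from the matrices~(\ref{wMatrices}), concluding $\dim\caR_N(s)=4$ from $\abs{S^{(N)}}=4$. The only difference is cosmetic -- you normal-order with $w_+$ on the left and $w_-$ on the right, i.e.\ $S^{(2)}=\{(0,0),(+,0),(0,-),(+,-)\}$, whereas the paper uses the mirrored choice $\{(00),(0+),(0-),(+-)\}$ -- and your entry-by-entry independence check (using $\cos s\neq0$, $\sin s\neq0$, $g(s)\neq0$ on $(0,s_0]$) just spells out what the paper calls immediate.
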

\begin{proof}
The intersection property follows from Lemma~\ref{lma:IntProp}. In the notation thereof, we choose $S^{(2)} = \{(00), (0+), (0-), (+-)\}$, so that
\begin{equation*}
S^{(N)} = \{(0\cdots0), (0\cdots0+), (0\cdots0-), (0\cdots0+-)\}
\end{equation*}
and the linear independence of the corresponding $\{w^I:I\in S^{(N)}\}$ is immediate from the explicit representation of the matrices~(\ref{wMatrices}). That $\mathrm{dim}(\caR_N(s)) = 4$ follows from $\abs{S^{(N)}} = 4$ and again the linear independence.
\end{proof}

It will be useful to have an explicit description of a well-behaved basis of $\caG_N(s)$. For $2\leq N < \infty$ and $s\in(0,s_0]$ let
\begin{equation} \label{zetas}
\zeta^{\mu}_N(s) := \Upsilon_N(s;A^\mu_N(s))\,,
\end{equation}
where
\begin{equation*}
A^{1}_N(s) = P + q_N(s)Q\,,\quad
A^{2}_N(s) = \sigma^-\,,\quad
A^{3}_N(s) = \frac{g(s)}{\sin(s)}\sigma^+ \,,\quad
A^{4}_N(s) = \frac{g(s)}{\sin(s)}\left[(-f(s))^N P - Q\right]\,,
\end{equation*}
with 
\begin{equation} \label{q}
q_N(s) = \frac{t_2(s)^N-(-f(s))^N\left(\varrho_1(s)+\varrho_2(s)t_4(s)^N\right)}{t_2(s)^N(-f(s))^N-\left(\varrho_2(s)+\varrho_1(s)t_4(s)^N\right)}\,.
\end{equation}
Now, (\ref{overlap}) reads
\begin{equation}
\braket{\zeta^{\mu}_N(s)}{\zeta^{\nu}_N(s)}
= \sum_{j=1}^4 t_j(s)^N \,\Tr\left(L_j(s)A^{\mu*}_N(s)R_j(s)A^\nu_N(s)\right)\,, \label{overlap2}
\end{equation}
for any $1\leq \mu,\nu\leq 4$. With $R_j(s), L_j(s)$ given above, this vanishes for all pairs $\mu<\nu$, the case $\mu=1,\nu=4$ depending on 
the choice~(\ref{q}) for $q_N(s)$. Moreover when $\mu=\nu$, (\ref{overlap2}) yields
\begin{align}
\left\Vert\zeta_N^1(s)\right\Vert^2&= \left(\varrho_1(s) + \varrho_2(s) t_4(s)^N\right) + 2q_N(s)t_2(s)^N + q_N(s)^2 \left(\varrho_2(s) + \varrho_1(s) t_4(s)^N\right) \label{Norm1}\\
\left\Vert\zeta_N^2(s)\right\Vert^2&= \left\Vert\zeta_N^3(s)\right\Vert^2 = \varrho_1(s)\left(1 - t_4(s)^N\right) \label{Norm2}\\
\left\Vert\zeta_N^4(s)\right\Vert^2&= \frac{g(s)^2}{\sin^2(s)}\bigg[\left(\varrho_2(s) + \varrho_1(s) t_4(s)^N\right) \label{Norm4} \\
&\phantom{= \frac{g(s)^2}{\sin^2(s)}\big[} + f(s)^{2N}\left(\varrho_1(s) + \varrho_2(s) t_4(s)^N\right) - 2(-f(s))^N t_2(s)^N \bigg] \nonumber\,.
\end{align}
These norms remain strictly positive for all $s > 0$, and the set $(\zeta_N^\mu(s))_{\mu=1}^4$ forms an well-defined orthogonal set spanning the four-dimensional space $\caG_N(s)$ for $s>0$. In fact, $\lim_{s\to0}q_N(s)$ exists and the four vectors $\zeta_N^\mu(s)$ can be extended by continuity to $s=0$. Furthermore,
\begin{equation*}
\lim_{N\to\infty} \lim_{s\to 0} \left\Vert\zeta_N^\mu(s)\right\Vert = \lim_{s\to 0}\lim_{N\to\infty} \left\Vert\zeta_N^\mu(s)\right\Vert = \lim_{s\to 0} \varrho_1(s) = 1
\end{equation*}
for all $\mu$. From now on, we define
\begin{equation} \label{RN0}
\caR_N(0) := \mathrm{span}\{\zeta_N^\mu(0^+):\mu=1,\ldots,4\}\,,
\end{equation}
and in particular,
\begin{equation*}
\caR_N(0) \neq \mathrm{Ran}(\Upsilon_N(0))\,.
\end{equation*}
Indeed, since $w_+(0) = 0$, the range of $\Upsilon_N(0)$ is two-dimensional. The map $\Upsilon_N(0)$ generates the ground states of the system on the infinite chain without boundary and the chain with a left boundary, but not on the other half-infinite system. With minor modifications, the roles of left and right could be exchanged.

As we already noted in the introduction, this example emphasizes the importance of constructing a family of Hamiltonians and ground states, as a smooth family of matrix product maps $\Upsilon_N(s)$ does not necessarily generate the desired continuity if the parent Hamiltonians are simply constructed pointwise.

Despite being a simple corollary of the above discussion, the following proposition is essential: it identifies the MPS constructed from the path of matrices for $s>0$ and their limits at $s=0^+$ with the ground states of the Hamiltonians of Lemma~\ref{lem:HScont}.
\begin{prop}
Let $\caR_N(s) = \mathrm{Ran}(\Upsilon_N(s))$ for $s>0$ and $\caR_N(0)$ be defined by~(\ref{RN0}). If $H_{[a,b]}(s)$ are the Hamiltonians of Lemma~\ref{lem:HScont}, with $N=b-a+1$, then
\begin{equation*}
\caR_N(s) = \mathrm{Ker}\left(H_{[a,b]}(s)\right)\,,
\end{equation*}
for all $s\in[0,s_0]$.
\end{prop}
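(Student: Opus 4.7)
Since $h(s)$ is a projection, $H_{[a,b]}(s)\geq 0$ is frustration-free and $\mathrm{Ker}(H_{[a,b]}(s)) = \bigcap_{x=a}^{b-1} \mathrm{Ker}(h(s)_{x,x+1})$. The plan is therefore reduced to two sub-statements: a local identity $\caR_2(s) = \mathrm{Ker}(h(s))$, and a global identity expressing $\caR_N(s)$ as the corresponding intersection of two-body kernels. Combining the two gives the proposition.

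For $s\in(0,s_0]$, the global statement is exactly Lemma~\ref{lem:IntersectionPath}. For the local one, both sides are four-dimensional: $\mathrm{Ker}(h(s))$ has dimension $9-5=4$ since the five vectors in~(\ref{G2}) are linearly independent, and $\caR_2(s)$ is four-dimensional by Lemma~\ref{lem:IntersectionPath}. It then suffices to check $\caR_2(s)\subseteq\mathrm{Ker}(h(s))$, which is a routine computation: an MPS two-site vector $\Upsilon_2(s;B)=\sum_{i,j}\Tr(Bw_j(s)w_i(s))\,e_i\otimes e_j$ is orthogonal to each of the five spanning vectors of $\mathrm{Ran}(h(s))$ if and only if certain quadratic combinations of $w_\pm(s), w_0(s)$ vanish, and these are exactly the relations~(\ref{sCommutation1},\ref{sCommutation2}) — the $e_\pm\otimes e_\pm$ terms using $w_\pm(s)^2=0$, the mixed vectors using the deformed commutations with $w_0(s)$, and the last vector using the deformed anticommutator.

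The endpoint $s=0$ requires separate treatment because $w_+(0)=0$, so $\mathrm{Ran}(\Upsilon_N(0))$ has dimension only $2$ and the MPS argument does not apply directly. Here I would argue by continuity of the basis $\{\zeta^\mu_N(s)\}$: the specific coefficients $q_N(s)$ chosen in~(\ref{q}), together with the explicit norm formulas~(\ref{Norm1})--(\ref{Norm4}), ensure that the four vectors $\zeta^\mu_N(s)$ remain mutually orthogonal with strictly positive norms as $s\to 0^+$, so by~(\ref{RN0}) we have $\dim\caR_N(0)=4$. Lemma~\ref{lem:HScont} gives norm-continuity of $H_{[a,b]}(s)$, so the relation $\zeta^\mu_N(s)\in\mathrm{Ker}(H_{[a,b]}(s))$ for $s>0$ passes to the limit, yielding $\caR_N(0)\subseteq\mathrm{Ker}(H_{[a,b]}(0))$. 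At $s=0$ the interaction coincides with the PVBS model with $n_L=n_R=1$ identified at the end of Section~\ref{sub:HamPath}, and the analysis of Section~\ref{Sec:PVBS} gives a four-dimensional ground state space. Matching dimensions closes the argument.

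The essential subtlety is precisely the endpoint $s=0$: the matrix-product representation degenerates there while the Hamiltonian's ground space does not. The whole point of carrying around the basis $\{\zeta^\mu_N(s)\}$ with the carefully tuned coefficients $q_N(s)$ — rather than simply applying $\Upsilon_N(s)$ to an $s$-independent basis of $\caM_2$ — is to produce a family of ground states whose $s\to 0^+$ limits are linearly independent, so that the continuity argument recovers the full four-dimensional kernel of $H_{[a,b]}(0)$.
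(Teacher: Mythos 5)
Your proposal is correct and follows essentially the same route as the paper: reduce everything to the two-site identity $\caR_2(s)=\mathrm{Ker}(h(s))$ and propagate to all $N$ via the intersection property of Lemma~\ref{lem:IntersectionPath}, treating $s=0$ separately because $\Upsilon_N(0)$ degenerates. The only cosmetic differences are that the paper verifies the two-site step by writing out the basis vectors~(\ref{sGS2}) and checking they span the orthogonal complement of~(\ref{G2}), where you instead use the quadratic relations~(\ref{sCommutation1},\ref{sCommutation2}) plus a dimension count, and at $s=0$ the paper simply observes that the explicit two-site formulas extend continuously to that point while you make the continuity-plus-PVBS-dimension argument explicit -- both versions are sound.
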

\begin{proof}
By construction, and for all $s\geq 0$,
\begin{equation} \label{sGS2}
\begin{split}
\zeta^{1}_2(s) &= - f(s)^2 g(s) \sin (s) e_-\otimes e_+ + \cos^2(s)(1+f(s)^4)e_0\otimes e_0 - g(s)\sin(s) e_+\otimes e_-\,, \\
\zeta^{2}_2(s) &= -f(s)e_0\otimes e_- + e_-\otimes e_0\,, \\
\zeta^{3}_2(s) &= - f(s)e_+\otimes e_0 + e_0\otimes e_+\,, \\
\zeta^{4}_2(s) &= e_-\otimes e_+ - f(s)^2e_+\otimes e_- \,,
\end{split}
\end{equation}
where we used that $q_2(s) = f(s)^2$. These span the orthogonal complement of the space spanned by the five vectors of~(\ref{G2}), so that $\caR_2(s) = \mathrm{Ker}(h(s)) = \mathrm{Ker}(H_{[x,x+1]}(s))$. The statement for all $N\geq2$ follows by the intersection property.
\end{proof}
Interestingly, we see that the path chosen here maps the spin-$0$ singlet ground state of the AKLT model to the product state of the PVBS model, while the spin-$1$ triplet is deformed to the edge states. 

The last ingredient needed in the proof of Theorem~\ref{thm:AKLT_phase} is a uniform lower bound on the spectral gap along the path of Hamiltonians. The following lemma, whose proof is very reminiscent of that of Proposition~\ref{Prop:PVBS_Gap}, is the key. Recall that $G_N(s)$ is the orthogonal projection on $\caG_N(s)$.
\begin{lemma} \label{lma:Commutation}
Consider the following family of functions on $s\in[0,s_0]$
\begin{equation*}
g_{k,N}(s):=\left \Vert G_{[N-k+2,N+1]}(s)\left(G_{[1,N]}(s)-G_{[1,N+1]}(s)\right) \right\Vert\,.
\end{equation*}
for $k\in\bbN$ and $N\geq k$. Then:
\begin{enumerate}
\item For all $N \geq 2k-2$ and for all $s\in[0,s_0]$, there is a $C(s)$ such that
\begin{equation} \label{ExpDecay}
g_{k,N}(s)^2 \leq C(s)\varepsilon^{k-1}
\end{equation}
where $\varepsilon = \sup_{s\in[0,s_0]}\left(\max_{i\in\{2,3,4\}}\left\vert t_i(s) \right\vert\right)$. Moreover, $s\mapsto C(s)$ is continuous on $(0,s_0]$.
\item Let $0<f<1$ be fixed. Let $0\leq s_1,s_2\leq \pi/2$ be defined by
\begin{equation*}
\sin^2(s_1)=\frac{1-f^2}{1+f^2}\,,\qquad \sin^2(s_2)=\frac{f^2(1-2\ln f)}{1+f^{2}}
\end{equation*}
and let $\delta = \min\{s_1,s_2,s_0/2\}$. Assume that $f(s) = f$ for $s<\delta$. There exists $k_1$ such that $k\geq k_1$ implies $g_{k,N}(s)^2<1/k$ for all $s<\delta$ and $N \geq 2k-2$.
\end{enumerate}
\end{lemma}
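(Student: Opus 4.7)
The plan follows the martingale argument from part (i) of Proposition~\ref{Prop:PVBS_Gap}, now with the four-dimensional basis $\{\zeta_N^\mu(s)\}$ of (\ref{zetas}) replacing $\{\psi_N(S)\}$. First, Lemma~\ref{lem:IntersectionPath} gives $G_{[1,N]}-G_{[1,N+1]} = (\idtyty-G_{[1,N+1]})G_{[1,N]}$, so a norming vector $\Phi$ lies in $\caG_{[1,N]}\cap\caG_{[1,N+1]}^\perp$; applied once more, the intersection property puts $\Phi\in\caG_{[1,N-k+1]}\otimes(\bbC^3)^{\otimes k}$, and since $H_{[1,N-k+1]}$ commutes with $G_{[N-k+2,N+1]}$,
\begin{equation*}
G_{[N-k+2,N+1]}\Phi \in \caG_{[1,N-k+1]}\otimes\caG_{[N-k+2,N+1]}.
\end{equation*}
Expanding $\Phi = \sum_{\rho,i}\varphi(\rho,i)\,\hat\zeta_N^\rho\otimes e_i$, the quantity $g_{k,N}(s)^2$ becomes a bilinear form in $\varphi$ whose kernel consists of the split overlaps $\langle\zeta_{N-k+1}^\mu\otimes\zeta_k^\nu,\,\zeta_N^\rho\otimes e_i\rangle$.

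The second step is to compute these overlaps via the spectral decomposition $\widehat{\bbT}(s) = \sum_{j=1}^4 t_j(s)\,|R_j(s)\rangle\!\langle L_j(s)|$. Proceeding exactly as for (\ref{overlap2}), each overlap expands as $\sum_{j=1}^4 t_j(s)^{N-k}\,\Theta_j^{\mu\nu\rho i}(s)$ with $t_1(s)=1$ and $|t_j(s)|\leq\varepsilon<1$ for $j\in\{2,3,4\}$. The constraint $\Phi\perp\caG_{[1,N+1]}$ yields four linear identities for $\varphi$ obtained by pairing with $\hat\zeta_{N+1}^\sigma$, and the same spectral expansion shows that these identities force the $j=1$ contribution to the bilinear form above to vanish up to an error of order $\varepsilon^{N-k+1}$. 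Using $N\geq 2k-2$ to dominate the residual and Cauchy--Schwarz on the remaining indices yields
\begin{equation*}
g_{k,N}(s)^2 \leq C(s)\,\varepsilon^{k-1},
\end{equation*}
where $C(s)$ is a rational function of the matrices $A_N^\mu(s)$, the eigenvalues $t_j(s)$, $q_N(s)$, and the reciprocals $\|\zeta_N^\mu(s)\|^{-2}$; since the norms in (\ref{Norm1})--(\ref{Norm4}) stay strictly positive for $s>0$, the map $s\mapsto C(s)$ is continuous on $(0,s_0]$. This is (i).

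For (ii), fix $f(s)\equiv f$ on $[0,\delta)$ and track the $s$-dependence of $C(s)$ in a neighborhood of $s=0$, where the MPS representation degenerates because $w_+(0)=0$. The potentially singular factors are the $g(s)/\sin(s)$ appearing in $A_N^3(s),A_N^4(s)$, together with the $t_4(s)^N$ and $(-f)^N$ terms inside $\|\zeta_N^4(s)\|^2$ and $q_N(s)$. Imposing $\sin^2(s)\leq(1-f^2)/(1+f^2)$ (that is, $s<s_1$) forces $g(s)^2/\sin^2(s)\geq 1$, which is precisely what is needed for $\|\zeta_N^{3,4}(s)\|$ to absorb the $g/\sin$ prefactors in $A_N^{3,4}(s)$; the bound derived from $s_2$, via the elementary inequality $-x^{2N}\ln x \leq (2eN)^{-1}$, converts the worst $N$-dependent term of the form $Nf^{2N}$ into the uniform constant $(1-2\ln f)^{-1}$. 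Taken together these show $C(s)\leq C_0$ for all $s<\delta=\min\{s_1,s_2,s_0/2\}$ and $N\geq 2k-2$, so that $C_0\varepsilon^{k-1}<1/k$ for all $k\geq k_1$ sufficiently large.

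The main obstacle is part (ii). A naive estimate of $C(s)$ blows up like $\sin^{-2}(s)$ as $s\to 0$, and recovering uniformity requires recognizing two separate cancellations---one between the $g/\sin$ prefactors and the matching terms inside the norms $\|\zeta_N^{3,4}(s)\|$, and one between the $N$-dependent powers $t_j(s)^N,\,f^{2N}$ and a logarithm of $f$---which together pin down the specific sharp thresholds $s_1$ and $s_2$ quoted in the statement. The hypothesis that $f(s)$ is constant on $[0,\delta)$ is what makes these essentially algebraic identities tractable by suppressing contributions coming from derivatives of the MPS data.
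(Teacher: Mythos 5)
Your part (i) follows the paper's route (intersection property, expansion of the split overlaps via the spectral decomposition of $\widehat{\bbT}(s)$, use of the orthogonality constraint $\Phi\perp\caG_{[1,N+1]}(s)$ to remove the $t_1=1$ contribution, leaving only terms carrying $t_p(s)^N$ or $t_p(s)^{k-1}t_q(s)^{N-k+1}$ with $p+q>2$), and is acceptable in outline; the only detail you gloss over is that eliminating the leading term requires re-expanding $(A_k^\nu)\str\varrho\,(A_{N-k+1}^\mu)\str$ in the basis $(A_{N+1}^\sigma)\str$ with uniformly bounded coefficients before the orthogonality condition can be applied.

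Part (ii), however, has a genuine gap: the mechanisms you assign to the thresholds $s_1$ and $s_2$ are not the ones that make the argument work, and the actual cancellation is missing. First, with $|g(s)|^2=1-f^2\cos^2(s)$ one has $g(s)^2/\sin^2(s)=(1-f^2)/\sin^2(s)+f^2\geq 1$ for \emph{all} $s$, so the condition $\sin^2(s)\leq(1-f^2)/(1+f^2)$ cannot be "precisely what is needed" for anything; in the paper $s_1$ is the radius of convergence of the expansion of $\varrho_1(s)=g(s)^2/(g(s)^2+\sin^2(s))$ as a geometric series in $\frac{1+f^2}{1-f^2}\sin^2(s)$, which is what lets one isolate the singular part $c_{k,N}(s)/\sin^3(s)$ of the bracket with $s$-independent coefficients (this is also the real reason $f(s)$ is taken constant near $0$ — no derivatives of MPS data appear anywhere). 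Second, absorbing the $g/\sin$ prefactors into the norms $\Vert\zeta_N^{3,4}(s)\Vert$ only handles the diagonal overlaps; the cross terms in the numerator of $g_{k,N}(s)^2$ retain uncancelled powers up to $\sin^{-3}(s)$, and the decisive step in the paper is that the a priori bound $g_{k,N}(s)^2\leq1$ forces $c_{k,N}(s)=\caO(\sin^3(s))$, so the negative-power terms of the reorganized expansion vanish identically — a structural cancellation your proposal never invokes. Third, the dangerous $N$-dependence is not a term like $Nf^{2N}$ (which is harmlessly bounded), but the resummed binomial expansions of $t_4(s)^N=\bigl(f^2-(1+f^2)\sin^2(s)\bigr)^N$, which produce factors of size $f^{2N}\ep{N(1+f^{-2})\sin^2(s)}$; these grow exponentially in $N$ unless $\sin^2(s)$ is below the threshold defining $s_2$, and no elementary inequality of the form $-x^{2N}\ln x\leq(2eN)^{-1}$ can repair that. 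As written, your proof of (ii) would therefore not close: the constant you produce still blows up as $s\to0$ unless these three steps are carried out.
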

\begin{proof}
\emph{i. Pointwise bound.} By a direct translation of the steps leading to~(\ref{PVBS:start}), but with $s$-dependent objects and ground state vectors indexed by $\mu=1,\ldots,4$, we can write
\begin{equation}\label{Norm:Start}
g_{k,N}(s)^2 = \sup_{\Phi\in \caK(s)} \sum_{\mu,\nu}\sum_{\tau,i}\frac{\overline{\phi_{\mu,\nu;k,N}(s)}\varphi_{\tau,i;N}(s)}{\Vert \zeta^\mu_{N-k+1}(s) \Vert \Vert \zeta^\nu_{k}(s) \Vert \Vert \zeta^\tau_{N}(s) \Vert}\braket{\zeta^{\mu}_{N-k+1}(s)\otimes\zeta^{\nu}_{k}(s)}{\zeta^\tau_N(s)\otimes e_i}
\end{equation}
where 
\begin{equation*}
\caK(s) := \left\{\Phi\in\caH_{[1,N+1]}:\Vert\Phi\Vert = 1\text{ and }\Phi\in\caG_{[1,N]}(s)\cap\caG_{[1,N+1]}(s)^\perp\right\}\,,
\end{equation*}
$\phi_{\mu,\nu;k,N}(s) = \langle\hat{\zeta}^{\mu}_{N-k+1}(s)\otimes\hat{\zeta}^{\nu}_{k}(s),\Phi\rangle$ and $\varphi_{\tau,i;N}(s) = \langle\hat{\zeta}^{\tau}_{N}(s)\otimes e_i,\Phi\rangle$. The normalization $\Vert \Phi \Vert = 1$ and the orthogonality of the vectors $(\zeta^\tau_N(s))_{\tau=1}^4$ imply that $\vert \phi_{\mu,\nu;k,N}(s) \vert \leq 1$ as well as $\vert \varphi_{\tau,i;N}(s) \vert \leq 1$.

If $s>0$, scalar products of ground states can be expressed using the transfer operator and its spectral decomposition, namely
\begin{align*}
\braket{\zeta^{\mu}_{N-k+1}\otimes\zeta^{\nu}_{k}}{\zeta^\tau_N\otimes e_i} &= 
\sum_{p,q=1}^4t_p^{k-1}t_q^{N-k+1}\Tr\left(R_qA_N^\tau L_pw_i\str(A_k^\nu)\str R_pL_q(A_{N-k+1}^\mu)\str\right)\,, \\
\braket{\zeta^\nu_{N+1}}{\zeta^\tau_N\otimes e_i} &= \sum_{p=1}^4t_p^{N}\Tr\left(R_pA_{N}^\tau L_pw_i\str(A_{N+1}^\nu)\str\right)\,.
\end{align*}
With the second identity, the condition $\Phi \in \caG_{[1,N+1]}(s)^\perp$ reads
\begin{equation} \label{Cond:OrthDecay}
\sum_{\tau,i}\frac{\varphi_{\tau,i;N}}{\Vert \zeta^\tau_N \Vert}\Tr\left(A_{N}^\tau \rho w_i\str(A_{N+1}^\nu)\str\right) = \sum_{\tau,i}\frac{\varphi_{\tau,i;N}}{\Vert \zeta^\tau_N \Vert}\sum_{p=2}^4t_p^{N}\Tr\left(R_pA_{N}^\tau L_pw_i\str(A_{N+1}^\nu)\str\right)\,.
\end{equation}
The first identity on the other hand yields
\begin{align*}
\frac{\braket{\zeta^\mu_{N-k+1}\otimes\zeta^\nu_{k}}{\Phi} }{\Vert \zeta^\mu_{N-k+1} \Vert \Vert \zeta^\nu_{k} \Vert} 
&= \frac{1}{\Vert \zeta^{\mu}_{N-k+1} \Vert \Vert \zeta^{\nu}_{k} \Vert} \Bigg[\sum_{\tau,i}\frac{\varphi_{\tau,i;N}}{\Vert \zeta^\tau_{N} \Vert} \Tr\left(A_{N}^\tau \rho w_i\str(A_k^\nu)\str \rho (A_{N-k+1}^\mu)\str\right) \\
&\quad + \sum_{p+q>2} t_p^{k-1} t_q^{N-k+1} \sum_{\tau,i}\frac{\varphi_{\tau,i;N}}{\Vert \zeta^\tau_{N} \Vert}\Tr\left(R_qA_{N}^\tau L_pw_i\str(A_k^\nu)\str R_pL_q(A_{N-k+1}^\mu)\str\right)\Bigg]\,.
\end{align*}
Since the matrices $(A_{N+1}^\nu)_{\nu=1}^4$ are linearly independent, there exist coefficients $c_{\phantom{\nu}{\phantom\mu}\sigma}^{\nu\mu}(k,N)$ such that
\begin{equation*}
(A_k^\nu)\str \rho (A_{N-k+1}^\mu)\str = \sum_{\sigma=1}^4c_{\phantom{\nu}{\phantom\mu}\sigma}^{\nu\mu}(k,N)(A_{N+1}^\sigma)\str\,.
\end{equation*}
Note that $c_{\phantom{\nu}{\phantom\mu}\sigma}^{\nu\mu}(k,N)$ are uniformly bounded in $k,N$. It remains to use this decomposition and the orthogonality condition~(\ref{Cond:OrthDecay}) in the first line above to obtain
\begin{multline} \label{g}
g_{k,N}^2 = \sup_{\Phi:\Vert \Phi \Vert = 1} \sum_{\mu,\nu}\sum_{\tau,i}\frac{\overline{\phi_{\mu,\nu;k,N}}\varphi_{\tau,i;N}}{\Vert \zeta^\mu_{N-k+1} \Vert \Vert \zeta^\nu_{k} \Vert \Vert \zeta^\tau_{N} \Vert} \Bigg[
\sum_{\sigma} c_{\phantom{\nu}{\phantom\mu}\sigma}^{\nu\mu}(k,N)
\sum_{p=2}^4t_p^{N}\Tr\left(A_{N}^\tau \rho w_i\str(A_{N+1}^\sigma)\str\right) \\
+\sum_{p+q>2} t_p^{k-1}t_q^{N-k+1} 
\Tr\left(R_qA_{N}^\tau L_pw_i\str(A_k^\nu)\str R_pL_q(A_{N-k+1}^\mu)\str\right)\Bigg]
\end{multline}
Recalling that there exist $0<C_m\leq C_M<\infty$ such that
\begin{equation*}
C_m\leq \Vert \zeta_N^\mu(s)\Vert \leq C_M
\end{equation*}
uniformly in $N$ and noting that the traces are finite for any fixed $s>0$, the fact that the peripheral spectrum of the transfer operator is empty implies the exponential decay~(\ref{ExpDecay}) of $g_{k,N}(s)^2$ for all $N\geq2k-2$. Finally, we note that the case $s=0$ corresponds to the PVBS model and has been treated independently in the proof of Proposition~\ref{Prop:PVBS_Gap}. The continuity of $C(s)$ follows from that of all elements of~(\ref{g}) for $s>0$.

\emph{ii. Uniform bound.} A priori, the function $C(s)$ has a singularity at $s=0$ due to the factors $\sin(s)^{-1}$ of the matrices $A^3_\cdot$ and $A^4_\cdot$, appearing both in the traces and in the coefficients $c_{\phantom{\nu}{\phantom\mu}\sigma}^{\nu\mu}(k,N)$. If $B_{k,N}(s)$ is the square bracket in~(\ref{g}), we write
\begin{equation*}
B_{k,N}(s) = \frac{c_{k,N}(s)}{\sin^3(s)} + r_{k,N}(s)
\end{equation*}
by gathering all singular contributions in the first term. The regular part $r_{k,N}(s)$ containing no negative power of $\sin(s)$, it can immediately be bounded above as
\begin{equation*}
\abs{r_{k,N}(s)}\leq C_r\varepsilon^{k-1}\,,
\end{equation*}
with an $s$-independent constant $C_r$, for all $N\geq2k-2$. For $s<s_1$, and recalling that $f(s)$ is constant there, $\varrho_1(s)$ can be expanded in powers of $\sin(s)$
\begin{equation*}
\varrho_1(s) = 1 - \frac{1}{1-f^2}\sum_{j=1}^\infty\left(-\frac{1+f^2}{1-f^2}\right)^{j-1}\sin^{2j}(s)\,,
\end{equation*}
and similarly for $\varrho_2(s) = 1-\varrho_1(s)$. Since the coefficients of a product are given by convolution, the expansions of $\varrho_1(s)^2, \varrho_2(s)^2$ and $\varrho_1(s)\varrho_2(s)$ converge absolutely for $s<s_1$. Hence, the singular part of~(\ref{g}) has the following convergent representation,
\begin{equation}\label{c_Expansion}
\frac{c_{k,N}(s)}{\sin^3(s)} = \sum_{j=0}^\infty \Bigg(\sum_{p=2}^4 t_p(s)^Na_j + \sum_{p+q>2}t_p(s)^{k-1} t_q(s)^{N-k+1}b_j \Bigg)
 \sin^{j-3}(s)\,,
\end{equation}
where the coefficients $a_j = a_j(k,N)$ and $b_j = b_j(k,N)$ are independent of $s$ and uniformly bounded in $k,N$. The terms $j\geq 3$ are regular and summable for $s<s_1$ and can be estimated again by a constant times $\varepsilon^{k-1}$. 

Now, we use the binomial expansions of $t_{p}(s)^n$ in the $0\leq j \leq 2$ terms of~(\ref{c_Expansion}) and reorganize the sums to obtain
\begin{multline*}
\sum_{i=0}^2 \sum_{j=0}^i \Bigg(
 a_j \sum_{p=2}^4 \sum_{\alpha:j+2\alpha = i} \tau^N_{\phantom{^N}p,\alpha} 
 + b_j \sum_{p+q>2} \sum_{\substack{\alpha,\beta : \\ j+2\alpha + 2\beta = i}}\tau^{k-1}_{\phantom{^{k-1}}p,\alpha} \tau^{N-k+1}_{\phantom{^{N-k+1}}p,\beta} 
\Bigg)\sin^{i-3}(s) \\
+ \sum_{j=0}^2 \Bigg(
a_j \sum_{p=2}^4 \sum_{\alpha:j+2\alpha\geq 3}^N\tau^N_{\phantom{^N}p,\alpha} \sin^{j+2\alpha-3}(s) 
+ b_j \sum_{p+q\geq 1} \sum_{\substack{\alpha,\beta\\ j+2\alpha + 2\beta \geq 3}}^{k-1,N-k+1} \tau^{k-1}_{\phantom{^{k-1}}p,\alpha} \tau^{N-k+1}_{\phantom{^{N-k+1}}p,\beta} \sin^{j+2\alpha+2\beta-3}(s)
\Bigg)\,,
\end{multline*}
where
\begin{equation*}
\tau^n_{\phantom{n}2,j} = \tau^n_{\phantom{n}3,j} = (-f)^n\binom{n}{j}(-1)^j\,,\qquad\tau^n_{\phantom{n}4,j} = f^{2n}\binom{n}{j}\left(-(1+f^{-2})\right)^j\,.
\end{equation*}

Now, the a priori bound $g_{k,N}(s)^2\leq 1$ implies that $c_{k,N}(s) = \caO_{k,N}(\sin^3(s))$, and in turn that the first line vanishes. In the second line, where all exponents of $\sin(s)$ are non-negative, it suffices to observe that
\begin{equation*}
\sum_{k=\gamma}^n\binom{n}{k} x^{k-\gamma} \leq \sum_{j=0}^{n-\gamma}\frac{n^{j+\gamma}}{(j+\gamma)!}\abs{x}^j\leq n^\gamma \ep{n\abs{x}}\,.
\end{equation*}
Hence,
\begin{equation*}
\Bigg\vert\sum_{\alpha:j+2\alpha\geq 3}^N\tau^N_{\phantom{^N}4,\alpha} \sin^{j+2\alpha-3}(s) \Bigg\vert
\leq f^{2N} \left((1+f^{-2})N\right)^{\lceil(3-j)/2\rceil}\ep{N (1+f^{-2})\sin^2(s)}
\end{equation*}
and similarly for the other terms. As $N\to\infty$, the left hand side decays exponentially for $s<s_2$. Summarizing, all contributions to~(\ref{c_Expansion}) show a uniform exponential decay in $k$ for $N\geq2k-2$, so that for $k$ large enough, $g_{k,N}(s)^2<1/k$ for all $s<\delta$.
\end{proof}
Note that the specific assumptions sufficient for point (ii) to hold are of technical nature and we have not aimed for the optimal conditions. Any neighborhood of $s=0$ in which the gap remains open will do, as part (i) ensures the existence of a gap for $s>0$. Similarly, the choice of a constant $f$ around the origin simplifies the perturbative calculations in the proof, but is not necessary. A general theory of the continuity of the spectral gap for such families of Hamiltonians still needs to be developed.
\begin{proof}[Proof of Theorem~\ref{thm:AKLT_phase}]
Lemma~\ref{lem:HScont} introduces a smooth path of Hamiltonians for $s\in[0,s_0]$. The continuity of the map $s\mapsto C(s)$ on $(0,s_0]$ introduced in Lemma~\ref{lma:Commutation}(i) implies that it is bounded on $[\delta,s_0]$. Therefore, there exists $k_0$ such that $k\geq k_0$ implies
\begin{equation*}
g_{k,N}(s)^2\leq C(s) \varepsilon^{k-1}<1/k\qquad s\geq \delta\,,
\end{equation*}
for all $N\geq 2k-2$. Together with (ii) of the same lemma, $g_{k,N}(s)^2<1/k$ for all $k\geq \max\{k_0,k_1\}$ and all $s\in[0,s_0]$. Hence, the martingale result~(\ref{GapMartingale}) implies that the path is gapped,
\begin{equation*}
\inf_{s\in[0,s_0], N\in\bbN}\gamma(s,N)>0\,.
\end{equation*}
Hence, by~\cite{BMNS}, there exist quasi-local automorphisms $\alpha^\Gamma$, for $\Gamma=\bbZ,(-\infty,0],[1,+\infty)$ on the algebras of the infinite chain, of the right infinite chain with a left edge and of the left infinite chain with a right edge, and concludes the proof of the theorem.
\end{proof}
%


\section{The $SO(2J+1)$ models}


As a further example of the classification of gapped phases using the PVBS classes, we consider the $SO(d)$ invariant models introduced in~\cite{PhysRevB.78.094404}, which are one possible family of generalization of the AKLT model. The local Hilbert space is $d$ dimensional and carries the fundamental representation $\caD$ of $SO(d)$. In order to define a nearest neighbor interaction, we consider the irreducible decomposition of the tensor product of two fundamental representations into
\begin{equation} \label{Decomp}
\caD \otimes \caD = 1 \oplus \caA \oplus \caS
\end{equation}
where $1$ is the invariant singlet, $\caS$ contains all other symmetric vectors, and $\caA$ is the set of the antisymmetric vectors. Let $P_\caS$ be the orthogonal projection onto $\caS$, and let the Hamiltonian be
\begin{equation}\label{SOHam}
H_{[a,b]} = \sum_{x=a}^{b-1}(P_\caS)_{x,x+1}\,.
\end{equation}
By construction it is invariant under the action of $SO(d)$. In the case $d=3$, the decomposition~(\ref{Decomp}) coincides with the Clebsch-Gordan decomposition of the tensor product of the tensor product of two spin-$1$ representations of $\mathfrak{su}(2)$, and $P_\caS$ is the AKLT interaction. In the sequel, we shall focus on the odd case, $d=2J+1$.

The matrix product representation of the $SO(2J+1)$ models' ground states can be chosen to be the generators $\{Z_\alpha = Z_\alpha^*|\alpha=0,\ldots,2J\}$ of the Clifford algebra $\caC_{2J+1}$ satisfying the quadratic relations
\begin{equation}\label{Clifford}
Z_\alpha Z_\beta + Z_\beta Z_\alpha = 2\delta_{\alpha\beta} \idtyty.
\end{equation}
The fact that the symmetry remains unbroken in the ground states is manifest in the invariance of~(\ref{Clifford}) under the transformation $Z'_\gamma = \sum_{\alpha} O_{\gamma\alpha} Z_\alpha$ for any $O\in SO(2J+1)$. The correct normalization is obtained by setting $\tilde{Z}_\alpha = \gamma Z_\alpha$ and imposing the isometry condition $\idtyty = \sum_\alpha  \tilde{Z}_\alpha\str \tilde{Z}_\alpha$. Since $Z_\alpha^2 = \idtyty$ for all $\alpha$, we have that 
\begin{equation} \label{CliffordNormalization}
\abs{\gamma}^2 = \frac{1}{2J+1}\,.
\end{equation}

A representation of $\caC_{2J+1}$ can conveniently be obtained from a representation of the CAR algebra $\caA_-(\bbC^{J})$. $\caA_-(\bbC)$ is generated by $\idtyty$, $\sigma^+$ and $\sigma^-$, and the higher dimensional fermionic creation and annihilation operators are given by
\begin{equation} \label{CAR}
a\str_j = \underbrace{(-1)^Q\otimes \cdots \otimes (-1)^Q}_{j-1} \otimes \sigma^+ \otimes \underbrace{\idtyty \otimes \cdots \otimes \idtyty}_{J-j} \in \caA_-(\bbC^{J})\,.
\end{equation}
Note that $(-1)^Q = \sigma_z$. Then, 
\begin{equation} \label{Z0}
Z_0 = \prod_{j=1}^J (2a\str_j a_j - 1) = \bigotimes_{j=1}^J(-1)^Q\,,
\end{equation}
and for $1\leq j \leq J$,
\begin{equation*}
Z_{2j-1} = a_j + a\str_j\,,\qquad Z_{2j} = \iu(a_j - a\str_j)\,.
\end{equation*}
The canonical anticommutation relations of the $\{a\str_j, a_j\}_{j=1}^J$ imply the Clifford relations for the $\{Z_\alpha\}_{\alpha=0}^{2J}$. This representation clearly is of dimension $2^J$.
\begin{lemma}
The ground states of~(\ref{SOHam}) are MPS generated by the matrices $\{V_\alpha\}_{\alpha=0}^{2J}$, where 
\begin{equation*}
{V}_0 = -\frac{1}{\sqrt{2J+1}}\,Z_0\,,\qquad {V}_{2j-1} = \sqrt{\frac{2}{2J+1}}\,a_j\,,\qquad {V}_{2j} = -\sqrt{\frac{2}{2J+1}}\,a_j\str \,,
\end{equation*}
for $1\leq j\leq J$, and the operators $a_j^\sharp$ are given by (\ref{CAR}) and $Z_0$ by (\ref{Z0}).
\end{lemma}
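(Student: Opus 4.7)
The plan is to follow the template of Section~\ref{Sec:PVBS}: first establish $\caG_2=\ker(h_{x,x+1})$ on two sites, and then extend to $\caG_N=\ker(H_{[a,b]})$ for all $N\ge 2$ via the intersection property of Lemma~\ref{lma:IntProp}.

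The algebraic heart of the argument is the anticommutator identity
\[
\{V_\alpha,V_\beta\} = \frac{2}{2J+1}\, g_{\alpha\beta}\,\idtyty, \qquad 0\le\alpha,\beta\le 2J,
\]
where $g$ is the symmetric matrix with $g_{00}=1$, $g_{2j-1,2j}=g_{2j,2j-1}=-1$ for $1\le j\le J$, and all other entries zero. This follows directly from the CAR relations $\{a_i,a_j^*\}=\delta_{ij}\idtyty$ and $\{a_i,a_j\}=0$, from $V_0^2=\frac{1}{2J+1}\idtyty$, and from the fact that $Z_0$ anticommutes with every $a_j$ and $a_j^*$. Setting $\alpha=\beta$ and summing recovers the isometry $\sum_\alpha V_\alpha^* V_\alpha=\idtyty$ consistent with~(\ref{CliffordNormalization}).

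Next I will identify the tensor $s=\sum_{\alpha\beta}g_{\alpha\beta}\,e_\alpha\otimes e_\beta$ as the unique (up to scale) $SO(2J+1)$-invariant singlet in $\caD\otimes\caD$ expressed in the basis $(e_\alpha)$ dictated by the CAR realisation: writing $V_\alpha=\sum_\gamma U_{\alpha\gamma}\tilde Z_\gamma$ for a unitary $U$ on $\bbC^{2J+1}$ and $\tilde Z_\gamma$ the normalised Clifford generators, one verifies $UU^T=g$, so $g$ is the $SO(2J+1)$-invariant bilinear form transformed from the Cartesian basis via $U$. The anticommutator identity then gives, for every $B\in\caM_{2^J}$,
\[
\tfrac12\sum_{\alpha\beta}\Tr\bigl(B\,\{V_\alpha,V_\beta\}\bigr)\,e_\alpha\otimes e_\beta \;=\; \frac{\Tr(B)}{2J+1}\,s,
\]
so the symmetric part of $\Gamma_2(B)$ is always a multiple of the singlet and therefore $\Gamma_2(B)\in 1\oplus\caA=\ker(P_\caS)$, establishing $\caG_2\subseteq\ker(h_{x,x+1})$. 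The reverse inclusion is a dimension count: the one-dimensional singlet contribution (reached at $B=\idtyty$) together with the $\binom{2J+1}{2}$-dimensional antisymmetric contribution — parametrised by $\Tr(B[V_\alpha,V_\beta])$ with $\alpha<\beta$ — give $\dim\caG_2=1+\binom{2J+1}{2}=\dim\ker(P_\caS)$, provided the commutators $[V_\alpha,V_\beta]$ are linearly independent in $\caM_{2^J}$. This is equivalent, via $U$, to the linear independence of $\{\tilde Z_\gamma\tilde Z_\delta:\gamma<\delta\}$ in the irreducible representation of the odd Clifford algebra $\caC_{2J+1}$, a standard fact.

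The passage to arbitrary $N$ will be via Lemma~\ref{lma:IntProp}. The reduced quadratic set $S^{(2)}$ is prescribed by the relations established in the anticommutator computation ($V_{2j-1}^2=V_{2j}^2=0$, and non-canonically ordered pairs being reducible to canonically ordered ones modulo $\idtyty$), and the required linear independence of $\{V^I:I\in S^{(r)}\}$ for $r\ge 3$ is transparent from the explicit tensor-product structure of the $V_\alpha$ on $(\bbC^2)^{\otimes J}$: ordered CAR monomials involving distinct modes yield Hilbert--Schmidt-orthogonal elements of $\caM_{2^J}$, exactly as in the PVBS case of Section~\ref{Sec:PVBS}. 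Lemma~\ref{lma:IntProp} then upgrades $\caG_2=\ker(h_{x,x+1})$ to $\caG_N=\ker(H_{[a,b]})$ for all $N\ge 2$, since $H_{[a,b]}$ is a non-negative sum of nearest-neighbour projections with precisely these two-site kernels. The main obstacle I anticipate is the correct identification of the $g$-weighted tensor $s$ with the $SO(2J+1)$-invariant singlet in the basis adapted to the CAR realisation; once that identification is made precise, the remaining steps are essentially bookkeeping.
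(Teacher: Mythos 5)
Your proposal is correct in substance, but it proves the lemma by a genuinely different and much more self-contained route than the paper. The paper's own proof is a two-line reduction: it takes from~\cite{PhysRevB.78.094404} (and the FCS machinery of Section~\ref{sub:MPS}) that the ground states are the MPS generated by the normalized Clifford generators $\gamma Z_\alpha$ with $\abs{\gamma}^2=(2J+1)^{-1}$, and then observes that the matrices $V_\alpha$ of the lemma arise from $\{\gamma Z_\alpha\}$ by the pullback $v_\alpha\mapsto\tilde v_\beta=\sum_\alpha\overline{U_{\beta\alpha}}v_\alpha$ of a unitary change of the local basis, unitarity forcing $\abs{\alpha_j}=\abs{\beta_j}=\sqrt{2}\abs{\gamma}$ and the choice $\gamma=-(2J+1)^{-1/2}$ fixing the constants; hence the $V_\alpha$ generate the same ground state vectors. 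You instead re-derive the parent-Hamiltonian property from scratch: the anticommutator identity $\{V_\alpha,V_\beta\}=\tfrac{2}{2J+1}g_{\alpha\beta}\idtyty$, the identification of $\sum_{\alpha\beta}g_{\alpha\beta}e_\alpha\otimes e_\beta$ with the $SO(2J+1)$ singlet in the CAR-adapted basis (your $UU^T=g$ step plays exactly the role of the paper's basis change), a dimension count giving $\caG_2=\ker(P_\caS)$, and Lemma~\ref{lma:IntProp} to pass to all $N$. What each buys: the paper's argument is economical and defers both the singlet identification and the intersection property to the literature and to the later treatment of the deformed path, whereas yours is independent of the cited MPS result and establishes along the way the quadratic-algebra structure the paper needs anyway. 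Three small repairs to your write-up: (i) the monomials $V^I$, $I\in S^{(r)}$, are \emph{not} all pairwise Hilbert--Schmidt orthogonal (a pair-occupied mode contributes a factor proportional to $\sigma^-\sigma^+$, which is not orthogonal to $\idtyty$ or $\sigma_z$); what you need, and what does hold, is linear independence, which follows by sorting monomials according to which modes carry $\sigma^\pm$ factors and noting that within each class the remaining factors $\{\idtyty\text{ or }\sigma_z,\ \sigma^-\sigma^+\}$ are linearly independent at every mode; (ii) in the framework of Lemma~\ref{lma:IntProp} the reduction relations must be homogeneous quadratic, so non-canonically ordered pairs should be rewritten modulo $V_0^2$ (using $V_0^2=\tfrac{1}{2J+1}\idtyty$), exactly as in~(\ref{addPair}), not ``modulo $\idtyty$''; (iii) the dimension count requires linear independence of $\{\idtyty\}\cup\{[V_\alpha,V_\beta]:\alpha<\beta\}$ jointly, not of the commutators alone, which again reduces to the independence of the degree-$0$ and degree-$2$ Clifford monomials in the $2^J$-dimensional irreducible representation. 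None of these affects the validity of your approach.
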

\begin{proof}
We prove that the matrices of the lemma generate the same vectors as $\{\gamma Z_\alpha\}_{\alpha=0}^{2J}$. Quite generally, the pullback of a change of local basis on the chain $\psi_\alpha\mapsto \tilde{\psi}_\beta = \sum_\alpha U_{\beta\alpha}\psi_\alpha$ to the generating matrices $v_\alpha$ of a MPS is given by $v_\alpha\mapsto \tilde{v}_{\beta} = \sum_\alpha \overline{U_{\beta\alpha}}v_\alpha$. Indeed,
\begin{equation*}
\sum_\beta \tilde{\psi}_\beta\otimes \tilde{v}_{\beta} \chi = \sum_{\alpha,\beta} U_{\beta\alpha}\psi_\alpha \otimes \tilde{v}_{\beta} \chi = \sum_{\alpha} \psi_\alpha \otimes v_\alpha \chi
\end{equation*}
implies $v_\alpha = \sum_{\beta} (U^T)_{\alpha\beta} \tilde{v}_{\beta}$. The definitions
\begin{equation*}
V_0 = \gamma Z_0\,,\qquad{V}_{2j-1} = \alpha_ja_j\qquad {V}_{2j} = \beta_ja_j\str
\end{equation*}
correspond to a change of basis given by a matrix $U$ with $U_{00} = 1$ and $2\times2$ blocks  
\begin{equation*}
\frac{1}{2\gamma} \begin{pmatrix}
\alpha_j & -\iu\alpha_j \\ \beta_j & \iu\beta_j
\end{pmatrix}
\end{equation*}
for $1\leq j\leq J$. Unitarity imposes that $\abs{\alpha_j}^2 = \abs{\beta_j}^2 = 2\abs{\gamma}^2$. The claim follows by choosing $\gamma = -(2J+1)^{-1/2}$ in accordance with~(\ref{CliffordNormalization}).
\end{proof}
The arbitrary choice of sign is designed to match exactly~(\ref{AKLTws}) in the case $J=1$. With this mapping from the Clifford algebra to the CAR algebra, it is now immediate to generalize the path of matrix product maps and therefore of Hamiltonians from $J=1$ to an arbitrary $J$, thereby introducing a smooth path of Hamiltonians between any $SO(2J+1)$ model and the PVBS model with $J$ particles of both types. If the path is gapped indeed, this places these higher spin models into the PVBS classification.

We start by introducing a deformation of the CAR algebra to relate it to the PVBS algebra with $n_L = n_R = J$. For any complex number $\lambda$,
\begin{gather}
\label{twistedCAR_1}
\sigma^+\sigma^- + \lambda \sigma^-\sigma^+ = \lambda^{Q}\,,\\
\lambda \cdot \sigma^- \lambda^Q =  \lambda^Q \sigma^- \,,\qquad \sigma^+ \lambda^Q = \lambda \cdot \lambda^Q \sigma^+\,.
\label{twistedCAR_2}
\end{gather}
Given a vector $\Lambda=(\lambda_1,\ldots,\lambda_J)$, we define the twisted creation operators $(a\str_j(\Lambda))_{j=1}^J$ by
\begin{equation*}
a\str_j(\Lambda) = (-\lambda_1)^Q\otimes \cdots \otimes (-\lambda_{j-1})^Q \otimes \sigma^+ \otimes \lambda_{j+1}^Q \otimes \cdots \otimes \lambda_{J}^Q\,,
\end{equation*}
and
\begin{equation*}
a_0(\Lambda) = (-\lambda_1)^Q \otimes \cdots \otimes (-\lambda_J)^Q \,.
\end{equation*}
Clearly, $(a\str_j(\Lambda))^2 = 0 = (a_j(\Lambda))^2$. From~(\ref{twistedCAR_1}) and~(\ref{twistedCAR_2}) we obtain the twisted commutation relations,
\begin{align}
a_j\str(\Lambda) a_j(\Lambda) + \lambda_j^2 a_j(\Lambda) a_j\str(\Lambda) &= a_0(\Lambda)^2 \label{TwComm1}\\
a\str_j(\Lambda) a_k(\Lambda) &= -\lambda_j \lambda_k a_k(\Lambda) a\str_j (\Lambda) \qquad (j\neq k) \label{TwComm2}\\
a_j\str(\Lambda) a_0(\Lambda) &= -\lambda_j a_0(\Lambda) a_j\str(\Lambda) \label{TwComm3}\\
a\str_j(\Lambda) a\str_k(\Lambda) &= -\lambda_j \lambda_k^{-1} a\str_k(\Lambda) a\str_j(\Lambda) \label{TwComm4}
\end{align}
A higher dimensional analog of the algebraic path~(\ref{sCommutation1}) is easily obtained with the following definitions for $s\in[0,s_0]$:
\begin{equation*}
{V}_{2j-1}(s) = \alpha_j(s) a_j(s) \qquad {V}_{2j}(s) = \beta_j(s)a_j(s)\str\qquad {V}_0(s) = \gamma(s) a_0(s)
\end{equation*}
where $a_j^\sharp(s) = a_j^\sharp(\Lambda(s))$ for a smooth path of vectors $\Lambda(s)$ such that $0<\lambda_j(s)<1$ for $s<s_0$. The $SO(2J+1)$ model is recovered at $s=s_0$, where $\lambda_j=1$ and the parameters $\alpha_j, \beta_j$ and $\gamma$ are given above. The commutation relations along the path can be read directly from those of the twisted CAR algebra above. Only in~(\ref{TwComm1}) do the parameters $\alpha_j,\beta_j$ and $\gamma$ appear, namely
\begin{equation}\label{addPair}
{V}_{2j}(s){V}_{2j-1}(s) + \lambda_j(s)^2 {V}_{2j-1}(s){V}_{2j}(s) = \frac{\alpha_j(s)\beta_j(s)}{\gamma^2(s)}{V}_0(s)^2\,.
\end{equation}
Using the commutation relations, the isometry condition for any $s$ reads
\begin{align*}
\idtyty = \sum_{j=1}^J\left(\abs{\beta_j}^2 - \lambda_j^2 \abs{\alpha_j}^2 \right)a_j a_j\str + \Big(\sum_{j=1}^J \abs{\alpha_j}^2 + \abs{\gamma}^2\Big) a_0^2\,,
\end{align*}
which implies 
\begin{equation} \label{sphere}
\sum_{j=1}^J \abs{\alpha_j}^2 + \abs{\gamma}^2 = 1
\end{equation}
and $\abs{\beta_j}^2 - \lambda_j^2 \abs{\alpha_j}^2 + \lambda_j^2 = 1$, or equivalently
\begin{equation*}
\abs{\beta_j}^2 = 1- \lambda_j^2 (1-\abs{\alpha_j}^2)\,.
\end{equation*}
From~(\ref{sphere}) it is natural to interpret these coefficients as angles on the sphere $\bbS^J$, with
\begin{equation*}
\gamma(s) = -\cos(s)\quad\text{and}\quad\alpha_j(s) = \frac{1}{\sqrt{J}} \sin(s)
\end{equation*}
for all $j$, and $\cos(s_0) = (2J+1)^{-1/2}$. The PVBS algebra is recovered as $s\to0$, since $\alpha_j(s)\to 0 $, while $\beta_j(s)\to 1-\lambda_j(0)^2>0$. Since the left-right symmetry has not been broken along the path, the PVBS particles come in pairs with parameters $\lambda_j(0)$ for the left edge states and $\lambda_j(0)^{-1}$ for its right edge counterpart, see~(\ref{TwComm3}) and its adjoint relation. The singularity of the representation observed in the $J=1$ case arises here too, as ${V}_{2j}(0) = 0$.

The intersection property for the spaces generated by the matrices $\{V_\alpha(s)\}_{\alpha=0}^{2J}$ follows again from Lemma~\ref{lma:IntProp}, with
\begin{equation*}
S^{(2)} = \{(00)\}\cup\{(\alpha\beta):0\leq \alpha < \beta\leq 2J\}\,.
\end{equation*}
Hence, these MPS form the ground state spaces of Hamiltonians with nearest-neighbor interaction defined by the projection onto their orthogonal complement.

In the next lemma we summarize the essential spectral properties of the transfer operator $\widehat{\bbE}_J(s)$, acting on $\caM_{2^{J}}$. We note $A\geq 0$, resp. $A>0$, whenever $A_{i,j}\geq 0$, resp. $A_{i,j}> 0$, for all $i,j$, and call these matrices nonnegative and positive, respectively.
\begin{lemma}
We consider equivalently $\widehat{\bbE}_J(s)$ or its transpose for $s>0$. $1$ is a simple eigenvalue with a positive eigenvector. Moreover, all other eigenvalues have magnitude smaller than $1$.
\end{lemma}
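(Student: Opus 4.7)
The strategy is to reduce the statement to a classical Perron--Frobenius argument on a commutative sub-algebra, and then handle the off-diagonal directions separately. Decompose $\caM_{2^J} = \caD \oplus \caN$, where $\caD$ is the commutative sub-algebra of matrices diagonal in the product basis $\{|\epsilon\rangle\}_{\epsilon\in\{0,1\}^J}$, and $\caN$ its complement. Since $V_0(s)$ is diagonal and each $V_{2j-1}(s), V_{2j}(s)$ carries a single $\sigma^{\mp}$ at tensor position $j$ with diagonal factors elsewhere, the identity $\sigma^{\pm}\,\mathrm{diag}(a,b)\,\sigma^{\mp}\in\caD$ together with the fact that each $V_\alpha(s)$ sends standard basis vectors to scalar multiples of standard basis vectors implies that $\widehat{\bbE}_J(s)$ preserves both $\caD$ and $\caN$.

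On $\caD$, in the basis of minimal projections $\{P_\epsilon\}$, the restriction $\widehat{\bbE}_J(s)|_\caD$ is represented by a non-negative matrix $M(s)$ with entries $M_{\epsilon\epsilon'}(s) = \sum_\alpha |\langle\epsilon'|V_\alpha(s)|\epsilon\rangle|^2$; the unitality $\widehat{\bbE}_J(s;\idtyty) = \idtyty$ makes $M(s)$ row-stochastic. The non-vanishing entries are the diagonal ones (coming from $V_0$) and the entries $M_{\epsilon,\epsilon^{(j)}}(s)$ linking $\epsilon$ to its single-coordinate flip $\epsilon^{(j)}$ (from $V_{2j-1}$ or $V_{2j}$); all of these are strictly positive for $s\in(0,s_0]$ since $\gamma(s), \alpha_j(s), \beta_j(s), \lambda_j(s)$ are non-zero there. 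Single-coordinate flips connect any two configurations, so the associated Markov chain on $\{0,1\}^J$ is irreducible, and the self-loops from $V_0$ make it aperiodic. Perron--Frobenius then yields that $1$ is a simple eigenvalue of $M(s)$ with all other eigenvalues strictly inside the unit disk, and with a unique strictly positive invariant distribution $\pi(s)$; the corresponding density matrix $\rho(s) = \sum_\epsilon \pi_\epsilon(s) P_\epsilon \in \caD$ is diagonal with strictly positive entries, hence faithful on $\caM_{2^J}$.

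Finally, to show that $\widehat{\bbE}_J(s)|_\caN$ has spectrum strictly inside the unit disk, I would invoke the standard Frigerio-type argument for unital CP maps with a faithful invariant state: any peripheral eigenvector $A$ (satisfying $\widehat{\bbE}_J(s)(A) = \lambda A$ with $|\lambda|=1$) lies in the multiplicative domain by the Kadison--Schwarz inequality, so $A^*A$ is a positive fixed point and by the previous step must be a scalar multiple of $\idtyty$; hence $A$ is proportional to a unitary $U$ commuting in the appropriate sense with every $V_\alpha(s)$, and since the $\sigma^{\pm}$ at each tensor site (inherited from $V_{2j-1}, V_{2j}$) together with the diagonal $V_0(s)$ generate the full matrix algebra $\caM_{2^J}$ for $s\in(0,s_0]$, $U$ must itself be scalar, forcing $A\in\mathbb{C}\idtyty\cap\caN = \{0\}$. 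The main obstacle is this last multiplicative-domain step, which is the delicate ingredient of the argument; a concrete alternative, closer in spirit to the PVBS spectrum computation in Section~\ref{Sec:PVBS}, is a block-triangular diagonalization of $\widehat{\bbE}_J(s)$ by induction on $J$ that splits off the $J$-th tensor factor (handling the coupling~(\ref{addPair}) as in~(\ref{FRec})) and establishes a recursion $\spec(\widehat{\bbE}_J(s)) \subseteq \spec(\widehat{\bbE}_{J-1}(s)) \cdot \{1, \tau_2(s), \tau_3(s), \tau_4(s)\}$ with $|\tau_i(s)|<1$, reducing the problem to the base case $J=1$ already treated in Lemma~\ref{lem:MPS}.
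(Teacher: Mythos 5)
Your treatment of the diagonal block is correct, and it is arguably a cleaner packaging of what the paper does there: the channel does preserve the splitting $\caM_{2^J}=\caD\oplus\caN$, and identifying the restriction to $\caD$ with a single-flip-plus-holding random walk on $\{0,1\}^J$, irreducible and aperiodic for $s\in(0,s_0]$, replaces the paper's inductive proof that the nonnegative diagonal block is primitive ($(M^S)^m>0$), and it also hands you the faithful invariant state directly. The genuine gap is in the off-diagonal step. From Kadison--Schwarz and the faithful invariant state you correctly conclude that a peripheral eigenvector $A\in\caN$ produces a fixed point $A\str A$, but you then assert that ``by the previous step'' it must be a scalar multiple of $\idtyty$. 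For $A\in\caN$ the operator $A\str A$ is in general \emph{not} diagonal (take $A=E_{12}+E_{13}$: then $A\str A$ has off-diagonal entries), so the hypercube analysis says nothing about it; ruling out non-scalar fixed points of $\widehat{\bbE}_J(s)$ on all of $\caM_{2^J}$ is part of what the lemma asserts, and as written your argument is circular. The standard repair is the theorem that the fixed-point set of a unital CP map with a faithful invariant state equals the commutant of its Kraus operators, which requires actually verifying that $\{V_\alpha(s)\}_{\alpha=0}^{2J}$ generates $\caM_{2^J}$ for $s\in(0,s_0]$ --- a claim you make in passing but do not check, and which is exactly the kind of statement that degenerates at the endpoint $s=0$.

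Even granting a trivial fixed-point algebra, you are not done: irreducibility alone does not exclude peripheral eigenvalues $\lambda\neq 1$ (the two-state cyclic channel with Kraus operators $E_{12},E_{21}$ has trivial commutant yet eigenvalue $-1$). What the multiplicative-domain machinery gives for $\vert\lambda\vert=1$ is a unitary eigenvector $U$ obeying a \emph{twisted} intertwining relation of the form $V_\alpha U=\lambda U V_\alpha$, not plain commutation, and this case must be excluded by hand. It can be: the relation implies $U$ commutes with each diagonal operator $V_\alpha\str V_\alpha$, whose spectral projections separate the basis configurations, so $U$ is diagonal, and then conjugating the invertible diagonal $V_0(s)$ forces $\lambda=1$ --- but you explicitly leave this ``delicate ingredient'' open, so the proof is incomplete as it stands. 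The fallback you sketch at the end (recursion splitting off one tensor factor, as in~(\ref{FRec})) is essentially the paper's own route: in the decomposition~(\ref{J_Matrix}) the off-diagonal components are carried by the depleted operator $\widehat{\bbE}_J^{S^j}$ with prefactor $-\lambda_j$, whose spectral radius is strictly below that of the irreducible nonnegative diagonal block because the depleted diagonal block is a proper principal submatrix of it; classical Perron--Frobenius on the diagonal block then finishes the argument, with no quantum Perron--Frobenius input needed at all.
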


\begin{proof}
We drop the $s$-dependence. Once again, it is convenient to consider the $\widehat{\bbE}_J^S$ for an arbitrary subset $S\subseteq\{1,\ldots,J\}$, which is constructed from matrices $\{V_{2\alpha-1}^S,V_{2\alpha}^S\}_{\alpha\in S}$ and $V_{0}^S$, where $V_\alpha^S$ is obtained from $V_\alpha$ by keeping only the factors corresponding to indices in $S$. Let $m = \abs{S}$.

We decompose a general $2^m\times 2^m$-dimensional matrix $A = A_+\otimes \sigma^+ + A_-\otimes \sigma^- + A_P\otimes P + A_Q\otimes Q$ to obtain
\begin{align}
\widehat{\bbE}_J^S(A) &= -\lambda_j \widehat{\bbE}_J^{S^{j}}(A_+)\otimes \sigma^+ -\lambda_j \widehat{\bbE}_J^{S^{j}} (A_-)\otimes \sigma^- \nonumber \\ 
& \quad \left[\widehat{\bbE}_J^{S^{j}} (A_P) + \abs{\alpha_j}^2 (V_0^{S^{j}})\str A_Q V_0^{S^{j}}\right]\otimes P + \left[\abs{\beta_j}^2 (V_0^{S^{j}})\str A_P V_0^{S^{j}} + \abs{\lambda_j}^2\widehat{\bbE}_J^{S^{j}}(A_Q)\right]\otimes Q\,.
 \label{J_Matrix}
\end{align}
First, the eigenvalue of $\widehat{\bbE}_J^S$ corresponding to an `off-diagonal' eigenvector must be smaller than any eigenvalue of $\widehat{\bbE}_J^{S^{j}}$ by a factor $-\lambda_j$. But $\widehat{\bbE}_J^{S^{j}}$ is also a principal submatrix of the diagonal part, which is nonnegative. Hence, its spectral radius must itself be smaller and we can restrict our attention to $\widehat{\bbE}_J^S\upharpoonright_\mathrm{diagonal}$. Its matrix representation has the following block form,
\begin{equation*}
M^S = \begin{pmatrix}
M^{S^j} & \abs{\alpha_j}^2 D^{S^j} \\ \abs{\beta_j}^2 D^{S^j} & \abs{\lambda_j}^2 M^{S^j}
\end{pmatrix}
\end{equation*}
where $D^{S^j}$ is the diagonal matrix containing all possible products $\prod_{i_1,\ldots i_k\neq j}\lambda_{i_1}\cdots\lambda_{i_k}$ for $0\leq k\leq m-1$. If $a,b$ index the blocks, it is immediate from
\begin{equation*}
\left[(M^S)^m\right]_{a,b}= \sum_{c_1,\ldots,c_{m-1}=1}^2 M^S_{a,c_1}\cdots M^S_{c_{m-1},b}
\end{equation*}
that for any $a,b$, the submatrix $M^{S^j}$ appears at least $m-1$ times in at least one term of the sum. Since the other factors are diagonal with positive matrix elements, this implies that $(M^S)^m>0$ if $(M^{S^j})^{m-1}>0$. Since the matrix depleted to just one particle reads
\begin{equation*}
M^{\{1\}} = \begin{pmatrix}
\abs{\gamma}^2 & \abs{\alpha_1}^2 \\ \abs{\beta_1}^2 & \abs{\lambda_1}^2 \abs{\gamma}^2 
\end{pmatrix} >0 \,,
\end{equation*}
we conclude by recursion that $(M^S)^m>0$. But this implies that the nonnegative matrix $M^S$ is irreducible, so that by Perron-Frobenius' theorem, its spectral radius is a simple eigenvalue with a positive eigenvector, and all other eigenvalues have a smaller magnitude.

By construction, the identity matrix is an eigenvector of the full $\widehat{\bbE}_J$ for the eigenvalue $1$, which implies that $1$ is the spectral radius of $\widehat{\bbE}_J$. This concludes the proof of the lemma for $\widehat{\bbE}_J$. Its transpose shares its spectrum, and by nonnegativity and irreducibility, it must also have a positive eigenvector.
\end{proof}

Clearly, the proof collapses at $s=0$ as $\alpha_j(0)=0$. In fact, the matrix $M^S$ becomes lower triangular and therefore reducible, and the eigenvectors fail to be positive. We recover the singularity of the path of the previous section: The positive diagonal left eigenvector $\rho_J$ of $\widehat{\bbE}_J$, normalized to represent a density matrix, corresponds to a faithful state for $s>0$, but not anymore at $s=0$. Once again, this is necessary as the minimal representation of the PVBS model is one dimensional.

The commutation relations allow for a simple description of the ground state space along the path of Hamiltonians. For each pair of particle indexed by $j$, the matrices $V_{2j}$ and $V_{2j-1}$ must alternate, separated by an arbitrary number of any other $V_\alpha$. If the chain is long enough, this produces $4^J$ different possibilities, depending on whether the first and the last matrix of each of the $J$ types has en even or an odd index. Since the initial space of the matrix product map is $\caM_{2^J}$, we have a complete description of the ground state spaces along the path.

Part~(i) of Lemma~\ref{lma:Commutation} holds for arbitrary $J$, as it requires only the spectral picture of $\widehat{\bbE}_J$, the fact that $\idtyty$ is the eigenvector for the largest eigenvalue $1$, and the faithfulness of $\rho_J$. In other words, the path is gapped for all $s>0$. The proof of part~(ii) would rely on the complete diagonalization of the transfer operator, which will not be carried out here. However, the key element is the `cost' of adding a pair above the PVBS ground state. From~(\ref{addPair}), it is precisely $\alpha_j\beta_j / \gamma_j^2$ which is proportional to $\tan(s) / (\sqrt{J}\cos(s))$ for all $j$, indicating that the perturbative argument can be carried out in the general case. This would imply that the gap does not close as $s\to 0$, and the generalization of Theorem~\ref{thm:AKLT_phase}: For any $J\in\bbN$, the $SO(2J+1)$-invariant model belongs to the PVBS phase with $n_L = n_R = J$.

The paths constructed here and the necessary singularity encountered at the PVBS point raise the more general question of the continuity of the spectral gap of an interaction exposing a continuous family of matrix product states, in the spirit of~\cite{BravyiHastings} or~\cite{Stability}. The fundamental difficulty lies in the non invertibility of the eigenvectors of $\widehat{\bbE}$ at the PVBS point, and therefore in the collapse of standard estimates involving $\rho^{-1}$ and $e^{-1}$. Such `stability' results are notoriously rare (see~\cite{Ya06} for a concrete example) but we do believe that continuity holds more generally than in the simple case presented here. This is subject to current investigation. 


\section*{acknowledgments} The authors gratefully acknowledge the kind hospitality of the Erwin Schr\"odinger International Institute for Mathematical Physics (ESI) in Vienna, Austria, where part of the work reported here was carried out. S.B. wishes to thank Y. Ogata for her hospitality at the Graduate School of Mathematical Sciences, University of Tokyo, Japan, and for stimulating discussions. Both authors thank Amanda Young for useful remarks. This work was supported in part by the National Science Foundation: S.B. under Grant \#DMS-0757581 and B.N. under grant \#DMS-1009502

\bibliography{RefGap}
\bibliographystyle{unsrt}

\end{document}